
%

\documentclass[a4paper,11pt]{article}
\pdfoutput=1 

\usepackage{jheppub} 


\bibliographystyle{JHEP}

\usepackage{graphicx}
\usepackage{amsmath}
\usepackage{amssymb}
\usepackage{amsthm}
\usepackage{physics}
\usepackage{bbm}
\usepackage{url}

\newtheorem{theorem}{Theorem}
\newtheorem{lemma}{Lemma}

\newtheorem{corollary}{Corollary}

\newtheorem{definition}{Definition}

\title{Recovering the QNEC from the ANEC}

\author{Fikret Ceyhan and}
\author{Thomas Faulkner}

\affiliation[a]{Department of Physics, University of Illinois, Urbana-Champaign \\ 1110 W. Green St., Urbana IL 61801, USA.}

\abstract{
We study the relative entropy in QFT comparing the vacuum state to a special family of purifications determined by an input state and constructed using relative modular flow. We use this to prove a conjecture by Wall that relates the shape derivative of relative entropy to a variational expression over the averaged null energy (ANE) of possible purifications. This variational expression can be used to easily prove the quantum null energy condition (QNEC). 
We formulate 
Wall's conjecture as a theorem pertaining to operator algebras satisfying the properties of a half-sided modular inclusion, with the additional assumption that the input state has  finite averaged null energy. 
We also give a new derivation of the strong superadditivity property of relative entropy in this context.  We speculate about possible connections to the recent methods used to strengthen monotonicity of relative entropy with recovery maps. }

\begin{document} 
\maketitle
\flushbottom



\section{Introduction}

The main goal of this paper is to present a mathematically rigorous proof of the quantum null energy condition (QNEC) in the context of algebraic QFT. The QNEC is a local bound on the expectation value of the null energy density \cite{Bousso:2015wca}. In certain situations it can be related to the positivity of the second derivative of relative entropy thought of as a function of the shape of an entangling surface that cuts the generators of a killing horizon. This convexity constraint is in turn related to the so called quantum focusing conjecture (QFC) \cite{bousso2016quantum} who's subject is the generalized area $A_{\rm gen}/(4G_N)$ \cite{Bekenstein:1973ur,Casini:2008cr,wall2012proof}. For the horizon cuts considered here, and in a semi-classical limit, the generalized area reduces to  $- S_{\rm rel} +$ constant.
Since entanglement entropy is not well defined in the continuum limit where we work \cite{Witten:2018lha}, the bound in terms of relative entropy will be our goal. We specialize here to relativistic QFT in $d$-dimensional Minkowski space with $d\geq 2$ and with cuts along a Rindler horizon.

Previous proofs \cite{Bousso:2015wca,Balakrishnan:2017bjg} used ideas that are hard to make mathematically rigorous in general, such as path integrals and the replica trick. 
These path integral/replica methods \cite{Holzhey:1994we} are one of the more powerful tools that we have for uncovering properties of entanglement in QFT \cite{Calabrese:2004eu} and in AdS/CFT \cite{Lewkowycz:2013nqa}.
However it is worth spelling out more rigorous approaches, if they are available, since they can lead to their own insights. See \cite{Hollands:2017dov,Longo:2018obd,Longo:2017mbg,Xu:2018fsv,Xu:2018uxc,Kang:2018xqy} for some recent progress along these lines. 
In this paper we will take inspiration from the previous QNEC proof for interacting theories \cite{Balakrishnan:2017bjg} as well as some ideas laid out by Wall \cite{Wall:2017blw}. In this way we unify these two seemingly disparate approaches and ``explain'' the somewhat mysterious correlators in \cite{Balakrishnan:2017bjg} that were used to extract the QNEC. 

The main lesson can be summed up as follows. \emph{The QNEC reduces to the ANEC in a new state constructed from the original state with relative modular flow.} The ANEC has been proven now in various ways \cite{Klinkhammer:1991ki,Kelly:2014mra,Faulkner:2016mzt,Hartman:2016lgu,Kravchuk:2018htv}. 
We start, in Section~\ref{sec:ant}, by describing the relative entropies of these new states which can be almost completely understood.  The missing ingredient being the averaged null energy (ANE) which the bulk of this paper is dedicated to finding;  we do so with two lemmas: Lemma~\ref{lemma2} is proven in Section~\ref{sec:nullant} and Lemma~\ref{lemma3} is proven in \ref{sec:nullcone}. 
The relative entropies satisfy an important constraint, Lemma~\ref{lemma1}, that is well known but non-trivial to derive in the algebraic context - we do this in Section~\ref{sec:rel}.
Our main mathematical tool will be the algebraic structure of half-sided modular inclusions \cite{borchers1992cpt,wiesbrock1993half,borchers1996half,araki2005extension}, the relative modular operators which we summarize in Appendix~\ref{app:mod}, and some elementary theorems on holomorphic functions (including holomorphic functions of two variables.) For example these later theorems allow us to  give a rigorous example of the saturation of a (modular) chaos bound \cite{Maldacena:2015waa}, a delicate phenomenon that occurs for an analagous CFT four point function \cite{Hartman:2015lfa} expanded using the light cone OPE and continued to a Lorentzian regime.

One new result that we would like to advertise is an expression for the shape variation of the relative entropy, comparing some vector state $\psi$ with the vacuum, and for null cuts with some shape $x^+(y)$:
\begin{align}
-\frac{1}{2\pi} \frac{ \delta S_{\rm rel}(\psi|\Omega;x^+(\cdot ))}{\delta x^+(y)} &= \inf_{\phi}   \left< \phi \right| \mathcal{E}_+(y)  \left| \phi \right> =  \inf_{s \in \mathbb{R}} \left<\psi_s \right| \mathcal{E}_+(y)  \left| \psi_s \right> \, \\ \qquad \mathcal{E}_+(y) &\equiv 2\pi \int_{-\infty}^{\infty} dx^+T_{++}(x^+,y)
\end{align}
where in the first expression $\phi = u' \psi$ for some unitary acting in the complement to the entangling region. The second expression gives the explicit minimal value where $\psi_s = u_s' \psi$ is simply constructed with relative modular flow (more precisely with the Connes cocycle.) 
These formulas assume the averaged null energy (ANE) for the input $\psi$ is finite\footnote{Actually we only need that there is at least one state $u' \psi$ with finite ANE.}. The minimization is  over the set of states that also have finite complementary relative entropy and we will  show that there is always at least one such state. 

Our work was initiated as an attempt to apply some recent results in quantum information \cite{fawzi2015quantum,wilde2015recoverability,junge2018universal,swingle2018recovery}  that give useful strengthening of the monotonicity of relative entropy inequalities.  Since the ANEC is tightly linked to monotonicity and the QNEC seems like a strengthening of the ANEC it is natural to guess that there is an interesting connection here. More specifically the strengthened inequalities improve monotonicity using certain recovered states that attempt to optimally invert a given quantum channel, which in the case at hand is simply related to an inclusion of algebras.  It is interesting to speculate that there might be a relation between the universal recovered state in \cite{junge2018universal} and the various purifications that we discuss in this paper. In particular they are both constructed with modular flow.  This might of course just be a coincidence and we would like to know if there is more to it than this.  In the discussion section we give some ideas about how this connection might work.

\section{The Ant's Best Guess}

\label{sec:ant}

We start in $d$-dimensional Minkowski space with null coordinates associated to a Rinlder horizon $v=0$:
\begin{equation}
ds^2 = - du dv + dy_{d-2}^2
\end{equation}
The Rindler wedge $R$ is the right region $\{ u > 0, v < 0\}$ with the associated algebra of operators $\mathcal{A}_R$.  
We now define a generalization of the Rinder wedge and the associated algebra that we will collectively refer to as \emph{null cuts}.
Consider a null cut $N_C= \{v=0,u>C(y))\}$ where $C(y)$ is a continuous function of the coordinates $y$ along the entangling surface.
Define $N_C'$ as the maximal open subset spacelike separated from $N_C$  and so forth for $N_C'' = (N_C')'$. Then $N_C''$ is an open space-time region for which we can associate a von Neumann algebra. As a short hand we will label this as $\mathcal{A}_C
\equiv \mathcal{A}_{N_C''} $. This algebra can heuristically be thought of as the double commutant of the local operators on $N_C$ \cite{wittenpitp}. In this notation $\mathcal{A}_R = \mathcal{A}_0$.

The vacuum state $\Omega$ is cyclic and separating for all the algebra's that we consider here, a property which follows from the Reeh-Schlieder theorem \cite{reeh1961bemerkungen}. Applying Tomita-Takesaki theory we can define the associated modular operators in the usual way. 
For the Rindler cut, the Bisognano-Wichmann theorem \cite{Bisognano:1976za} shows that the modular operator $\Delta_{\Omega;R}^{is}$ is simply the boost that fixes the entangling surface $u=v=0$. For other null cuts the vacuum modular Hamiltonian's have been the subject of recent investigation \cite{wall2012proof,Faulkner:2016mzt,koeller2018local,Casini:2017roe}. The modular Hamiltonian is defined as $K_A \equiv - \ln \Delta_{\Omega;A}$ and the results of \cite{Casini:2017roe} showed that:
\begin{equation}
K_A = 2\pi \int_{v=0} (u -A(y)) T_{uu}
\end{equation}
If we consider two null cuts then the modular Hamiltonian's satisfy the following algebra:
\begin{equation}
\left[ K_{A}, K_{B} \right] =  2\pi i (K_A - K_B) \equiv (2\pi)^2 i P
\end{equation}
where $P$ is a modular translation operator who's action on the null lines of the Rindler horizon is a translation by $B(y)-A(y)$. Furthermore these operators can be related to the averaged null energy (ANE):
\begin{equation}
\label{nullstress}
P =  \int_{v=0} (B(y)-A(y)) T_{uu} 
\end{equation}
If $\mathcal{A}_B \subset \mathcal{A}_A$ then one has a situation that is referred to as a half sided modular inclusion (HSMI) or translation \cite{borchers1992cpt,wiesbrock1993half,borchers1996half,araki2005extension,borchers1995use}.  In this case $P\geq 0$, since such an inclusion implies that the difference in modular Hamiltonians is a positive semi-definite operator \cite{buchholz1990nuclear,Witten:2018lha}.  Considering the relationship to the null energy, \eqref{nullstress}, this then proves the ANEC \cite{Faulkner:2016mzt}. 

We will mostly work in this context where we have algebras satisfying the properties of a HSMI, and where we have in mind applications to QFT for null cuts. In particular a HSMI is a well studied  algebraic structure in which we need not make any mention of the stress tensor $T_{uu}$. Note that the stress tensor is a local operator that is often not included in the basic axioms of algebraic QFT. 
For a recent application of HSMI to black hole physics see \cite{Jefferson:2018ksk}.

Some of the properties of HSMI are given in the following definition:

\begin{definition}[Half-sided modular inclusion]
\label{def1}

An inclusion of von Neumann algebra's $\mathcal{A}_B \subset \mathcal{A}_A$ is called a half-sided modular inclusion if there is a common cyclic and separating vector $\Omega \in \mathcal{H}$ such that:
\begin{equation}
\Delta_{\Omega;A}^{-is} \mathcal{A}_B \Delta_{\Omega;A}^{is} \subset \mathcal{A}_B \qquad s \geq 0
\end{equation}
From this minimal starting assumption one can derive the following. Let $P$ be the closure of:
\begin{equation}
\frac{1}{2\pi} \left( \ln \Delta_{\Omega;B} - \ln \Delta_{\Omega;A} \right)
\end{equation}
Then $P$ is a self-adjoint positive semi-definite operator. One can also derive the following results \cite{borchers1992cpt,wiesbrock1993half,borchers1996half,araki2005extension,borchers1995use}:
\begin{enumerate}
\item[(a)] The modular translations $U_b \equiv e^{ - i b P}$ act as:
\begin{equation}
\label{hsmi:translation}
U_{-b} \mathcal{A}_{A} U_{b}  \equiv \mathcal{A}_{A_b} \left( \subset \mathcal{A}_{A_{b_2}}\,, \quad b \geq b_2 \right)
\end{equation}
where $\mathcal{A}_{A_{1/2\pi}} =\mathcal{A}_{B}$ and $U_b$ leaves invariant the vacuum $\Omega$. This new one parameter ``translated'' family of algebra's has $\Omega$ as a common cyclic and separating vector. The translations apply for all real $b$ and for the commutant's which satisfy:
\begin{equation}
 \mathcal{A}_{A_{b}}' \subset \mathcal{A}_{A_{b_2}}' \qquad b \leq b_2
\end{equation}
\item[(b)] We can also``boost'' these translated algebras:
\begin{equation}
\label{hsmi:boost}
\Delta_{\Omega;A}^{-is} \mathcal{A}_{A_b} \Delta_{\Omega;A}^{is} = \mathcal{A}_{A_{b e^{2\pi s} }}
\end{equation}
and similarly for the commutant.
\item[(c)] The modular operators satisfy:
\begin{align}
\label{hsmi:algebra}
\Delta_{\Omega;A}^{is} U_b \Delta_{\Omega;A}^{-is} &=   U_{e^{-2\pi s}b}  \,,
 \qquad J_{\Omega;A} U_b J_{\Omega;A}  =U_{-b} \\
  \Delta^{is} _{\Omega;A_{b_1}} \Delta_{\Omega;A_{b_2}}^{-is} &= U_{ (b_1-b_2) (e^{-2\pi s} -1)} \nonumber
\end{align}
furthermore $U_b$ varies continuously in the strong operator topology (sot) and can be analytically continued into the complex $b$ plane where it is bounded by $1$ and (sot) continuous for ${\rm Im b} \leq 0$.
\end{enumerate}

\end{definition}
This definition applies abstractly to von Neumann algebra's and as already mentioned one can work entirely from this point of view. At the same time, however, our notation is uniform with the application to null cuts of a Rindler horizon. For example in this later notation $A_b = A +2\pi b(B-A)$ where $B(y) \geq A(y)$.  We will also sometimes use the following notation:
\begin{equation}
C \equiv A_{c} \qquad C_a \equiv A_{c+a}
\end{equation}
where the $C$ cut with $b=c$ plays a distinguished role. 

We now consider an excited state $\psi$ which we take to be a vector in the QFT Hilbert space. For now we will assume that this state has the following finite quantities:
\begin{equation}
\label{assumpt}
P_\psi = \left< \psi \right| P \left| \psi \right> < \infty\,,  \qquad S_{\rm rel}( \psi | \Omega; A_{c}) < \infty \,, 
\qquad  S_{\rm rel}( \psi | \Omega; A'_{c}) < \infty
\end{equation}
for some $c$, and where $S_{\rm rel}$ is the relative entropy discussed by Araki \cite{araki1976relative} and which is defined for general states \cite{araki1977relative}. In particular we \emph{do not} assume that $\psi$ is cyclic and separating. This definition uses the relative modular operator which is defined in general via the Tomita operator $S$:
\begin{equation} 
\label{tom}
S_{\psi|\Omega;A_b}\left( \alpha \left| \psi \right> + \left| \chi' \right> \right) = \pi_{A_b}(\psi) \alpha^\dagger \left| \Omega \right>  \qquad \forall \,\, \alpha \in \mathcal{A}_{A_b}\,,\,\, \chi' \in (1-\pi_{A_b'}(\psi)) \mathcal{H} 
\end{equation}
In the above definition $\pi_{A_b}(\psi)$ is the support projector, which is the smallest projector in $\mathcal{A}_{A_b}$ satisfying $\pi_{A_b}(\psi) \left| \psi \right> = \left| \psi \right>$. For a cyclic and separating vector both $\pi_{A_b,A_b'}(\psi)$ are the unit operator. See Appendix~\ref{app:mod} for further discussion of these. Note that \eqref{tom} only really defines $S_{\psi|\Omega;A_b}$ for a dense set of states in $\mathcal{H}$, however one can show that this operator is closeable \cite{araki1982positive} and we will use the same symbol for its closure. The modular operator is defined as:\footnote{Our conventions are not standard. The state labels on the relative modular operators are switched. We follow the conventions in \cite{Witten:2018lha} where the relative entropy and relative modular operators are labelled in the same way. Our labelling on the Connes cocycle are standard. }
\begin{equation}
\Delta_{\psi|\Omega;A_b} = S_{\psi|\Omega;A_b}^\dagger S_{\psi|\Omega;A_b}
\end{equation}
with support $\pi'_{A_b}(\psi)$. This then leads to Araki's definition of relative entropy (where $\Omega$ is cyclic and separating):
\begin{equation}
\label{araki-srel}
S_{\rm rel}(\psi|\Omega;A_b) =  - \left< \psi \right| \log \Delta_{\psi|\Omega} \left| \psi \right>
\equiv - \int_0^{\infty} \log \lambda d\left< \psi \right| E_{\lambda}(\Delta_{\psi|\Omega}) \left| \psi \right>
\end{equation}
where $E_{\lambda}(\Delta)$ are the spectral projections of $\Delta$.  The relative entropy could be infinite if this later integral diverges. Note that since $ \left| \psi \right>$ is in the domain of $\Delta_{\psi|\Omega}^{1/2}$ the following integral always converges:
\begin{equation}
\int_0^\infty \lambda  d\left< \psi \right| E_{\lambda}(\Delta_{\psi|\Omega})\left| \psi \right> < \infty
\end{equation}
which implies that any divergence in \eqref{araki-srel} comes from the lower end as $\lambda \rightarrow 0$.

One expression for relative entropy that we will find useful is due to Uhlmann \cite{uhlmann1977relative}
\begin{equation}
\label{reltheta}
S_{\rm rel}(\psi|\Omega;A_b) = \lim_{\theta \rightarrow 0^+} \frac{1- \left< \psi \right|  \Delta_{\psi|\Omega}^\theta \left| \psi \right>}{\theta}
\end{equation}
and this definition is equivalent to \eqref{araki-srel} since $(1-\lambda^\theta)/\theta$ is a decreasing (increasinng) function of $\theta$ for all $0< \lambda < 1$ ($1 < \lambda < \infty$), so we can use the monotone convergence theorem for the integral in the spectral representation \cite{ohya2004quantum}.  

Now consider the following functions:
\begin{equation}
S(b) \equiv S_{\rm rel}( \psi | \Omega; A_b) \qquad \bar{S}(b) \equiv S_{\rm rel}( \psi | \Omega; A'_b) 
\end{equation}
Under the conditions specified in \eqref{assumpt} for $\psi$ one can show that $S(b)$  $(\bar{S}(b))$ is a continuous monotonically decreasing (increasing) function for all $b \in \mathbb{R}$. Monotonicity is a classic result for relative entropy \cite{lieb1973proof,uhlmann1977relative,araki1977relative}. Continuity follows from the  following relation:

\begin{lemma}
\label{lem:srane}
\label{lemma1}
Under the assumptions of \eqref{assumpt}:
\begin{equation}
\label{dsdsp}
-\left( S(b_2)-S(b_1) \right)  +\left( \bar{S}(b_2)-\bar{S}(b_1) \right) = (b_2-b_1) 2\pi P_\psi  
\end{equation}
and this, combined with monotonicity, implies that $S(b),\bar{S}(b)$ are everywhere finite and Lipschitz continuous.
\end{lemma}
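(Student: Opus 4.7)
The strategy is to reduce the sum rule to an identity at a fixed algebra by exploiting the modular translation $U_b = e^{-ibP}$, and then to invoke a ``pure-state'' algebraic identity connecting $\bar S - S$ to the vacuum modular operator.

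For the reduction, Definition~\ref{def1}(a) tells us $U_\tau$ is a unitary, fixes $\Omega$, and conjugates $\mathcal{A}_{A_b}$ onto $\mathcal{A}_{A_{b-\tau}}$ (and the commutant analogously). By covariance of the Tomita operator under unitary transformations that preserve a common cyclic separating vector, $U_{-\tau}\Delta_{\psi|\Omega;A_b}U_\tau = \Delta_{U_{-\tau}\psi|\Omega;A_{b+\tau}}$. Setting $\tau = b_2 - b_1$ and $\tilde\psi \equiv U_{-(b_2-b_1)}\psi$ yields $S(b_1) = S_{\rm rel}(\tilde\psi|\Omega;A_{b_2})$ and $\bar S(b_1) = S_{\rm rel}(\tilde\psi|\Omega;A'_{b_2})$, converting the problem into a comparison at the single algebra $\mathcal{A} \equiv \mathcal{A}_{A_{b_2}}$ between the states $\psi$ and $\tilde\psi$.

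At a fixed algebra $\mathcal{A}$, the key identity I would establish is
\begin{equation*}
S_{\rm rel}(\phi|\Omega;\mathcal{A}') - S_{\rm rel}(\phi|\Omega;\mathcal{A}) = \langle\phi|\log\Delta_{\Omega;\mathcal{A}}|\phi\rangle
\end{equation*}
for any vector $\phi \in \mathcal{H}$ with both relative entropies finite. Formally this reads $\langle\phi|\log\Delta_{\phi|\Omega;\mathcal{A}} + \log\Delta_{\Omega|\phi;\mathcal{A}}|\phi\rangle = \langle\phi|\log\Delta_{\Omega;\mathcal{A}}|\phi\rangle$ after using $\log\Delta_{\phi|\Omega;\mathcal{A}'} = -\log\Delta_{\Omega|\phi;\mathcal{A}}$. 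The identity then follows because the Connes cocycle corrections relating $\log\Delta_{\phi|\Omega;\mathcal{A}}$ to $\log\Delta_{\Omega;\mathcal{A}}$ and $\log\Delta_{\Omega|\phi;\mathcal{A}}$ to $\log\Delta_{\phi;\mathcal{A}}$ are mutually inverse (so their sum has vanishing $\phi$-expectation), while $\Delta_{\phi;\mathcal{A}}\phi = \phi$ yields $\langle\phi|\log\Delta_{\phi;\mathcal{A}}|\phi\rangle = 0$. Making these manipulations rigorous in the type III setting, where the individual factors need not be defined on $\phi$, is the main technical hurdle; it can be handled via the Uhlmann representation \eqref{reltheta} combined with the analytic-continuation properties of the Connes cocycles $(D\phi:D\Omega)_s$.

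Granted the identity, applying it to $\psi$ and $\tilde\psi$ at $\mathcal{A}_{A_{b_2}}$ and subtracting gives
\begin{equation*}
[\bar S(b_2) - S(b_2)] - [\bar S(b_1) - S(b_1)] = \langle\psi|\log\Delta_{\Omega;A_{b_2}}|\psi\rangle - \langle\tilde\psi|\log\Delta_{\Omega;A_{b_2}}|\tilde\psi\rangle.
\end{equation*}
The covariance $U_\tau \log\Delta_{\Omega;A_{b+\tau}} U_{-\tau} = \log\Delta_{\Omega;A_b}$ converts the second term into $\langle\psi|\log\Delta_{\Omega;A_{b_1}}|\psi\rangle$, while differentiating the HSMI relation in Definition~\ref{def1}(c) at $s=0$ yields $\log\Delta_{\Omega;A_{b_2}} - \log\Delta_{\Omega;A_{b_1}} = 2\pi(b_2-b_1)P$, so the right-hand side equals $2\pi(b_2-b_1)P_\psi$, establishing \eqref{dsdsp}. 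For the Lipschitz claim, monotonicity gives $S(b_1) - S(b_2) \geq 0$ and $\bar S(b_2) - \bar S(b_1) \geq 0$ for $b_2 > b_1$; the sum rule forces each of these nonnegative quantities to be bounded above by $2\pi(b_2-b_1)P_\psi$, so both $S$ and $\bar S$ are Lipschitz with constant $2\pi P_\psi$, and finiteness at $b=c$ extends to all $b$.
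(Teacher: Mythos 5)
Your proposal reproduces what the paper explicitly identifies as the ``easy'' route --- the differentiate/cocycle formula relating $\bar S - S$ to the vacuum modular Hamiltonian --- and the paper is equally explicit that this route is only valid if one already knows that \emph{all} the relative entropies appearing in \eqref{dsdsp} are finite. That is precisely what the assumptions \eqref{assumpt} do not give you. From \eqref{assumpt} and monotonicity you know $S(b_2)<\infty$ for $b_2\geq c$ and $\bar S(b_1)<\infty$ for $b_1\leq c$, but the finiteness of $\bar S(b_2)$ for $b_2>c$ (and of $S(b_1)$ for $b_1<c$) is a \emph{conclusion} of the lemma, not a hypothesis. Your key identity $S_{\rm rel}(\phi|\Omega;\mathcal{A}')-S_{\rm rel}(\phi|\Omega;\mathcal{A})=\langle\phi|\log\Delta_{\Omega;\mathcal{A}}|\phi\rangle$ is stated ``for any vector $\phi$ with both relative entropies finite,'' and when you apply it to $\psi$ at $\mathcal{A}_{A_{b_2}}$ with $b_2>c$ you are assuming $\bar S(b_2)<\infty$ --- the very thing to be proven. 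The same circularity appears in the subtraction step, which requires all four entropies (and both full modular energies $\langle\psi|\log\Delta_{\Omega;A_{b_i}}|\psi\rangle$ separately) to be finite before the cancellations are legitimate. Your closing sentence, ``finiteness at $b=c$ extends to all $b$,'' is therefore not established: the sum rule you would use to propagate finiteness was itself derived only on the set where finiteness already holds.

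The paper's proof is built specifically to close this loop. It studies the single analytic function $f(s)=\langle\psi|\Delta_c^{-is}(\Delta'_{b_2})^{-is}|\psi\rangle$ on the strip $\bar S(0,1/2)$, evaluates $\lim_{\theta\to0^+}(1-f(i\theta))/\theta$ in two ways, and then uses the Cauchy--Schwarz-type estimate $|C-A-B|\leq 2\sqrt{AB}$ (with $A\to S(c)$, $B\to\bar S(b_2)$, $C\to \bar S(c)+S(b_2)+2\pi aP_\psi$) together with the monotone-convergence property of the Uhlmann representation \eqref{reltheta} to \emph{deduce} that $\lim B=\bar S(b_2)$ is finite from the finiteness of $\lim A$ and $\lim C$. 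An analogue of that one-sided implication --- ``finiteness of the other three quantities forces finiteness of the fourth'' --- is the missing ingredient in your argument; without it the proposal only proves the sum rule on the subset of $(b_1,b_2)$ where it is already known that all terms converge. The translation-covariance reduction and the final Lipschitz argument in your last sentences are fine and agree with the paper, but they are downstream of the gap.
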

The proof of this Lemma~\ref{lemma1} is the subject of Section~\ref{sec:rel}. 
In previous works this relationship was essentially taken to be an obvious consequence of the form of relative entropy written in terms of the (half) modular energy and entanglement entropy \cite{bousso2015entropy,blanco2013localization,Faulkner:2016mzt}.
These arguments are based on assuming a tensor factorization and working with density matrices  (or a regularization consistent with this). For example in \cite{Faulkner:2016mzt} equation \eqref{dsdsp} was used to motivate the ANEC.\footnote{From the algebraic point of view $P\geq 0$ follows more directly from properties of modular Hamiltonians under inclusion \cite{Witten:2018lha}.} So it might come as a surprise that we have to devote a whole section to proving this. It turns out that this relation is simple to derive if one assumes that all relative entropies in \eqref{dsdsp} are finite to begin with. We would like to not assume this, and in fact we would like to use this equation as a tool to derive when some relative entropies are finite given some other ones are finite. This is a non-trivial task but we managed to get it to work with the assumptions in \eqref{assumpt} in which case we learn that $\bar{S}(b_2)$ is finite for $b_2 > c$ and this finitness does not follow from monotonicity. 
It then follows that all relative entropies are finite. These considerations are fundamentally important for proceeding to compute the relative entropies of the various purifications that we discuss next.

For some of this discussion we will be interested in $\psi$ restricted to $\mathcal{A}_{A_c} \equiv \mathcal{A}_{C}  $ and purifications thereof. Since $\psi$ is a vector in the Hilbert space, this represents one such purification. Any other purification can be constructed from $\psi$ with the action of a unitary from the commutant algebra $\mathcal{A}_{C}'$.

In the following discussion we will often drop the $C$ label on the modular operators for the $\mathcal{A}_C$ algebra, since this is the most common algebra we write. 
Consider the Connes cocycle, which is defined as:
\begin{equation}
u_s = (D \Omega:D\psi )_s = \Delta_{\Omega}^{is} \Delta_{\Omega|\psi}^{-is}  \in \mathcal{A}_C 
\end{equation}
for real $s$, where $\Delta_{\Omega} \equiv \Delta_{\Omega|\Omega}$. The fact that this is an operator in the algebra $\mathcal{A}_C$ is a non-trivial result of Tomita-Takesaki theory applied to an enlarged Hilbert space (by a few qudits) using the doubling trick  that we review in Appendix~\ref{app:mod}. We define powers of the modular operator, for example $\Delta_{\Omega|\psi}^{is}$, on the subspace of the Hilbert space with non-zero support for the operator: $\pi(\psi) \mathcal{H}$ in this case. We also define such powers to annihilate the kernel, $(1-\pi(\psi)) \mathcal{H}$. For example
this means that $\lim_{s \rightarrow 0} \Delta_{\Omega|\psi}^{is} = \pi(\psi)$.
An alternative expression for the cocycle is:
\begin{equation}
 \Delta_{\psi|\Omega}^{is} \Delta_{\psi}^{-is} = u_s \pi'(\psi)
\end{equation}
which requires the additional support projector in $\mathcal{A}_C'$ and is sometimes less convenient, however for us this will often not matter since we will take the cocycle to act on $\left| \psi \right>$ where we can drop the support projector.

Similarly we have a cocycle for the complement:
\begin{equation}
u_s' = (D \Omega:D\psi )'_s = (\Delta_{\Omega}')^{is} (\Delta_{\Omega|\psi}')^{-is}  \in \mathcal{A}'_C 
\end{equation}
Note that $u_s'$ is in general not unitary. Instead it is a partial isometry satisfying:
\begin{equation}
(u_s')^\dagger u_s' =  \pi'(\psi) \qquad  u_s' (u_s')^\dagger = \Delta_{\Omega}^{-is}  \pi'(\psi) \Delta_{\Omega}^{is}
\end{equation}

The following states are interesting purifications of $\psi$ restricted to $\mathcal{A}_C$: 
\begin{equation}
\left| \psi_s \right> \equiv u_s' \left| \psi \right>  \qquad s\in\mathbb{R}
\label{p:flow}
\end{equation}
This state preserves all expectation values of operators in $\mathcal{A}_C$:
\begin{equation}
\label{sameexp}
\left< \psi \right| (u_s')^\dagger \gamma  u_s' \left| \psi \right> 
= \left< \psi \right| \gamma (u_s')^\dagger u_s'   \left| \psi \right>  = \left< \psi \right| \gamma \pi'(\psi) \left| \psi \right> =  \left< \psi \right| \gamma \left| \psi \right> \qquad \gamma \in \mathcal{A}_C
\end{equation}

We would like to compute the relative entropy of this purification. This state also preserves expectation values of operators in $\mathcal{A}_{C_a}   \equiv \mathcal{A}_{A_{c+a}} \subset \mathcal{A}_{C}$ for $a>0$ so we conclude that:
\begin{equation}
S_{\rm rel}(\psi_s |\Omega ; C_a) = S_{\rm rel}(\psi |\Omega ; C_a)  \qquad a \geq 0
\end{equation}
since the relative entropy can be shown to be independent of the vector representation, only depending on the linear functional that the state induces on operators \cite{araki1976relative}.
Using the relationship:
\begin{equation}
(\Delta_{\Omega|\psi}')^{-is} = \Delta_{\psi|\Omega}^{is} 
\end{equation}
discussed in Appendix~\ref{app:mod}, this purification can be written as:
\begin{equation}
\left| \psi_s \right> = u_s' \left| \psi \right> = \Delta_{\Omega}^{-is} u_s \left| \psi \right>
\end{equation}
Such that:
\begin{equation}
\left< \psi_s \right| \gamma' \left| \psi_s \right>
= \left< \psi \right|  \Delta_{\Omega}^{is} \gamma' \Delta_{\Omega}^{-is}  \left| \psi \right>\qquad \gamma' \in \mathcal{A}_C'
\end{equation}
Thus the complement relative entropy matches the complement relative entropy of the state $\Delta_{\Omega}^{-is}  \left| \psi \right>$.
We can compute the relative entropy by constructing the relative modular operator for the cuts $\mathcal{A}_{C'_a}$ for $a<0$. Using the algebra of half-sided modular inclusions \eqref{hsmi:translation}-\eqref{hsmi:boost} we find:
\begin{equation}
\label{sflow}
S_{\Delta_{\Omega;C}^{-is} \psi| \Omega; C_a'} = \Delta_{\Omega;C}^{-is} S_{\psi|\Omega;C'_{a e^{-2\pi s}}}  \Delta_{\Omega;C}^{is}
\end{equation}
and where the support projectors satisfy:
\begin{equation}
\pi_{C_a}(\Delta_{\Omega;C}^{-is} \psi) = \Delta_{\Omega;C}^{-is} \pi_{C_{ae^{-2\pi s}}} (\psi) \Delta_{\Omega;C}^{is}
\end{equation}
and similarly for the complement support projector. We can then construct the relative modular operator and use this to compute the relative entropy. The answer is simply:
\begin{equation}
S_{\rm rel}(\psi_s |\Omega ; C'_a) =S_{\rm rel}(\Delta_{\Omega}^{-is} \psi |\Omega ; C'_a)   = S_{\rm rel}(\psi |\Omega ; C'_{a e^{-2\pi s}})  \qquad a \leq 0
\end{equation}
Thus it is easy to compute the relative entropy for $\mathcal{A}_{C_a} \subset \mathcal{A}_{C}$ or $\mathcal{A}_{C_a}' \subset \mathcal{A}'_{C}$ in terms of the input relative entropy for $\psi$.  This is because the state $\left|\psi_s\right>$ is roughly a half sided boost of $\left| \psi \right>$, leaving one side invariant as above.\footnote{This should not be confused with the boosted states discussed in \cite{Jafferis:2014lza,Faulkner:2018faa}. These states are more singular since they involve modular flow with only half the $\psi$-modular Hamiltonian. The Connes cocycle is one way to deal with issues related to divergences that arise in that case, and some of the resulting physics is related to that discussed in \cite{Jafferis:2014lza,Faulkner:2018faa}. In particular we expect the bulk description of these states, in the context of AdS/CFT, to be the same for the part of the bulk spacetime that is (bulk) causally separated from the boundary entangling surface. } The other cases, such as $\mathcal{A}_{C_a}'$ for $a \geq 0$ are harder, since the cocycle acts simply as a half sided boost only on some of the operators in this algebra but not all. 

It turns out however that all we need to know to complete the full picture of relative entropies is the averaged null energy of this purification. In fact all we need is the following lemma:
\begin{lemma}
\label{lem:ant}
\label{lemma2}
For a vector state $\psi$ that has finite $P_\psi < \infty$ then:
\begin{equation}
\label{aps}
P_{s} \equiv \left< \psi_s \right| P \left| \psi_s\right>  = R + e^{-2\pi s} \left( P_\psi -R \right)
\end{equation} 
with $0 \leq R \leq P_\psi$ independent of $s$, and 
where the state $\psi_s$ was defined in \eqref{p:flow}.
\end{lemma}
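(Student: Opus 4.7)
The plan is to isolate within $P$ a part that is insensitive to the purification flow and a part that transforms simply with $s$, by decomposing $P$ along the two sides of the cut $C$. Formally write $P = P_R + P_L$ with $P_R$ affiliated to $\mathcal{A}_C$ and $P_L$ affiliated to $\mathcal{A}_{C'}$; in the null cut QFT picture this is $P_{R,L} = 2\pi\int_{u\gtrless C(y)}(B-A)\,T_{uu}$, and each piece inherits the HSMI scaling law $\Delta_{\Omega;C}^{is}P_{R,L}\Delta_{\Omega;C}^{-is} = e^{-2\pi s}P_{R,L}$, consistent with $\Delta_{\Omega;C}^{is}P\Delta_{\Omega;C}^{-is} = e^{-2\pi s}P$ from \eqref{hsmi:algebra}. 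This is the main structural input and the step requiring the most care.

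Given the split, the first piece is manifestly $s$-invariant: since $u_s' \in \mathcal{A}_{C'}$ commutes with $P_R$ and $(u_s')^\dagger u_s' = \pi'(\psi)$ fixes $\left|\psi\right>$, one gets
\begin{equation*}
\left<\psi_s\right|P_R\left|\psi_s\right> = \left<\psi\right|(u_s')^\dagger u_s'\,P_R\left|\psi\right> = \left<\psi\right|P_R\left|\psi\right> \equiv R.
\end{equation*}
For the second piece, insert $u_s' = \Delta_{\Omega;C}^{-is}\Delta_{\psi|\Omega;C}^{is}$ and pull the $\Delta_{\Omega;C}$ factors through using the scaling above:
\begin{equation*}
(u_s')^\dagger P_L u_s' = \Delta_{\psi|\Omega;C}^{-is}\bigl(\Delta_{\Omega;C}^{is}P_L\Delta_{\Omega;C}^{-is}\bigr)\Delta_{\psi|\Omega;C}^{is} = e^{-2\pi s}\,\Delta_{\psi|\Omega;C}^{-is}P_L\Delta_{\psi|\Omega;C}^{is}.
\end{equation*}
For $P_L \in \mathcal{A}_{C'}$, the outer conjugation coincides with the $\psi$-modular automorphism on $\mathcal{A}_{C'}$; this is the standard Tomita--Takesaki identity $\Delta_{\psi|\Omega;C}^{-is}\gamma'\Delta_{\psi|\Omega;C}^{is} = \Delta_{\psi;C'}^{is}\gamma'\Delta_{\psi;C'}^{-is}$ for $\gamma' \in \mathcal{A}_{C'}$, which one checks in the schematic factorization $\Delta_{\psi|\Omega;C} \sim \rho_\Omega\,\rho_\psi'^{-1}$ where the $\mathcal{A}_C$-part $\rho_\Omega$ commutes with $P_L$. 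The KMS invariance of $\psi$ under its own modular flow then gives $\left<\psi\right|\Delta_{\psi|\Omega;C}^{-is}P_L\Delta_{\psi|\Omega;C}^{is}\left|\psi\right> = \left<\psi\right|P_L\left|\psi\right> = P_\psi - R$. Adding the two contributions produces $P_s = R + e^{-2\pi s}(P_\psi - R)$, and the bounds $0 \leq R \leq P_\psi$ follow by sending $s \to \pm\infty$ and using the ANEC $P_s \geq 0$ already available in the HSMI setup.

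The hard part is making the splitting $P = P_R + P_L$ rigorous in the abstract HSMI framework, where $P$ is given only through $\tfrac{1}{2\pi}(\ln\Delta_{\Omega;B} - \ln\Delta_{\Omega;C})$ with no a priori stress tensor. In the concrete null cut setting the split is manifest from locality of $T_{uu}$ along the null line; abstractly one would have to construct $P_{R,L}$ directly from modular data, verify their weight-$(-1)$ scaling under $\Delta_{\Omega;C}^{is}$ and their affiliation to the appropriate algebras, likely via bounded functions of the modular Hamiltonians of $\mathcal{A}_C$ and $\mathcal{A}_{C'}$ and a limiting procedure that exploits the finiteness assumption $P_\psi < \infty$. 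All remaining steps are robust, requiring only the HSMI scaling law for $P$ and the KMS property of $\psi$. An alternative route that avoids constructing $P_{R,L}$ as operators is to work directly with the rewriting $P_s = e^{-2\pi s}\left<\psi\right|\Delta_{\psi|\Omega;C}^{-is}P\Delta_{\psi|\Omega;C}^{is}\left|\psi\right>$ from the same identity and extract the functional form $Re^{2\pi s} + (P_\psi - R)$ of the inner matrix element using analyticity of $\Delta_{\psi|\Omega;C}^{is}\psi$ in the strip $-1/2 \leq \operatorname{Im}(s) \leq 0$ together with the boundary value $\Delta_{\psi|\Omega;C}^{1/2}\psi = J_{\psi|\Omega;C}\Omega$ coming from $S_{\psi|\Omega;C}\psi = \Omega$.
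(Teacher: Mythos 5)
There is a genuine gap, and it sits exactly where you suspect: the decomposition $P = P_R + P_L$ with $P_R$ affiliated to $\mathcal{A}_C$ and $P_L$ to $\mathcal{A}_C'$ is not available, and this is not a removable technicality. The would-be operator $P_R = 2\pi\int_{u>C(y)}(B-A)\,T_{uu}$ is precisely the ``half integrated ANE'' that the paper's footnote to Theorem~\ref{thm:wall} singles out as not obviously well defined even in concrete QFT: because the weight $(B-A)$ does not vanish at the entangling surface $u=C(y)$, there is no known construction of this half-sided integral as a self-adjoint operator affiliated to $\mathcal{A}_C$ (in contrast to the half-sided \emph{modular} Hamiltonian, whose integrand vanishes linearly at the cut). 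Locality of $T_{uu}$ splits the classical integral, not the quantum operator. Moreover, since the entire content of the lemma is the identification of the $s$-independent constant $R$, positing the splitting essentially assumes the conclusion: $R=\left<\psi\right|P_R\left|\psi\right>$ is Wall's original, non-rigorous form of the answer, and the point of the paper is to produce $R$ without ever writing such an object. The remaining steps of your computation (commutation of $u_s'$ with $P_R$, the scaling of $P_L$ under $\Delta_{\Omega;C}^{is}$, the action of relative modular flow on $\mathcal{A}_C'$ as the $\psi$-modular automorphism, and the bounds from $s\to\pm\infty$) are formally consistent and do reproduce \eqref{aps}, but they all act on an operator that has not been shown to exist.

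The paper's proof is structured entirely to avoid any such splitting. It works only with the bounded unitaries $U_{-a}$ and the full (well-defined) $P$: it introduces the structure function $g(s,\eta)=\left<\psi_s\right|U_{-a}\left|\psi_s\right>$ with $a=\epsilon e^{2\pi\eta}$, establishes its analyticity on a two-complex-dimensional domain by gluing four vector-valued holomorphic functions $\Gamma_{I,\dots,IV}$ on tube regions (Lemmas~\ref{lemma4}--\ref{lemma5}), takes the $\epsilon\to0$ limit of $(1-g)/\epsilon$ using Montel/Vitali (Lemma~\ref{lemma6}), and then proves periodicity under $s\to s+i$ together with a linear exponential growth bound, so that Hadamard factorization forces $\rho(s)=-i(R+Qe^{-2\pi s})$ (Lemma~\ref{lemma7}); a separate lower-semicontinuity argument extends the result from entire states to general states with $P_\psi<\infty$. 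Here $R$ emerges as a Taylor coefficient of an entire function of $e^{-2\pi s}$ and is only afterwards identified with $-\partial S(c)/2\pi$ via Corollary~\ref{lemma2c}. Your closing ``alternative route'' --- studying $e^{-2\pi s}\left<\psi\right|\Delta_{\psi|\Omega;C}^{-is}P\,\Delta_{\psi|\Omega;C}^{is}\left|\psi\right>$ via strip analyticity --- is pointing at the actual proof, but as written it is only a sketch: one needs the $\epsilon$-regularization of the unbounded $P$, the full width-one strip (hence the commutant boundary values at ${\rm Im}\,s=\pm1/2$ obtained through the conjugation cocycles $\Theta'$), and the growth/periodicity input to pin down the functional form; analyticity in $-1/2\le{\rm Im}\,s\le0$ alone does not determine a two-parameter family $Re^{2\pi s}+(P_\psi-R)$.
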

We will delay the proof of this Lemma to Section~\ref{sec:nullant}, although we should stress that we think that this is the most interesting part of this paper. The rough sketch of how this goes is that we prove that $P_{s}$ is an entire function of $e^{-2\pi s}$ satisfying a growth condition that fixes the answer as above. The growth in \eqref{aps} as $s \rightarrow -\infty$ can be interpreted as resulting from similar mathematics to the chaos bound discussed in \cite{Maldacena:2015waa}. For example in Section~\ref{sec:nullant} we consider a function:
\begin{equation}
g = \left< \psi_s \right| e^{ -  \epsilon P } \left| \psi_s \right>
\end{equation}
which we will show has a magnitude bounded by $1$ for $-1/4\leq {\rm Im}s \leq 1/4$ and is analytic in that strip. These are the same properties as the out of time order (OTO) four point functions used to study the chaos/scrambling phenomenon \cite{Shenker:2013pqa}. In particular we find: 
\begin{equation}
g \approx 1 - \epsilon e^{-2\pi s} (P_\psi - R) + \ldots
\end{equation}
where we are imagining sending $s$ large and negative (but not too large). Generally one might have expected $1- \epsilon e^{- \lambda_L s}$ where then the same arguments as in \cite{Maldacena:2015waa} would have fixed $\lambda_L \leq 2\pi$ the maximal Lyapunov exponent. Here we prove that this bound is actually always saturated and this arrises from a shift in the ``spectral weight'' (the discontinuity across a certain branch cut) of $g$ towards large $s$ as one sends $\epsilon \rightarrow 0$. This same phenomenon happens in the light-cone limit of the (Rindler) thermal  OTO correlator \cite{Hartman:2015lfa} and also in CFTs with a holographic gravitational dual as one sends $\epsilon = G_N \rightarrow 0$ where in these cases the ``spectral weight'' is governed by the double discontinuity defined in \cite{Caron-Huot:2017vep}. We think that this is more than just a mathematical analogy since in holographic theories both effects will be governed by some kind of gravitational time delay.

Now define the following function:
\begin{equation}
S_{s}(b) \equiv S_{\rm rel}(\psi_s  |\Omega ; B_b) \qquad \bar{S}_{s}(b) \equiv S_{\rm rel}(\psi_s |\Omega ; B'_b)
\end{equation}
Given Lemma~\ref{lem:ant} we can apply the results of Lemma~\ref{lemma1} to the state $\left| \psi_s \right>$ since we also know that $S_{s}(c) = S(c) < \infty$ and $\bar{S}_{s}(c) = \bar{S}(c) < \infty$ which means that $S_{s}(b), \bar{S}_{s}(b)$ are finite and continuous functions of $b$ for all $-\infty< b< \infty$.  

For now let us simply use the fact that $P_s$ is finite and \emph{not} the explicit form in \eqref{aps}. Applying the equation in Lemma~\ref{lemma1}
\begin{equation}
\label{dsdsp2}
-\left( S_{s}(b)-S_{s}(c) \right)  +\left( \bar{S}_{s}(b)-\bar{S}_{s}(c) \right) = (b-c) 2\pi P_{s}
\end{equation}
we can use this to construct the relative entropies everywhere:
\begin{align} 
\label{ssnew}
S_{s}(b) &= \begin{cases} S(b) & b\geq c \\
S\left( e^{-2\pi s} (b-c)+c\right) + 2\pi ( P_\psi e^{-2\pi s} - P_{s} )  (b-c) & b \leq c \end{cases} 
\end{align}
and for the complement:
\begin{align} 
\label{ssbnew}
\bar{S}_{s}(b) &= \begin{cases} \bar{S}(b) + 2\pi  (P_{s} - P_\psi) (b-c) & b\geq c \\
\bar{S}\left( e^{-2\pi s}(b-c)+c\right) & b \leq c \end{cases} 
\end{align}
These functions are clearly still continuous. 
It is more convenient to track the derivative of these functions. Since the input functions are monotonic their derivatives exists almost everywhere. 
For now we will take $c$ to be a point where the derivative of the input relative entropies $S(b),\bar{S}(b)$ exist. Depending on the value of $P_s$ the flowed state might then have a discontinuity in the derivative at $b=c$, however we can still consider the half sided derivatives $\partial^\pm$ taking limits from $c\pm a$ as $a \rightarrow 0$. For example:
\begin{equation}
\label{boundm}
e^{2\pi s} \partial^-S_s(c) = \partial S(c) + 2\pi ( P_\psi - P_s e^{2\pi s}) \leq 0
\end{equation}
where the later inequality is simply monotonicity of the flowed relative entropy. For the complement region we have:
\begin{equation}
\label{boundp}
\partial^+ \bar{S}_s(c) = \partial \bar{S}(c) + 2\pi (P_s - P_\psi)
= \partial S(c) + 2 \pi P_s  \geq 0
\end{equation}
where we used $ - \partial S(c) + \partial \bar{S}(c) = 2\pi P_\psi$ which follows from \eqref{dsdsp}.
We thus derive the following bound by extremizing over $s$:
\begin{equation}
  \sup_s 2\pi ( P_\psi - e^{2\pi s} P_{s} )
 \leq - \partial S(c)  \leq
 \inf_s 2\pi P_{s}
 \end{equation}
 This is an interesting formula in itself, only relying on the finiteness of $P_s$. If we plug in the form of $P_s= R+ (P_\psi - R) e^{-2\pi s}$ given in Lemma~\ref{lemma2} we find:
 \begin{equation}
 \label{equal}
 2\pi R \leq - \partial S(c) \leq 2\pi R \, \implies \, 2\pi R = - \partial S(c) \,.
 \end{equation}
We thus have the following corollary to Lemma~\ref{lemma2}: 
\begin{corollary}[to Lemma~\ref{lemma2}]
\label{lemma2c}
For $\psi$ satisfying the assumptions in \eqref{assumpt} the averaged null energy of the flowed state can be written as:
\begin{equation}
2\pi P_{s} \equiv 2\pi \left< \psi_s \right| P \left| \psi_s \right>  =  - \partial S(c) + e^{-2\pi s} \partial \bar{S}(c)
\end{equation} 
almost everywhere for $c \in \mathbb{R}$. In particular the above equation holds when $S(b)$ is differentiable at $b=c$. 
\end{corollary}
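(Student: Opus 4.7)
The plan is to combine Lemma~\ref{lemma2} with the squeeze bounds \eqref{equal} already derived in the excerpt. Everything needed is in place; the proof is essentially an algebraic rearrangement together with one soft analytic observation about points of differentiability.

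First I would pin down the full-measure set on which the identity holds. Since $S(b)$ and $\bar{S}(b)$ are monotone (by monotonicity of relative entropy), Lebesgue's theorem on monotone functions implies they are differentiable almost everywhere. Moreover Lemma~\ref{lemma1} rearranged says $\bar{S}(b) - S(b)$ is an affine function of $b$ with slope $2\pi P_\psi$, so $\bar{S}$ and $S$ are differentiable at exactly the same points, and at any such $c$
\begin{equation}
\partial \bar{S}(c) - \partial S(c) = 2\pi P_\psi .
\end{equation}
This pins down the a.e.\ set and supplies the bridge identity we will need at the very end.

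Now fix such a $c$ and plug the explicit form $P_s = R + e^{-2\pi s}(P_\psi - R)$ from Lemma~\ref{lemma2}, with $0 \leq R \leq P_\psi$, into the sandwich
\begin{equation}
\sup_s 2\pi \left( P_\psi - e^{2\pi s} P_s \right) \, \leq \, - \partial S(c) \, \leq \, \inf_s 2\pi P_s .
\end{equation}
Since $P_\psi - R \geq 0$ one computes $\inf_s 2\pi P_s = 2\pi R$, attained as $s \to +\infty$. On the other side $2\pi(P_\psi - e^{2\pi s} P_s) = 2\pi R(1 - e^{2\pi s})$ after cancellation, and since $R \geq 0$ this has $\sup_s = 2\pi R$, attained as $s \to -\infty$. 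Both bounds collapse to $2\pi R$, which forces $-\partial S(c) = 2\pi R$.

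Finally, substituting $R = -\partial S(c)/(2\pi)$ back into Lemma~\ref{lemma2} and using the differentiated form of Lemma~\ref{lemma1} to replace $2\pi P_\psi + \partial S(c)$ by $\partial \bar{S}(c)$ yields
\begin{equation}
2\pi P_s = 2\pi R + e^{-2\pi s}(2\pi P_\psi - 2\pi R) = - \partial S(c) + e^{-2\pi s}\, \partial \bar{S}(c) ,
\end{equation}
which is the claimed identity. The only real obstacle is the bookkeeping around signs, limits of $\sup/\inf$, and tracking which points $c$ are admissible; there is no substantive new content beyond what is already assembled preceding \eqref{equal}, and the ``almost everywhere'' qualifier is carried for free from the differentiability of monotone functions.
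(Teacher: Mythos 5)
Your proof is correct and follows essentially the same route as the paper: the paper's ``proof'' of the corollary is literally the derivation preceding it, namely the sandwich $\sup_s 2\pi(P_\psi - e^{2\pi s}P_s) \leq -\partial S(c) \leq \inf_s 2\pi P_s$ obtained from monotonicity of the flowed relative entropies, into which the explicit form $P_s = R + e^{-2\pi s}(P_\psi - R)$ from Lemma~\ref{lemma2} is inserted to force $2\pi R = -\partial S(c)$, followed by the substitution using $\partial\bar{S}(c) - \partial S(c) = 2\pi P_\psi$. Your added remark that the affine relation from Lemma~\ref{lemma1} forces $S$ and $\bar{S}$ to be differentiable at the same points is a nice explicit justification of the ``almost everywhere'' qualifier that the paper leaves implicit.
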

\begin{proof}
See above.
\end{proof}
We can now give a more complete description of the relative entropies. The derivatives satisfy:
\begin{align} 
\partial S_{s}(b) &= \begin{cases} \partial S(b) & b \geq c \\
e^{-2\pi s} \partial S\left( e^{-2\pi s} (b-c)+c\right) +(1-e^{-2\pi s}) \partial S(c) & b \leq c \end{cases} 
\end{align}
almost everywhere in $b$ and for $S$ differentiable at $c$. And for the complement:
\begin{align} 
\label{dssbar}
\partial \bar{S}_{s}(b) &= \begin{cases}  \partial\bar{S}(b) - (1- e^{-2\pi s}) \partial \bar{S}(c) & b\geq c \\
e^{-2\pi s} \partial \bar{S}\left( e^{-2\pi s}(b-c)+c\right) & b \leq c \end{cases} 
\end{align}
Note that the above resulting relative entropies are actually still differentiable at $b=c$ as a result of the form in Corollary~\ref{lemma2c}. We give an example plot of the derivative of relative entropy under modular flow in Figure~\ref{fig:rel-flow}

\begin{figure}[h!]
\centering 
\includegraphics[width=.45 \textwidth]{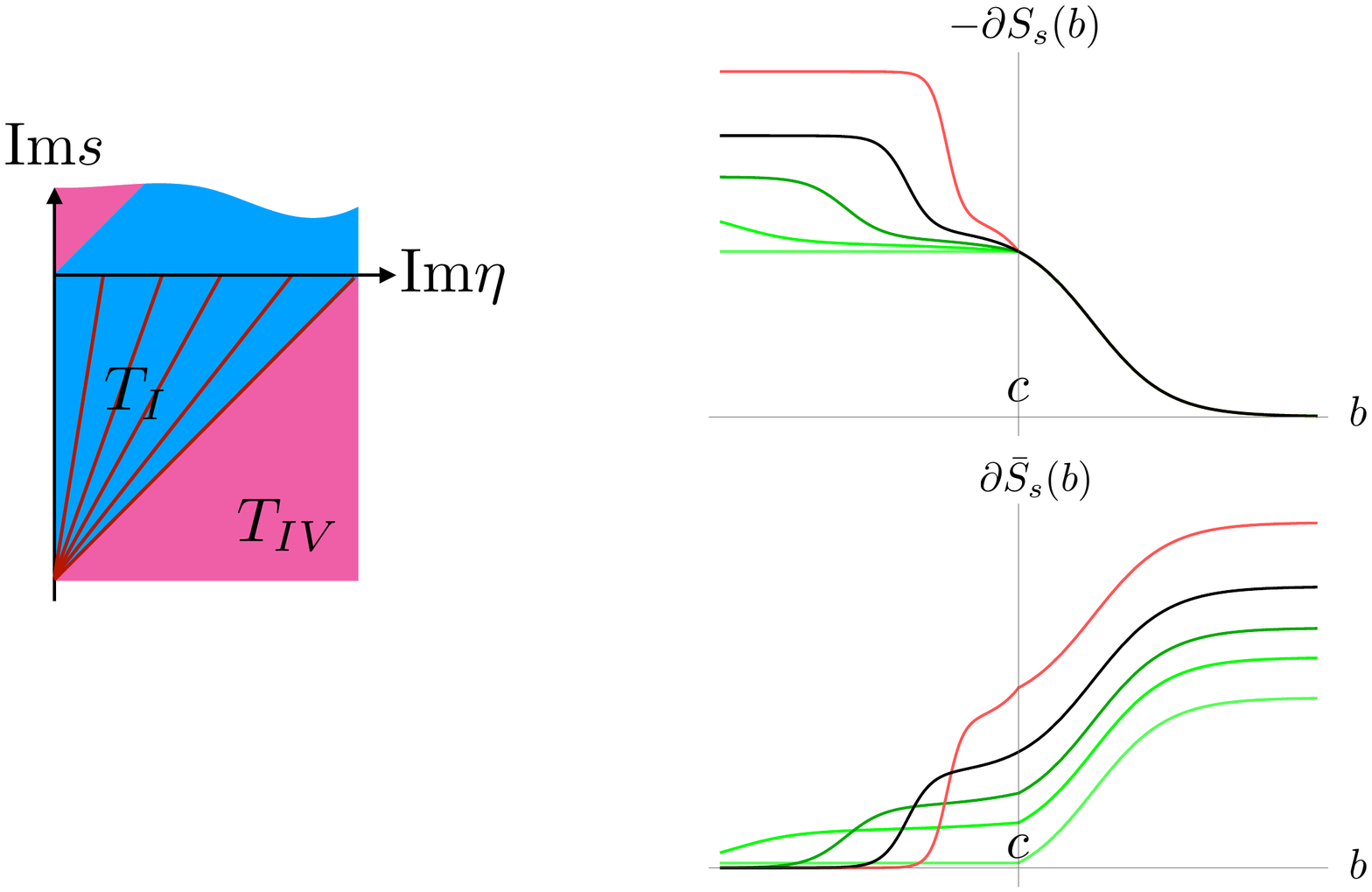} 
\hspace{.5cm}
\includegraphics[width=.45 \textwidth]{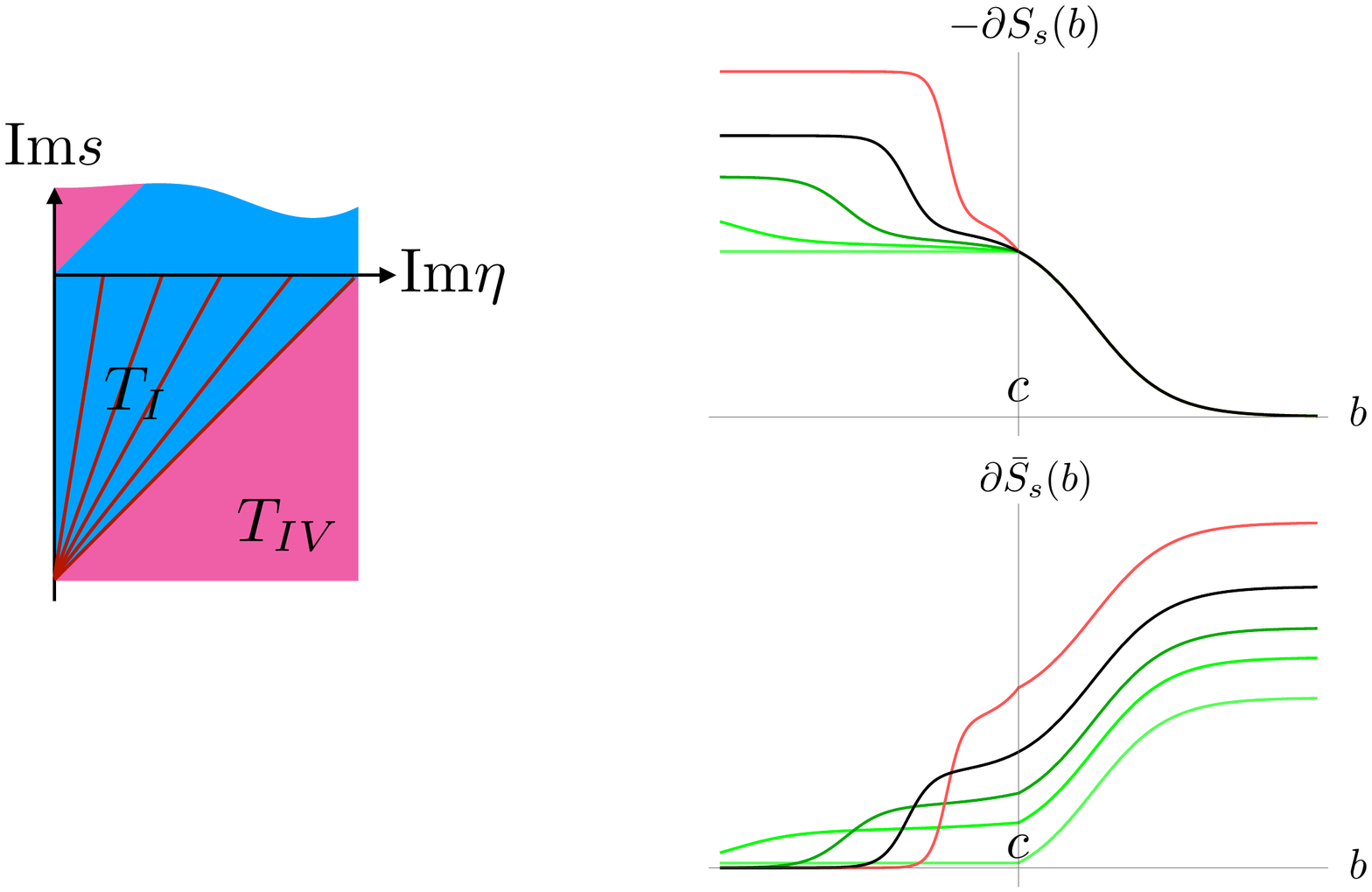}
\caption{ \label{fig:rel-flow} Relative entropy of $\psi_s$ for various $s$ as a function of $b$. The input relative entropy function, shown in black, is a cartoon.  It has the additional property that $\partial S(b), \partial \bar{S}(-b) \rightarrow 0$  as $b \rightarrow \infty$ which is the case if the state approaches vacuum in that limit (this may not actually be the case in QFT since the wiggly cut functions, representing deformations from the Rindler cut $R$, might have bounded support.)
Positive $s$ are the green curves and negative $s$ is the red curve. The relative modular flow is defined with respect to the cut at the point $b=c$.
 } 
\end{figure}

We will give a slightly more refined discussion of these relative entropy functions after we prove the QNEC. For a preview, we will actually find that $S(b)$ is a convex functions which means that the one sided derivatives of these functions exist everywhere and this allows us to constrain $P_{s}$ for all $c$ without the restriction of ``almost everywhere''. 

Another natural class of purifications that the ant might consider are associated to states in the natural self-dual cone $V_{\Omega;C} \subset \mathcal{H}$ for $\Omega$\cite{araki1974some}. 
In general this cone is defined via the closure of:
\begin{equation}
\Delta_{\Omega;C}^{1/4} \mathcal{A}_C^+ \left| \Omega \right>
\end{equation}
where $\mathcal{A}_C^+$ are the positive elements of that algebra.
For a given $\psi$ there is a unique representative $\widehat{\psi} \in V_{\Omega;C}$ that gives the same expectation values of operators in $\mathcal{A}_C$ but generally differs by the action of a unitary from $\mathcal{A}_C'$.
These states have the special property that $J_{\Omega;C} \big| \widehat{\psi} \big> = \big| \widehat{\psi} \big>$. They can be constructed using the ``conjugation cocycles'':
\begin{equation}
\label{def:natcone}
 \big| \widehat{\psi} \big> =( \Theta')^\dagger \left| \psi \right>
 \qquad (\Theta')^\dagger \equiv (\Theta'_{\psi|\Omega})^\dagger = J_{\Omega} J_{\psi|\Omega}  \in \mathcal{A}_C'
\end{equation}
that we review in Appendix~\ref{app:mod}. See in particular \eqref{defconj}. Following the same strategy as above we can compute the relative entropy as follows. Define the following functions:
\begin{equation}
S_{\widehat{\psi}}(b)  = S_{\rm rel}(\widehat{\psi}|\Omega;A_b)\,, \qquad \bar{S}_{\widehat{\psi}}(b)  = S_{\rm rel}(\widehat{\psi}|\Omega;A_b')
\end{equation}
Expectation values of operators in $\mathcal{A}_C$ are unaffected:
\begin{equation}
\left< \psi \right| \Theta' \gamma (\Theta')^\dagger \left| \psi \right>
= \left< \psi \right| \gamma \left| \psi \right> 
\end{equation}
which implies that:
\begin{equation}
S_{\widehat{\psi}}(b) = S(b) \qquad b \geq c
\end{equation}
And for operators in $\mathcal{A}_C'$:
\begin{equation}
\left< \psi \right| \Theta' \gamma' (\Theta')^\dagger \left| \psi \right>
= \left< \psi \right| J_{\psi|\Omega}' \gamma' J_{\Omega|\psi}'  \left| \psi \right> 
=  \left< \psi \right| J_{\Omega} \gamma' J_{\Omega}  \pi'(\psi) \left| \psi \right> 
= \left< \psi \right| J_{\Omega} \gamma' J_{\Omega} \left| \psi \right> 
\end{equation}
So for cuts $\mathcal{A}_{C_{-a}'}$ for $a>0$ the relative entropy is the same as that of the state $ J_{\Omega;C} \left| \psi \right>$. We can then use the following result for the modular operator of such a state:
\begin{equation}
J_\Omega \Delta_{\psi|\Omega;C_a} J_\Omega = \Delta_{ J_\Omega \psi|\Omega;C_{-a}'}
\end{equation}
using similar arguments to those that arrived at \eqref{sflow}. This implies that:
\begin{equation}
\bar{S}_{\widehat{\psi}}(b) =  S_{\rm rel}( J_\Omega \psi|\Omega; A_{b}') = S_{\rm rel}( \psi |\Omega;A_{2c-b})
= S(2c-b) \qquad b\leq c
\end{equation}

For the other relative entropies we need the ANE of this new state. In fact it is not obvious this is finite. However we have the following result:

\begin{lemma}
\label{lemma3}
Given a state $\psi$ with finite averaged null energy, then the state in the natural self-dual cone $V_{\Omega;C}$ associated to $\Omega$ has finite averaged null energy:
\begin{equation}
\widehat{P} \equiv \big< \widehat{\psi} \big| P \big| \widehat{\psi} \big> \leq 2 R \leq 2 P_\psi 
\end{equation}
where $R$ is the same quantity appearing in Lemma~\ref{lemma2} (not necessarily subject to Corollary~\ref{lemma2c}).
Additionally assuming the state $\psi$ has finite relative entropy for some cut $\mathcal{A}_C$ then the state in the natural self-dual cone for $\Omega$ has finite relative entropy and finite complementary relative entropy:
\begin{equation}
S_{\rm rel}(\widehat{\psi}|\Omega;C') =S_{\rm rel}(\widehat{\psi}|\Omega;C)  < \infty 
\end{equation}
\end{lemma}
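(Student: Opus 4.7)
The plan is to handle the two claims separately. The finiteness and equality of $S_{\rm rel}(\widehat{\psi}|\Omega;C)$ and $S_{\rm rel}(\widehat{\psi}|\Omega;C')$ follow almost immediately from the relations derived in the paragraphs preceding the lemma: since $\widehat{\psi}$ and $\psi$ induce the same linear functional on $\mathcal{A}_C$, we have $S_{\widehat{\psi}}(c)=S(c)$; and the $J_\Omega$-reflection identity $\bar{S}_{\widehat{\psi}}(b)=S(2c-b)$ for $b\leq c$, obtained from $J_\Omega\Delta_{\psi|\Omega;C_a}J_\Omega=\Delta_{J_\Omega\psi|\Omega;C'_{-a}}$, gives $\bar{S}_{\widehat{\psi}}(c)=S(c)$ at $b=c$. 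Both therefore equal $S(c)$, which is finite by hypothesis.

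The substantive claim is the ANE bound $\widehat{P}\leq 2R$. My plan is to realize $|\widehat{\psi}\rangle$ as a boundary value of an analytic continuation of the one-parameter family $|\psi_s\rangle = u_s'|\psi\rangle$ used in Lemma~\ref{lemma2}. For real $s$, using $(\Delta'_{\Omega|\psi})^{-is}=\Delta_{\psi|\Omega}^{is}$ from Appendix~\ref{app:mod} together with $(\Delta_\Omega')^{is}=\Delta_\Omega^{-is}$, one has $|\psi_s\rangle = \Delta_{\Omega;C}^{-is}\Delta_{\psi|\Omega;C}^{is}|\psi\rangle$. Combining the polar decomposition $S_{\psi|\Omega}=J_{\psi|\Omega}\Delta_{\psi|\Omega}^{1/2}$ with the definition $(\Theta')^\dagger = J_\Omega J_{\psi|\Omega}$, I would identify $|\widehat{\psi}\rangle$ with the analytic continuation of $|\psi_s\rangle$ to $s=-i/4$, up to the $J_\Omega$-reflection required to place the result in the self-dual cone.

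Next I would feed this identification into the analytic structure established in Section~\ref{sec:nullant}. There, the regulated function $g(s,\epsilon)=\langle\psi_s|e^{-\epsilon P}|\psi_s\rangle$ is shown to be analytic in $s$ for $|\operatorname{Im}(s)|\leq 1/4$, bounded by $1$ in that strip, and to reproduce $P_s=R+e^{-2\pi s}(P_\psi-R)$ on the real axis after differentiating in $\epsilon$ at $\epsilon=0$. Using $J_\Omega P J_\Omega = P$, which follows from \eqref{hsmi:algebra} (namely $J_\Omega U_b J_\Omega=U_{-b}$), together with the self-conjugacy $J_\Omega|\widehat{\psi}\rangle=|\widehat{\psi}\rangle$, the regulated matrix element $\langle\widehat{\psi}|e^{-\epsilon P}|\widehat{\psi}\rangle$ can be written in terms of the boundary values of $g$ at $\operatorname{Im}(s)=\pm 1/4$. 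At those values $e^{-2\pi s}=\pm i$, so the real, $J_\Omega$-symmetric combination of $R+e^{-2\pi s}(P_\psi-R)$ equals $R$; summing the two boundary contributions and differentiating in $\epsilon$ at $\epsilon=0$ (using monotone convergence in the spectral resolution of $P$) yields $\widehat{P}\leq 2R$, and the trivial bound $R\leq P_\psi$ from Lemma~\ref{lemma2} completes $\widehat{P}\leq 2P_\psi$.

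The main obstacle I anticipate is the rigorous identification of $|\widehat{\psi}\rangle$ with an imaginary-time boundary value of $|\psi_s\rangle$, especially when $\psi$ is not cyclic and separating, since the support projectors $\pi(\psi)$ and $\pi'(\psi)$ then enter the analytic continuation non-trivially; I would handle this by keeping the cocycle identities of Appendix~\ref{app:mod} carefully projected on $\pi'(\psi)\mathcal{H}$ and noting that the matrix elements against $P$ are unaffected. A secondary technical point is that $P$ is unbounded, so the identification of $\widehat{P}$ with $-\partial_\epsilon g(\cdot,\epsilon)|_{\epsilon=0^+}$ must be carried out at the regulated level $\epsilon>0$ and the limit $\epsilon\to 0^+$ justified by monotone convergence rather than a naive Taylor expansion of the analytically continued function at the boundary of the strip.
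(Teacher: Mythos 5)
Your treatment of the relative entropy statement is fine and coincides with the paper's: the paper simply points back to the identities $S_{\widehat{\psi}}(b)=S(b)$ for $b\geq c$ and $\bar{S}_{\widehat{\psi}}(b)=S(2c-b)$ for $b\leq c$ derived just before the lemma, so both quantities equal $S(c)<\infty$. The gap is in the ANE bound. Your key step --- identifying $\big|\widehat{\psi}\big>$ with the analytic continuation of $\left|\psi_s\right>=u_s'\left|\psi\right>$ to $s=-i/4$, up to a $J_\Omega$ reflection --- is not correct. In density-matrix language for orientation, $u_s'\left|\psi\right>$ continues at imaginary $s$ to vectors of the form $M_\psi\,(\rho_\psi^{A'})^{-is}(\rho_\Omega^{A'})^{is}$, and no choice of imaginary $s$ (with or without a final $J_\Omega$) produces the positive square root $(\rho_\psi^{A})^{1/2}$ that represents $\widehat{\psi}$; the natural-cone vector is $(\Theta')^\dagger\left|\psi\right>=J_\Omega J_{\psi|\Omega}\left|\psi\right>$, a composition of two antilinear conjugations, and is not reachable by continuing the cocycle in $s$. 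Consequently the boundary values of $g$ at ${\rm Im}\,s=\pm 1/4$ do not compute $\big<\widehat{\psi}\big|e^{-\epsilon P}\big|\widehat{\psi}\big>$, and the step ``sum the two boundary contributions and take the real part of $R+e^{-2\pi s}(P_\psi-R)$'' has no justification as it stands; it happens to produce the number $2R$ but does not constitute a bound on $\widehat{P}$.

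What the paper actually does is different in both respects. First, the conjugation cocycle appears as the boundary value of the structure function at ${\rm Im}\,s=1/2$ (in the normalization of Section~\ref{sec:nullant}), not $\pm 1/4$: the subtracted function $\hat{q}_\epsilon$ of \eqref{qeps}, which removes the $\left<\psi\right|U_{-a_s}\left|\psi\right>$ piece carrying the $e^{-2\pi s}P_\psi$ growth, satisfies $\hat{q}_\epsilon(i/2,\eta)=\big<\widehat{\psi}\big|(1-U_{-a})/\epsilon\big|\widehat{\psi}\big>$ by \eqref{G3}--\eqref{G4}, and its $\epsilon\to 0$ limit is $-ie^{2\pi\eta}R\,(1-e^{-2\pi s})$, which at $s=i/2$ gives exactly the factor $1-e^{-i\pi}=2$, hence $\widehat{P}=2R$. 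Second, all of this analytic machinery is established only for entire states; for a general $\psi$ the paper obtains $\widehat{P}\leq 2R$ from the equality $\widehat{P}_n=2R_n$ for the truncated states $\psi_n=Z_n\Pi_{\Lambda_n}\psi$ via Araki's continuity estimate \eqref{araki} and lower semicontinuity of $\left<P\right>$. Your proposal omits both the correct boundary identification and this approximation/semicontinuity step, and the latter is in fact the origin of the inequality sign in the statement.
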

Note the later fact about relative entropy follows from our discussion just before the statement of Lemma~\ref{lemma3}.
 We will delay the rest of the proof of this to Section~\ref{sec:nullcone}. This result will be useful for us in the next section since we now need only assume that a particular purification has finite null energy before we can then conclude that there is a state also with finite complementary relative entropy, so we may relax one of the assumptions in \eqref{assumpt}.

We can now, using Lemma~\ref{lemma1} for  $\widehat{\psi}$, give a more complete discussion of the relative entropy for $\widehat{\psi}$ associated to a state $\psi$ with finite ANE:
\begin{align}
S_{\widehat{\psi}}(b) &= \begin{cases} 
S(b) & b \geq c \\
S(2c-b) - 2\pi \widehat{P} (b-c) & b \leq c 
\end{cases} \\
\bar{S}_{\widehat{\psi}}(b) &= \begin{cases} 
S(b)+ 2\pi \widehat{P} (b-c) & b \geq c \\
S(2c-b) & b \leq c 
\end{cases} 
\end{align}
Notice that the relative entropy here only depends on $S(b)$ for $b\geq c$. Indeed none of these manipulations assumed that the complement relative entropy of $\psi$ is finite. See Figure~\ref{fig:rel-cone} for an example of these functions.

\begin{figure}[h!]
\centering 
\includegraphics[width=.45\textwidth]{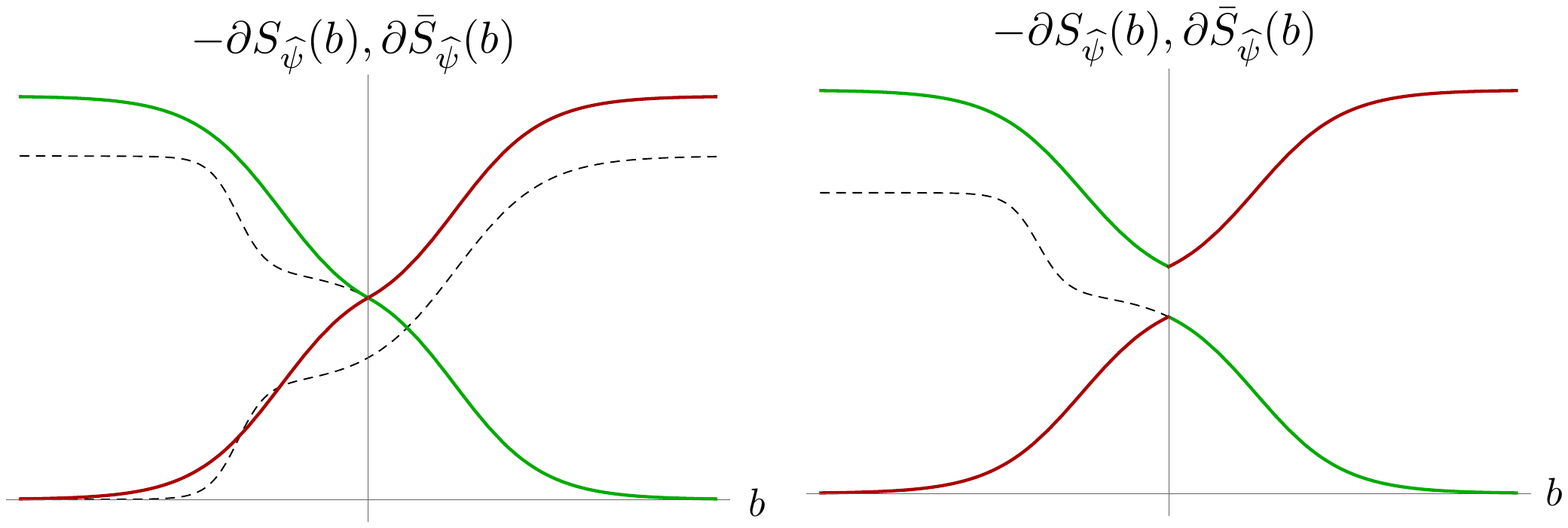} \hspace{.5cm}
\includegraphics[width=.45\textwidth]{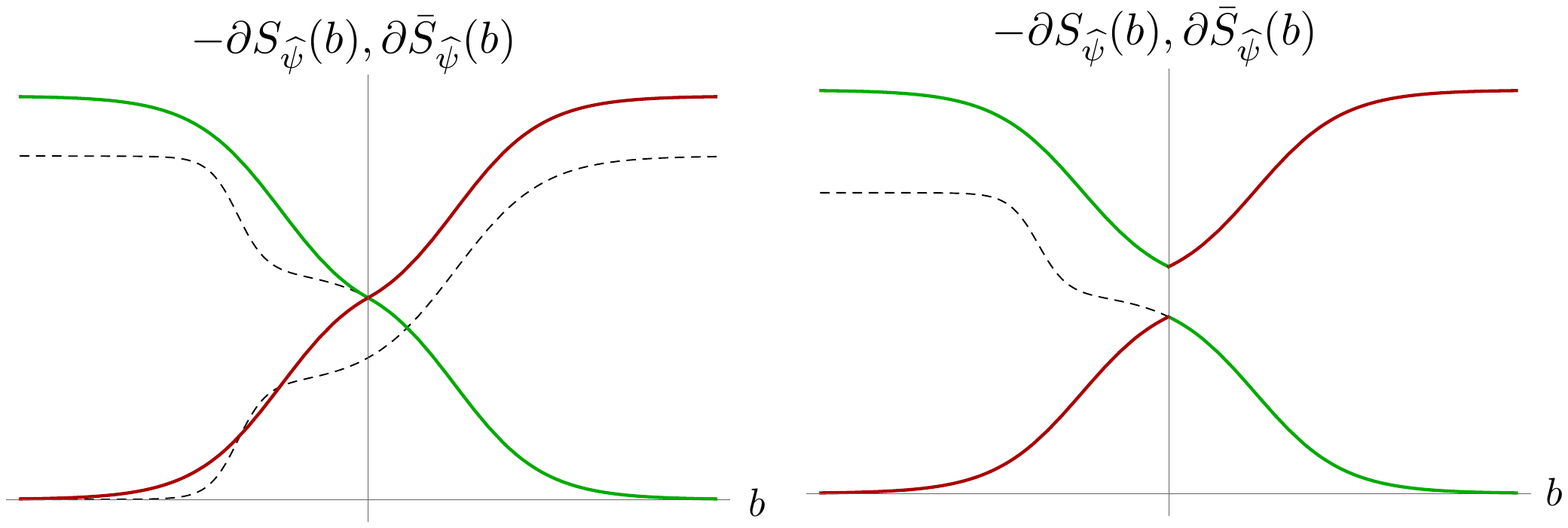}
\caption{ \label{fig:rel-cone} Relative entropy of $\widehat{\psi}$ as a function of $b$, where the natural self-dual cone is with respect to the algebra $\mathcal{A}_{A_c}$ (at the the origin of the $b$-axis). The dashed curves show the input relative entropy and the complement relative entropy (only in the left figure). If the complementary relative entropy for the input state $\psi$ is not finite then we have not been able to discount the possibility of a bounded jump discontinuity in the derivative of relative entropy at $b=c$ which is shown in the right figure. 
 } 
\end{figure}

We turn now to our main results that can be derived from the behavior of the flowed state and the state in the natural self-dual cone. 

\section{Main Results}
\label{sec:main}

Our main goal is to prove the following conjecture by Wall \cite{Wall:2017blw}:
\begin{theorem}[Wall's conjecture]\footnote{Wall wrote down this conjecture in a different form involving entanglement entropy and the half integrated ANE. It is essentially equivalent to what is stated here, although the original form involves quantities that are not obviously well defined in algebraic quantum field theory.  He also for the most part had in mind 2d QFTs. The original conjecture came from arguing that there was no other quantity that he could imagine except $- \partial S$ that satisfies all the properties of the right hand side of \eqref{mc}.}
\label{thm:wall} In the context of algebras satisfying the property of a half-sided modular inclusion $\mathcal{A}_B \subset \mathcal{A}_A$ in Definition~\ref{def1}, consider the relative entropy $S(b)$ of some vector state $\psi \in \mathcal{H}$ compared to the vacuum state $\Omega$, thought of as a function of the null cuts labeled by $b$. Consider states such that:
\begin{equation}
\label{ass22}
\left< \psi \right| P \left| \psi \right> < \infty  \,\, \,\, {\rm and} \,\,\,\, S(b) = S_{\rm rel}(\psi|\Omega;A_b)  < \infty\,,\,\, b \geq b_0
\end{equation}
for some $b_0$.  The derivative of $S(b)$ exists almost everywhere for $c>b_0$ and can be calculated using the following variational expression:
\begin{equation}
-\partial S(c) = M(c) \equiv  \inf_{
\phi \in  \mathcal{H}: \left\{ \substack{\left< \beta \right>_\phi =   \left< \beta \right>_\psi \, \forall \beta \in \mathcal{A}_{A_c} \\
S_{\rm rel}(\phi|\Omega;A_c') < \infty } \right. } 2\pi \left< \phi \right|  P \left| \phi \right>   \,.
\label{mc}
\end{equation}
\end{theorem}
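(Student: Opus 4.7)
The plan is to establish matching upper and lower bounds on $-\partial S(c)$. That $\partial S(c)$ exists for almost every $c$ is immediate from monotonicity of $S$. For the lower bound $-\partial S(c)\le M(c)$ I would apply Lemma~\ref{lemma1} to any $\phi$ in the admissible set: matching of expectation values on $\mathcal{A}_{A_c}$ forces $S_\phi(b)=S(b)$ for every $b\ge c$, and the hypotheses of Lemma~\ref{lemma1} are met at cut $c$ (one may restrict WLOG to $P_\phi<\infty$, and $\bar S_\phi(c)<\infty$ is the admissibility condition). The identity $-(S(b)-S(c))+(\bar S_\phi(b)-\bar S_\phi(c))=(b-c)\,2\pi P_\phi$ for $b\ge c$, combined with monotonicity of $\bar S_\phi$, gives $-(S(b)-S(c))\le(b-c)\,2\pi P_\phi$; letting $b\to c^+$ and infimizing over $\phi$ yields $-\partial^+ S(c)\le M(c)$, hence $-\partial S(c)\le M(c)$ wherever $\partial S(c)$ exists.

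The substantive direction is the upper bound. The main obstacle is that $\psi$ itself need not satisfy \eqref{assumpt}, so the flowed states $\psi_s$ of Section~\ref{sec:ant} are not a priori admissible in the infimum. To bypass this I would invoke Lemma~\ref{lemma3}: for any $c_0\in(b_0,\infty)$ the natural self-dual cone representative $\widehat\psi\in V_{\Omega;A_{c_0}}$ of $\psi|_{\mathcal{A}_{A_{c_0}}}$ has finite $\widehat P$ and finite $\bar S_{\widehat\psi}(c_0)=S(c_0)$, so $\widehat\psi$ satisfies \eqref{assumpt} at $c_0$, and Lemma~\ref{lemma1} then forces both its relative entropies to be finite at every cut. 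Because $\widehat\psi$ induces the same state as $\psi$ on $\mathcal{A}_{A_{c_0}}$ and hence on every $\mathcal{A}_{A_c}$ with $c\ge c_0$, one has $S_{\widehat\psi}(b)=S(b)$ for $b\ge c_0$, and for $c\ge c_0$ the infimum defining $M(c)$ is unchanged when $\psi$ is replaced by $\widehat\psi$.

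I would then apply Corollary~\ref{lemma2c} to $\widehat\psi$: for almost every cut $c$,
\begin{equation}
2\pi\langle\widehat\psi_s|P|\widehat\psi_s\rangle=-\partial S_{\widehat\psi}(c)+e^{-2\pi s}\,\partial\bar S_{\widehat\psi}(c),
\end{equation}
where $\widehat\psi_s$ denotes the flow of $\widehat\psi$ by the Connes cocycle of $\mathcal{A}_{A_c}$. For $c\ge c_0$ the leading constant equals $-\partial S(c)$. By the Section~\ref{sec:ant} analysis $\widehat\psi_s$ is a purification of $\widehat\psi|_{\mathcal{A}_{A_c}}=\psi|_{\mathcal{A}_{A_c}}$ with $\bar S_{\widehat\psi_s}(c)=\bar S_{\widehat\psi}(c)<\infty$, so it is admissible in the variational principle for $M(c)$. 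Sending $s\to+\infty$ kills the $e^{-2\pi s}$ term, giving $M(c)\le 2\pi\langle\widehat\psi_s|P|\widehat\psi_s\rangle\to -\partial S(c)$ for almost every $c\ge c_0$.

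Combining the two bounds yields $M(c)=-\partial S(c)$ for almost every $c\ge c_0$, and since $c_0\in(b_0,\infty)$ is arbitrary and a countable union of null sets is null, the equality holds for almost every $c>b_0$. The key technical input whose proof I have deferred is Lemma~\ref{lemma3}; granted it, the remainder of Wall's conjecture is pure bookkeeping with Lemma~\ref{lemma1}, Lemma~\ref{lemma2}, and Corollary~\ref{lemma2c}.
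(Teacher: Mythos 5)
Your proposal is correct and follows essentially the same route as the paper's proof: pass to the natural self-dual cone representative via Lemma~\ref{lemma3} and Lemma~\ref{lemma1} to make all relative entropies finite, obtain $M(c)\geq-\partial S(c)$ from the sum rule plus monotonicity applied to any admissible $\phi$, and saturate the bound with the Connes-cocycle flowed states via Corollary~\ref{lemma2c} in the limit $s\to\infty$. The only (harmless) cosmetic differences are that you anchor the cone representative at $c_0>b_0$ and patch with a countable union of null sets, where the paper anchors directly at $b_0$, and that you make explicit the WLOG restriction to $P_\phi<\infty$ in the lower bound.
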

\begin{proof}
From Lemma~\ref{lemma3} we can pass to a state in the natural self-dual cone $V_{\Omega;A_{b_0}}$ at $b=b_0$.
That is consider:
\begin{equation}
\big| \widehat{\psi}\big> = \left(\Theta_{\psi|\Omega;A_{b_0}'}\right)^\dagger \left| \psi \right> \end{equation}
We know $\widehat{\psi}$ has finite ANE, relative entropy and complementary relative entropy at $b=b_0$. We can then apply Lemma~\ref{lemma1} such that the relative entropy functions $S_{\widehat{\psi}}(b)$ and $\bar{S}_{\widehat{\psi}}(b)$ for this new state are finite for all $b \in \mathbb{R}$ and $S_{\widehat{\psi}}(b)$ agrees with the relative entropy of $\psi$ for $b \geq b_0$.  We thus redefine $\widehat{\psi} \rightarrow \psi$ and work with this state. 

For any $c> b_0$ we know there exists at least one state $\phi$ satisfying the assumptions that go into the infimum in \eqref{mc} ($\psi$ itself).
Such a state $\phi$ also satisfies the assumptions \eqref{assumpt} that go into Lemma~\ref{lemma1}. 
So we have the following estimate: 
\begin{equation}
2\pi \left< \phi \right| P \left| \phi \right> = \lim_{a \rightarrow 0^+} \frac{\bar{S}_\phi(c+a)-\bar{S}_\phi(c) -(S_\phi(c+a) -S_\phi(c))}{a}
\geq - \partial S(c)
\end{equation}
where $S_\phi(b) = S_{\rm rel}(\phi|\Omega;A_b')$ and $\bar{S}_\phi(b) = S_{\rm rel}(\phi|\Omega;A_b)$, the later of which agrees with $S(b)$ for $b \geq c$. We have applied monotonicity to $\bar{S}_\phi(c+a) \geq \bar{S}_\phi(c)$.  Note that $\partial S(c)$ exists since by assumption we are working around a point where the derivative exists, and although the derivative of $S_\phi(b)$ might not exists at $b=c$ its half sided derivative $\partial^+$ does exist and equals $\partial S(c)$.  We have:
\begin{equation}
\label{b1}
M(c) \geq - \partial S(c)
\end{equation}
We next aim to show that the bound in \eqref{b1} is saturated. We use the flowed state discussed in the previous section. 
We apply Lemma~\ref{lemma2} and Corollary~\ref{lemma2c} to:
\begin{equation}
\left| \psi_s \right> = (D \Omega: D \psi\,; A_c' )_s \left| \psi \right>
\end{equation}
where this state also satisfies the properties that go into the infimum of \eqref{mc}, since the action of the co-cycle leaves invariant expectation values of operators in $\mathcal{A}_{A_c}$ \eqref{sameexp}. It also leaves the relative entropies  $A_c$ and $A_c'$ invariant as can be seen from \eqref{ssnew} and \eqref{ssbnew}. The null energy is of course finite and equal to:
\begin{equation}
2\pi \left< \psi_s \right| P \left| \psi_s \right> =  - \partial S(c) + e^{-2\pi s} (2\pi P_\psi + \partial S(c) )
\end{equation}
thanks to Corollary~\ref{lemma2c}. Thus we have the estimate:
\begin{equation}
M(c) \leq \inf_s 2\pi \left< \psi_s \right| P \left| \psi_s \right> = - \partial S(c)
\end{equation}
which finishes the proof.
 \end{proof}

Note that we can slightly loosen the assumptions of this theorem by demanding that for a cut $A_{b_0}$ there is at least one purification with finite ANE. Then we can apply the theorem to compute relative entropies for $c\geq b_0$.
If we were to also demand that the input state $\psi$ has finite complementary relative entropy then considering Lemma~\ref{lemma1} we can now apply this theorem for all $b_0 \rightarrow -\infty$ since all relative entropies are then finite.

The assumption on the ANE is physically sensible for QFT
yet it would still be nice to relax this assumption, for example by showing that the infimum above always exists as long as the relative entropy of one cut $\mathcal{A}_{A_{b_0}}$ is finite. 
It seems reasonable that we should be able to show this by using the state $\widehat{\psi}$ in the natural self-dual cone associated to $\Omega$. However we have so far been unsuccessful here since it is hard to make progress if we don't assume the initial $\psi$ has finite ANE to begin with.

Once we have this theorem it is easy to prove a limited version of the QNEC in the following form:

\begin{theorem}[The Quantum Null Energy Condition (lite) ]
\label{thm:qnec}
For all vector states $\left| \psi \right> \in \mathcal{H}$ with finite averaged null energy and finite relative entropy for cuts $c> b_0$,
then:
\begin{equation}
\label{tointqnec}
\partial S(c+a) - \partial S(c) \geq 0 
\end{equation}
almost everywhere in $(a,c)$ with $a \geq 0$ and $c>b_0$.
\end{theorem}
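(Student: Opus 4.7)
The plan is to derive Theorem~\ref{thm:qnec} as a direct corollary of Theorem~\ref{thm:wall}: the variational identity $-\partial S(c) = M(c)$ converts the desired convexity statement into the comparison $M(c+a) \leq M(c)$, which I expect to obtain by exhibiting a flowed state as an admissible competitor at the shifted cut.

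Before invoking the variational formula at two different cuts, I would carry out a preliminary reduction to secure finiteness of complementary relative entropy. By Lemma~\ref{lemma3}, replacing $\psi$ with its natural self-dual cone representative $\widehat{\psi}$ at $b_0$ leaves $S(b)$ unchanged for $b \geq b_0$ while supplying finite complementary relative entropy at $b_0$; Lemma~\ref{lemma1} then propagates finiteness of $\bar{S}(b)$ to all $b \in \mathbb{R}$. This substitution is harmless because $\widehat{\psi}$ still has finite ANE and agrees with $\psi$ on every subalgebra $\mathcal{A}_{A_c}$ with $c \geq b_0$, so both the function $S(\cdot)$ and the admissibility conditions in \eqref{mc} are unaffected.

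Fix $c > b_0$ and $a \geq 0$ in the full-measure set on which both $\partial S(c)$ and $\partial S(c+a)$ exist (this set is full-measure because $S$ is monotone). I would then use the flowed state $|\psi_s\rangle = (D\Omega:D\psi\,;A_c')_s|\psi\rangle$ built from the cut $c$ as a competitor in the infimum defining $M(c+a)$. By \eqref{sameexp} it matches $\psi$ on $\mathcal{A}_{A_c}$ and therefore on the subalgebra $\mathcal{A}_{A_{c+a}}$, and by \eqref{ssbnew} its complementary relative entropy
\begin{equation*}
\bar{S}_s(c+a) = \bar{S}(c+a) + 2\pi(P_s - P_\psi)a
\end{equation*}
is finite thanks to the preliminary reduction, so $\psi_s$ is admissible. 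Combining Theorem~\ref{thm:wall} with Corollary~\ref{lemma2c} gives
\begin{equation*}
-\partial S(c+a) = M(c+a) \leq 2\pi\langle \psi_s|P|\psi_s\rangle = -\partial S(c) + e^{-2\pi s}\bigl(2\pi P_\psi + \partial S(c)\bigr),
\end{equation*}
and since $2\pi P_\psi + \partial S(c) \geq 0$ (this is $P_\psi \geq R$ from Lemma~\ref{lemma2}), sending $s \to +\infty$ yields $\partial S(c+a) \geq \partial S(c)$.

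The main obstacle I anticipate is precisely the bookkeeping around finiteness of $\bar{S}_s(c+a)$: without the preliminary passage to the natural self-dual cone via Lemma~\ref{lemma3}, the flowed state could fail to be admissible in \eqref{mc} at the shifted cut, and Theorem~\ref{thm:wall} would then produce no useful upper bound on $M(c+a)$.
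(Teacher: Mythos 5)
Your proof is correct, and it rests on the same pillar as the paper's: Theorem~\ref{thm:wall} applied at both cuts, reducing the claim to $M(c+a)\le M(c)$. The difference is only in how that inequality is obtained. The paper observes that every state admissible in the infimum \eqref{mc} at $c$ is automatically admissible at $c+a$ --- it matches $\psi$ on the smaller algebra $\mathcal{A}_{A_{c+a}}$, and Lemma~\ref{lemma1} gives finiteness of its complementary relative entropy there --- so the infimum at $c+a$ runs over a superset of states for the same functional $2\pi\left<\phi\right|P\left|\phi\right>$, and $M(c+a)\le M(c)$ follows with no further computation. You instead exhibit the explicit family $\psi_s$ (built at cut $c$) as competitors at $c+a$ and send $s\to\infty$, which requires re-invoking Corollary~\ref{lemma2c} and the formula for $P_s$; this is a concrete instantiation of the paper's abstract superset argument, since $\psi_s$ is precisely the minimizing sequence realizing $M(c)$. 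Both routes need the same preliminary reduction via Lemma~\ref{lemma3} and Lemma~\ref{lemma1} to secure finiteness of the complementary relative entropies (the paper performs it inside the proof of Theorem~\ref{thm:wall}; you make it explicit here, which is harmless since $\widehat{\psi}$ agrees with $\psi$ on $\mathcal{A}_{A_b}$ for $b\ge b_0$). Your remaining bookkeeping --- admissibility of $\psi_s$ at the shifted cut via \eqref{sameexp} and \eqref{ssbnew}, restriction to the full-measure set of $(a,c)$ where both one-sided quantities $\partial S(c)$ and $\partial S(c+a)$ exist, and the sign $2\pi P_\psi+\partial S(c)\ge 0$ from $R\le P_\psi$ in Lemma~\ref{lemma2} --- is sound.
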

\begin{proof} 
This result follows simply because the minimization in \eqref{mc} for $M(c+a)$, compared to that for $M(c)$, is over a super set of states $\phi$ for the same averaged null energy quantity. In particular any state that is included in the minimization for $M(c)$, via Lemma~\ref{lemma1}, has finite relative entropy and complement relative entropy for the algebra $\mathcal{A}_{A_{c+a}}$ as well as finite null energy so it also goes into the minimization for  $M(c+a)$.
\end{proof}

\begin{corollary} 
For all vector states $\left| \psi \right> \in \mathcal{H}$ with finite null energy and relative entropy for $c>b_0$ 
the relative entropy $S(b)$ is a convex function - for $(b_1,b_2)> b_0$:
\begin{equation}\label{convex}
S( b_1 t + b_2 (1-t)) \leq t S(b_1) + (1-t) S(b_2) \qquad 0 \leq t \leq 1
\end{equation}
This implies that the half sided derivatives of $S(b)$ exists everywhere and we have a refined estimate to Theorem~\ref{thm:wall}:
\begin{equation}
- \partial^+S(c) \leq  M(c) \leq - \partial^-S(c) 
\end{equation}
where $M(c)$ is defined in \eqref{mc}
\end{corollary}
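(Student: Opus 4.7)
The plan is to first establish convexity of $S$ via Theorem~\ref{thm:qnec}, and then to exploit the existence of one-sided derivatives of convex functions to upgrade the bound of Theorem~\ref{thm:wall} to the two-sided sandwich.

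For convexity, Theorem~\ref{thm:qnec} shows that $\partial S$ is non-decreasing on the full-measure subset of $(b_0,\infty)$ where it exists. By Lemma~\ref{lemma1}, $S$ is Lipschitz on compact subintervals, hence absolutely continuous and equal to the integral of its a.e.-defined derivative. A standard real-analysis result then gives that an absolutely continuous function whose derivative is non-decreasing almost everywhere is convex, establishing \eqref{convex}. Convex functions on open intervals have one-sided derivatives $\partial^\pm S(c)$ at every point, satisfying $\partial^- S(c)\leq \partial^+ S(c)$ with equality outside a countable set, which is exactly what is needed to refine the estimate.

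For the refined estimate, I would revisit the two inequalities in the proof of Theorem~\ref{thm:wall} at an arbitrary $c>b_0$. The lower bound $M(c)\geq -\partial S(c)$ was derived from a limit $a\to 0^+$ of finite differences, and so really produces $M(c)\geq -\partial^+ S(c)$ with no differentiability assumed. For the upper bound, I would again use $\phi=\psi_s$. Lemma~\ref{lemma2} determines $P_s=R+e^{-2\pi s}(P_\psi-R)$ from the analyticity/growth argument alone, independently of any behavior of $S$ at $c$. The monotonicity inequalities \eqref{boundm} and \eqref{boundp} now read, in one-sided form,
\begin{align*}
\partial^- S(c)+2\pi\bigl(P_\psi-P_s e^{2\pi s}\bigr)&\leq 0,\\
\partial^+ S(c)+2\pi P_s&\geq 0.
\end{align*}
Inserting the Lemma~\ref{lemma2} form of $P_s$ and taking $s\to-\infty$ in the first and $s\to+\infty$ in the second yields $-\partial^+ S(c)\leq 2\pi R\leq -\partial^- S(c)$. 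Combined with $M(c)\leq \inf_s 2\pi P_s=2\pi R$, which follows from using $\phi=\psi_s$ as a trial state in the infimum defining $M(c)$, this chains to give $-\partial^+ S(c)\leq M(c)\leq -\partial^- S(c)$.

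The main technical point to verify is that the flowed relative entropy formulas \eqref{ssnew}--\eqref{ssbnew}, and hence the one-sided versions of \eqref{boundm} and \eqref{boundp}, remain valid at a non-differentiable $c$. Inspection of their derivation shows they follow from Lemma~\ref{lemma1} (a finite-difference identity) together with cocycle invariance of expectation values \eqref{sameexp}, nowhere invoking differentiability of $S$ at $c$; Lemma~\ref{lemma2} similarly imposes no regularity of $S$. The genuinely new input beyond the proof of Theorem~\ref{thm:wall} is therefore just the observation that the existing argument is naturally one-sided, together with the convexity of $S$ which guarantees that both one-sided derivatives exist at every $c>b_0$.
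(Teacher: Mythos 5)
Your proposal is correct and follows essentially the same route as the paper: convexity is obtained by integrating the a.e.\ QNEC inequality using Lipschitz continuity (the paper phrases this via midpoint convexity, you via the standard absolutely-continuous/non-decreasing-derivative criterion, which is the same argument), and the refined estimate is obtained exactly as in the paper by taking one-sided versions of \eqref{boundm} and \eqref{boundp}, inserting $P_s = R + (P_\psi - R)e^{-2\pi s}$, and sending $s\to\mp\infty$ to sandwich $2\pi R$ between $-\partial^+S(c)$ and $-\partial^-S(c)$. Your closing observation that the flowed-state formulas and Lemma~\ref{lemma2} never invoke differentiability of $S$ at $c$ is precisely the point the paper makes when it notes the QNEC (hence one-sided differentiability) also applies to $\psi_s$.
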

\begin{proof}
We integrate \eqref{tointqnec}, and since the original function was Lipschitz continuous the fundamental theorem of calculus applies for the Lebesgue integral and we have:
\begin{equation}
S(c +a+a') - S(c+a)  - S(c+a') + S(c) \geq 0
\end{equation}
for all $a,a'\geq 0$ and all $c > b_0$.  Setting $a' = a$ we have:
\begin{equation}
S(c+a) \leq \frac{S( c + 2a) + S(c)}{2}
\end{equation}
which becomes \eqref{convex} for $t=1/2$, $c=b_1$ and $c+2a = b_2$.
That is $S(b)$ is mid-point convex. Continuity plus mid point convexity for all points $b_1,b_2$ implies the more general convexity statement  \eqref{convex} for all $t$. 

To find the improved estimate for $M(c)$ we reconsider our discussion of the ANE of the flowed state - Lemma~\ref{lemma2} and Corollary~\ref{lemma2c}, with the new knowledge that the one sided derivatives always exist. 
Note that the QNEC also applies to $\psi_s$ (the assumptions are satisfied for this state) so we also know the half sided derivatives of $S_s$ exist. 
That is for all $c$ we can replace \eqref{boundm} and \eqref{boundp} with:
\begin{align}
0& \leq - e^{2\pi s} \partial^-S_s(c)=- \partial^-S( c) + 2\pi ( e^{2\pi s}-1)R \mathop{\rightarrow}_{s \rightarrow -\infty}  - \partial^-S( c) - 2\pi R   \\
0 &\leq \partial^+\bar{S}_s(c)  = \partial^+ \bar{S}(c) + 2\pi (R - P_\psi) (1-e^{-2\pi s}) \mathop{\rightarrow}_{s \rightarrow \infty} \partial^+ S(c) + 2\pi R
\end{align}
so we learn that the equality \eqref{equal} is replaced by the inequalities:
\begin{equation}
- \partial^+S(c) \leq  2\pi R \leq - \partial^- S(c)
\end{equation}
We should replace the bound in \eqref{b1} by:
\begin{equation}
M(c) \geq - \partial^+ S(c)
\end{equation}
since we must approach from the side where the relative entropy is unchanged. The flowed state still gives an estimate:
\begin{equation}
M(c) \leq 2\pi R \implies  - \partial^+ S(c) \leq  M(c) \leq  - \partial^- S(c)
\end{equation} 
which is the desired result. The QNEC that follows from this is:
\begin{equation}
M(c+a) \leq M(c) \implies  \partial^+ S(c+a) - \partial^- S(c) \geq 0
\end{equation} 

\end{proof}

\subsection{The QNEC from the ANEC}

There are several alternative routes to the QNEC. For example if we examine monotonicity of relative entropy for $b \geq c$ in \eqref{dssbar} and take the large $s$ limit:
\begin{equation}
0 \leq \partial \bar{S}_s(b) \mathop{\rightarrow}_{s \rightarrow \infty} \partial \bar{S}(b) - \partial \bar{S}(c)  = \partial S(b) - \partial S(c)
\end{equation}
where we are assuming all derivatives exists for simplicity.
Alternatively we can apply Lemma~\ref{lemma2} twice to two different flows. For example consider the state:
\begin{equation}
\big| \psi_{s_{c'}; -s_{b}} \big> \equiv (D \Omega: D \psi\,; A_b )_{-s}  (D \Omega: D \psi\,; A_c )'_s  \left| \psi \right>
\end{equation}
for $b>c$. We can use our results there to compute the ANE of this state:
\begin{equation}
\label{p2s}
2\pi P_{s_{c'}; -s_b} = 2\pi P_\psi e^{-2\pi s}+ (1 - e^{-2\pi s}) (\partial S(b) - \partial S(c) ) 
\end{equation}
which limits to the QNEC at large $s$.

Note that this later result connects with a previous proof of the QNEC. Consider the state:
\begin{equation}
\left| \psi_Q \right> = \Delta_{\Omega;A_{(b+c)/2}}^{is} \big| \psi_{s_{c'}; -s_{b}} \big> 
\end{equation}
where for symmetry we have added an extra boost around the mid point of the two cuts.
In this state, $\psi_Q$, consider a correlation function of two operators from $\mathcal{O}_L \in \mathcal{A}_{A_c}'$ and the other from  $\mathcal{O}_R \in \mathcal{A}_{A_b}$ for $b>c$.
One finds:
\begin{equation}
\left< \psi_Q \right| \mathcal{O}_L \mathcal{O}_R \left| \psi_Q \right>
 = \left< \psi \right| \mathcal{O}_L(s) \Delta_{\psi;A_c}^{is} \Delta_{\psi;A_b}^{-is} \mathcal{O}_R(s) \left| \psi \right> 
\end{equation}
where 
\begin{align}
\mathcal{O}_{L}(s) = V \mathcal {O}_L V^\dagger  \qquad \mathcal{O}_{R}(s) = V^\dagger \mathcal {O}_L V \qquad V = U_{ (b-c) (e^{-2\pi s} -1)/2}
\end{align}
 and we have used the algebra of half sided modular inclusions. 
This correlator was the starting point for the proof of the QNEC in \cite{Balakrishnan:2017bjg}.
We can easily compute the ANE in this state now. The boost simply amplifies the ANE in \eqref{p2s}
\begin{equation}
2\pi \left< \psi_Q \right| P \left| \psi_Q \right> = 2\pi  P_\psi + (e^{2\pi s}-1) (\partial S(b) - \partial S(c) ) 
\end{equation}
which reproduces the large $s$ results in \cite{Balakrishnan:2017bjg}. In other words the results in \cite{Balakrishnan:2017bjg} can simply be interpreted as extracting the ANEC, using the methods of \cite{Hartman:2016lgu}, but in the state $\left| \psi_Q \right>$.
Proving positivity seems to work slightly differently but we now know that it simply follows from the ANEC. 

\subsection{The QNEC and Strong Superadditivity of Relative Entropy}

We now give a more complete discussion of the QNEC for Rindler cuts in Minkowski space.
We consider states $\psi$ that have finite relative entropy for the undeformed cut $R$ and finite null energy for the generator of null translations:
\begin{equation}
S_{\rm rel}(\psi|\Omega; R) < \infty\,, \quad
\left< \psi \right| P_u \left| \psi \right> < \infty
\end{equation}
where $P_u = \int_{v=0} T_{uu}$. We then only ever consider wiggly cuts defined by continuous functions of the coordinates $y$ along the entangling surface that do not diverge.\footnote{For a CFT a more general discussion is possible, but we do not consider this here.} That is $A(y)$ is such that $ \sup_y A(y) <\infty, \inf_y A(y) > -\infty$ and similarly for $B(y)$ etc. 

Since these functions never diverge any null energy that we define with respect to these wiggly cuts  must be finite because:
\begin{align}
&P_{B-A} = \int_{v=0} (B(y) - A(y) ) T_{uu}\,, \qquad B(y) \geq A(y) \\ 
&\implies [P_{B-A},P_u ] = 0 \,, \,\, \& \,\, P_{B-A} \leq  P_u \sup_y(B(y) - A(y) ) 
\end{align}
which implies that:
\begin{equation}
\left< \psi \right| P_{B-A} \left| \psi \right> \leq \left< \psi \right| P_u \left| \psi \right>  \sup_y(B(y) - A(y) ) 
\end{equation}
so all the null energies that we can define are finite. Similarly by monotonicity of relative entropy, any cut that lies entirely inside the Rindler cut $R$: $A(y) \geq 0, B(y) \geq 0$ etc. implies that the relative entropies of these wiggly cuts are also finite. 


We can now state the following more general QNEC. For $B(y) \geq  A(y)$ and $\Sigma(y) \geq 0$
\begin{equation}
\label{qnecfull}
\partial_\lambda^+ \left. S_{\rm rel}(\psi|\Omega;B+ \lambda \Sigma) \right|_{\lambda = 0}-\partial_\lambda^- \left. S_{\rm rel}(\psi|\Omega;A + \lambda \Sigma) \right|_{\lambda = 0} \geq 0
\end{equation}
This follows because the $\lambda$ shape variations can be computed using Theorem~\ref{thm:wall} which then involves the same positive null energy operator $P_\Sigma$ in both cases. The minimization is then over a superset for the $B$ cut and the bound in \eqref{qnecfull} follows. 

We can also prove the strong super-additivity of relative entropy that was first discussed in \cite{Casini:2017roe}. The advantage gained here is that we can derive this result without ever mentioning entanglement entropy which is UV divergent. Given two potentially intersecting cuts $A,B$ we can consider the following one parameter family of cuts $0 \leq \lambda \leq 1$:
\begin{equation}
\alpha_\lambda =  A \cup B + \lambda (B-A\cup B)  \qquad \beta_\lambda = A + \lambda (A \cap B - A)
\end{equation}
where $A \cap B(y) = \max\{ A(y),B(y) \}$ and $A \cup B(y) = \min\{ A(y),B(y) \}$.
Now we have:
\begin{equation}
 A\cap B - A = B - A \cup B = \Sigma \geq 0
\end{equation}
and $\beta_\lambda \geq \alpha_\lambda$ 
so we can apply the more general QNEC for  $0 \leq \lambda \leq 1$:
\begin{equation}
\partial_\lambda S_{\rm rel}(\psi|\Omega;  \beta_\lambda) - \partial_\lambda S_{\rm rel}(\psi|\Omega;  \alpha_\lambda) \geq 0
\end{equation} 
almost everywhere as a function of $\lambda$. Integrating this (which is allowed by Lipschitz continuity of the underlying relative entropy) we have:
\begin{equation}
 S_{\rm rel}(\psi|\Omega; A \cup B) + S_{\rm rel}(\psi|\Omega; A \cap B)  \geq S_{\rm rel}(\psi|\Omega; B) + S_{\rm rel}(\psi|\Omega; A) 
\end{equation}
which is the strong superadditivity statement for relative entropy. 

\section{Null Energy of the Flowed State}
\label{sec:nullant}

We aim to prove Lemma~\ref{lemma2} in this section. That is we would like to find the form of the null energy of the modular flowed state. This discussion takes inspiration from some of the general theorems that go into the theory of half sided modular inclusions, see in particular \cite{borchers1995use} and \cite{araki2005extension}.
We split this discussion into two parts. Firstly we consider a special dense set of states for which the translation operator acts analytically everywhere. We first prove Lemma~\ref{lemma2} for this set of states. We then use a continuity argument for a general state. 

\subsection{Entire states}

In this part we will work with the following nice set of vectors. 
This set of states will be dense in the Hilbert space and a continuity argument will establish the more general assertions. 
They are entire vectors for the modular translation group $U_a$. Entirety is the statement that:
\begin{equation}
U_a \left| \psi_\Lambda \right>
\end{equation}
is a vector that varies analytically with $a$ in the entire complex-$a$ plane. 
It is clear that such vectors are dense since we can construct them from some arbitrary vector $\psi$ via:
\begin{equation}
\left| \psi_\Lambda \right> = \Pi_\Lambda \left| \psi \right>
\end{equation}
where $\Pi_\Lambda$ is a projection operator onto a compact part $0 \leq \lambda \leq \Lambda$ of the spectrum of $P$:
\begin{equation}
\Pi_\Lambda = \int_0^\Lambda  d E_\lambda(P)
\end{equation} 
where $P=\int_0^\infty \lambda d E_\lambda(P)$ is the spectral resolution of the unbounded postive operator $P$. Here $\Lambda$ can be arbitrarily large, but compactness means that these states are entire vectors. Indeed these vectors are of exponential type with:
\begin{equation}
\label{exptype}
|| U_a \left| \psi_\Lambda \right> || <  \exp(2 \Lambda |a| )
\end{equation}
They provide good approximations to $\psi$. Consider a sequence of increasing $\Lambda_n$ which limits to $\infty$. Let $\psi_n \equiv Z_n \Pi_{\Lambda_n} \psi$ where $Z_n =( \left< \psi \right| \Pi_{\Lambda_n} \left| \psi \right>)^{-1/2}$ . Then we have:
\begin{equation}
\lim_{n} ||  \left| \psi \right> - \left| \psi_n \right> || = 0
\end{equation}
Thus the set of states $\psi_\Lambda$ with $\Lambda < \infty$ is dense  in $ \mathcal{H}$.
Note that the null energy itself varies continuously for these states:
\begin{equation}
\label{pcont}
\lim_{n} \left< \psi_n \right| P \left| \psi_n \right>
= \lim_{n} Z_n   \int_0^{\Lambda_n}\lambda \left< \psi \right| dE_\lambda \left| \psi \right> = \left< \psi \right| P \left| \psi \right>
\end{equation}
which converges assuming that the initial $\psi$ has finite null energy. When we study properties of $\psi_n$ we will often drop the subscript on $n$ and simply assume that $\psi$ is entire. We will return to labeling these states correctly when we make the continuity argument.

In this section, to lighten the notation we will often use the following shorthand for the relative modular and associated operators:
\begin{align}
\Delta_{\psi|\Omega} &= \Delta \qquad \Delta_{\psi|\Omega}' = \Delta' \qquad \Delta_{\Omega|\psi} = (\Delta')^{-1} \\  J_{\psi|\Omega} &= J \qquad  J_{\Omega|\psi} = J^\dagger  \qquad
\Theta_{\psi|\Omega}' = \Theta'  \qquad \Theta_{\psi|\Omega} = \Theta 
\end{align}
where the modular conjugation operators $J$ and the corresponding ``conjugation cocycles'' $\Theta$ are discussed and defined in Appendix~\ref{app:mod}. As usual, we have dropped the region label $C=A_c$ since in this section the algebra will either be $\mathcal{A}_C$ or its complement and we denote the later with a prime on the modular operator. 

Let us consider the following ``structure function'':
\begin{equation}
\label{def:struct}
g(s,\eta) \equiv \left< \psi_{s} \right| U_{-a} \left| \psi_{s} \right>
= \left< \psi \right| (\Delta')^{-is} U_{-a}  (\Delta')^{is} \left| \psi \right>
= \left< \psi \right| \Delta^{-is} U_{-a_s} \Delta^{is} \left| \psi \right> \,
\end{equation}
where:
\begin{equation}
a = \epsilon e^{2\pi \eta} \,, \quad a_s = e^{-2\pi s } a = \epsilon e^{2\pi(\eta-s)}
\end{equation}
and we will consider the analytic properties of $g$ as a function of the two complex variables $\eta,s$ for fixed real positive $\epsilon$. The second expression in \eqref{def:struct} used the algebra of half-sided modular inclusions \eqref{hsmi:algebra}.
The eventual goal will be to send $\epsilon \rightarrow 0$.
 In particular if we can show that $P_s < \infty$ then we can extract the null energy via:
\begin{equation}
P_{s}= \lim_{\epsilon \rightarrow 0}\frac{ g(s, 0)-1 }{ i \epsilon} \,, \qquad s \in \mathbb{R}
\end{equation}
and this is the reason we study this function. 
So far we have defined $g(s,\eta)$ for real $s,a$ but we will now explore analytic continuations of this function. The utility of complexifying $\eta$ will become clear later.

To find the analytic continuation of $g$ we define the following vector valued holomorphic/anti-holomorphic functions via their inner product with a dense set of states in the Hilbert space:
\begin{align}
\left| \Gamma_I(s,\eta) \right> &\,:\, \left( c' \left| \Omega \right>, \left| \Gamma_I(s,\eta) \right> \right)
\equiv \left( \Delta^{ i s^\star} c' \left| \Omega \right>, U_{-a_s} \Delta^{is} \left| \psi \right>  \right)\\
\left| \Gamma_{II}(s^\star,\eta^\star) \right> &\,:\, \left( \left| \Gamma_{II}(s^\star,\eta^\star) \right>, c' \left| \Omega \right>  \right)
\equiv \left( \Delta^{ i s^\star}  \left| \psi \right> , U_{-a_s} \Delta^{is}  c' \left| \Omega \right>\right) \\
\left| \Gamma_{III}(s^\star,\eta^\star) \right> &\,:\, \left( \left| \Gamma_{III}(s^\star,\eta^\star) \right>, c' \left| \Omega \right>  \right)
\equiv \left( (\Delta')^{  i s^\star} S'_{\Omega|\psi} ( c'\left| \Omega \right>), U_{-a} (\Delta')^{is} \left| \Omega \right> \right) \\
\left| \Gamma_{IV}(s,\eta) \right> &\,:\, \left( c' \left| \Omega \right>, \left| \Gamma_{IV}(s,\eta) \right> \right)
\equiv \left( (\Delta')^{  i s^\star} \left| \Omega \right>, U_{-a} (\Delta')^{is}S'_{\Omega|\psi}( c' \left|\Omega \right> ) \right)
\end{align}
where recall that $S_{\Omega|\psi}' = (\Delta')^{-1/2} J$. These vectors have the following properties:

\begin{lemma}
\label{lemma4}
The above vectors, can be extended to well defined vector valued holomorphic/anti-holomorphic functions on the Hilbert space in their respective open convex tube region $(s,\eta) \in T_{I,II,III,IV}$ defined via $-\infty <{\rm Re} (\eta,s) < \infty$ and the imaginary part living in their respective triangles with corners at:
\begin{align}
T_I &\,: \, \left( {\rm Im s} ,{\rm Im \eta} \right) = \{ (0,1/2), (0, 0), (-1/2, 0) \} \\
T_{II} &\,: \, \left( {\rm Im s} ,{\rm Im \eta} \right) = \{ (0,1/2), (0, 0), (1/2, 1/2) \} \\
T_{III} &\,: \, \left( {\rm Im s} ,{\rm Im \eta} \right) = \{ (1/2,1/2), (1/2,0), (0,0) \} \\
T_{IV} &\,: \, \left( {\rm Im s} ,{\rm Im \eta} \right) = \{ (-1/2,1/2), (-1/2, 0), (0, 1/2) \} 
\end{align}
The vectors are weakly continuous on the closure of these triangles and holomorphic/anti-holomorphic along the (one parameter) set of complex strip sub-regions based on each side of the triangle. The vectors are thus strongly continuous on the domain of holomorphy.\footnote{Recall that strong continuity of a vector uses the Hilbert space norm and weak continuity demands that the inner product with any fixed vector is continuous. A weakly continuous vector valued holomorphic function can be shown to be strongly continuous via the Cauchy integral formula.} The norms of the vectors are bounded by $1$ in the closure of their respective domains of holomorphy. See Figure~\ref{fig:triangles}.
\end{lemma}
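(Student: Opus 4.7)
The plan is to construct each $\Gamma_X$ weakly, by specifying its inner products against the dense set of test vectors $\{c'\Omega:c'\in\mathcal{A}_C'\}$ and then invoking the Riesz representation theorem; this avoids having to track domains of the unbounded factors $\Delta^{\pm is}$ at intermediate steps. Three elementary analyticity inputs are enough. First, by the spectral theorem applied to the positive self-adjoint $\Delta=\Delta_{\psi|\Omega}$, the map $z\mapsto\Delta^{iz}\Omega$ extends to a bounded norm-$\leq 1$ holomorphic $\mathcal{H}$-valued function on the closed strip $0\leq{\rm Im}\,z\leq 1/2$, using the interpolating identity $J_{\psi|\Omega}\Delta_{\psi|\Omega}^{1/2}\Omega=\psi$; analogous statements hold for $(\Delta')^{iz}\Omega$ and for the action on the other test vectors that appear in $\Gamma_{II,III,IV}$. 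Second, by Definition~\ref{def1}(c), $U_{-a}$ is an sot-continuous contraction on ${\rm Im}\,a\geq 0$, i.e.\ ${\rm Im}\,\eta\in[0,1/2]$. Third, the same statement applied to the rescaled argument $a_s=e^{-2\pi s}a$ gives analyticity and contractivity of $U_{-a_s}$ on ${\rm Im}(\eta-s)\in[0,1/2]$. The three edges of each triangle $T_X$ are precisely the loci where one of these three constraints is saturated.

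For $T_I$, for example, the three edges are ${\rm Im}\,s=0$ (where $\Delta^{is}$ is unitary), ${\rm Im}\,\eta=0$ (where $a_s$ is real positive and $U_{-a_s}$ is unitary), and ${\rm Im}(\eta-s)=1/2$ (where $-a_s$ is real positive and $U_{-a_s}$ is again unitary). The four triangles $T_I,\ldots,T_{IV}$ arise from the four ways of distributing the analytic continuations between bra and ket and between $\Delta$ and $\Delta'$, using the identity $(\Delta_{\Omega|\psi}')^{-is}=\Delta_{\psi|\Omega}^{is}$ from Appendix~\ref{app:mod} to convert between primed and unprimed versions; this is also why $\Gamma_{III},\Gamma_{IV}$ involve the operator $S'_{\Omega|\psi}=(\Delta')^{-1/2}J$, which effects the switch at the level of the bounded vectors being paired. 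With this setup the main steps are: (i) verify that on the real axis all four matrix elements reduce to the same scalar $g(s,\eta)$, a product of unitaries paired with unit vectors; (ii) for fixed real $\eta$, use the spectral theorem to get $s$-holomorphy on the appropriate open strip with boundary norms $\leq\|c'\Omega\|$; (iii) for fixed real $s$, use Definition~\ref{def1}(c) applied to $U_{-a_s}$ to get $\eta$-holomorphy; (iv) apply Hartogs' theorem to upgrade separate to joint holomorphy on $\mathrm{int}\,T_X$.

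The main obstacle is the uniform norm bound $\|\Gamma_X\|\leq 1$ on the closed triangle. On the two edges where a single modular or translation factor is unitary, the bound is immediate from the modular-strip estimates combined with that unitarity. The hypotenuse ${\rm Im}(\eta-s)=1/2$ (and its analogues for the other triangles) is subtler because two continuations happen simultaneously; I would handle it by observing that on the hypotenuse $-a_s$ is real positive so $U_{-a_s}$ reduces to a unitary, collapsing the estimate there to a one-variable modular-strip bound in $s$ alone. With the norm bounded by $\|c'\Omega\|$ on all three edges and the triangle convex, the Phragm\'en--Lindel\"of principle (maximum modulus for bounded holomorphic functions on convex tubes) extends the bound to $\mathrm{int}\,T_X$. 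Riesz representation then produces $\Gamma_X$ with $\|\Gamma_X\|\leq 1$, strong holomorphy on the interior follows from the Cauchy integral formula applied to the weakly holomorphic uniformly bounded vector-valued function, and weak continuity up to the closure is inherited from the sot-continuity of $U_{-a}$ on the closed half plane together with strong continuity of $\Delta^{iz}\Omega$ on the closed modular strip.
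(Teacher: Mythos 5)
Your overall architecture --- defining each $\Gamma_X$ weakly against the dense set $c'\left|\Omega\right>$, establishing holomorphy one variable at a time plus Hartogs, propagating boundary bounds into the tube by Phragm\'en--Lindel\"of, and finishing with a Riesz-representation/Cauchy-sequence argument --- is exactly the paper's. The gap is in the step you yourself flag as the main obstacle: the bound $|G_X(s,\eta)|\leq \|c'\Omega\|$, \emph{uniform in the test vector}, on the boundary of the tube. Your claim that this is ``immediate from the modular-strip estimates combined with unitarity'' on two edges, and reduces to ``a one-variable modular-strip bound in $s$ alone'' on the hypotenuse, fails. Take $T_I$, where $G_I(s,\eta)=\left(\Delta^{is^\star}c'\Omega,\,U_{-a_s}\Delta^{is}\psi\right)$ with $\Delta=\Delta_{\psi|\Omega}$. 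Whenever ${\rm Im}\,s=-\theta$ with $\theta\in(0,1/2]$ --- which happens on the edge ${\rm Im}\,\eta=0$, on the hypotenuse, and at the corner $(-1/2,0)$ --- the bra factor is $\Delta^{it-\theta}c'\Omega$, a \emph{negative} real power of the relative modular operator. By \eqref{compmod}, $\Delta_{\psi|\Omega}^{-1/2}=(\Delta'_{\Omega|\psi})^{1/2}$, and $S'_{\Omega|\psi}c'\Omega=(c')^\dagger\psi$ gives $\|\Delta_{\psi|\Omega}^{-1/2}c'\Omega\|=\|(c')^\dagger\psi\|$, which is controlled by the action of $c'$ on $\psi$, not on $\Omega$; there is no general inequality bounding $\|(c')^\dagger\psi\|$ by $\|c'\Omega\|$. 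So unitarity of $U_{-a_s}$ on the hypotenuse does not ``collapse'' the estimate to anything useful: you still get a bound uniform in $(s,\eta)$ but proportional to $\|c'\psi\|$ (or $\|c'\|$), which does not define a bounded anti-linear functional on $\mathcal{H}$ and hence cannot yield $\|\Gamma_I\|\leq 1$ via Riesz. (Two smaller slips: on the edge ${\rm Im}\,\eta=0$ of $T_I$ the argument $a_s=\epsilon e^{2\pi(\eta-s)}$ is \emph{not} real for ${\rm Im}\,s\in(-1/2,0)$, so $U_{-a_s}$ is merely a contraction there, not a unitary; and the interpolating identity should read $J_{\psi|\Omega}\Delta_{\psi|\Omega}^{1/2}\psi=\pi(\psi)\Omega$, not $J\Delta^{1/2}\Omega=\psi$.)

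The missing idea is the computation the paper performs at the corner $s=t-i/2$, $\eta$ real, Eq.~\eqref{lastmany}: there $\Delta^{is}\psi=\Delta^{it}J_{\Omega|\psi}\pi(\psi)\Omega=\Delta^{it}\Theta'\Omega$, and after inserting the vacuum-preserving operators $U_{-a_t}$ and $\Delta_{\Omega}^{-it}$ one recognizes $\alpha'=U_{a_t}\Delta^{it}\Theta'\Delta_{\Omega}^{-it}U_{-a_t}\in\mathcal{A}'_{C_{-a_t}}\subset\mathcal{A}_C'$, so the offending $\Delta^{-1/2}$ from the bra can be absorbed via the relation $(\Delta'_{\Omega|\psi})^{1/2}\alpha'\Omega=J'(\alpha')^\dagger\psi$, leaving only partial isometries acting on $c'\left|\Omega\right>$ and hence the bound \eqref{aside} by $\|c'\Omega\|$. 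With that corner bound and the easy edge bound \eqref{bside} at ${\rm Im}\,s=0$, the paper sweeps the triangle with the one-parameter family of substrips $\eta=h+\kappa(s+i/2)$ and applies Phragm\'en--Lindel\"of on each; the hypotenuse and the edge ${\rm Im}\,\eta=0$ are \emph{outputs} of this interpolation, not inputs. Your proposal needs this cocycle/half-sided-inclusion manipulation (or an equivalent); without it the uniform bound, and with it the extension of $\Gamma_X$ to a norm-one vector on all of $\mathcal{H}$, does not follow.
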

\begin{figure}[h!]
\centering 
\includegraphics[width=.8\textwidth]{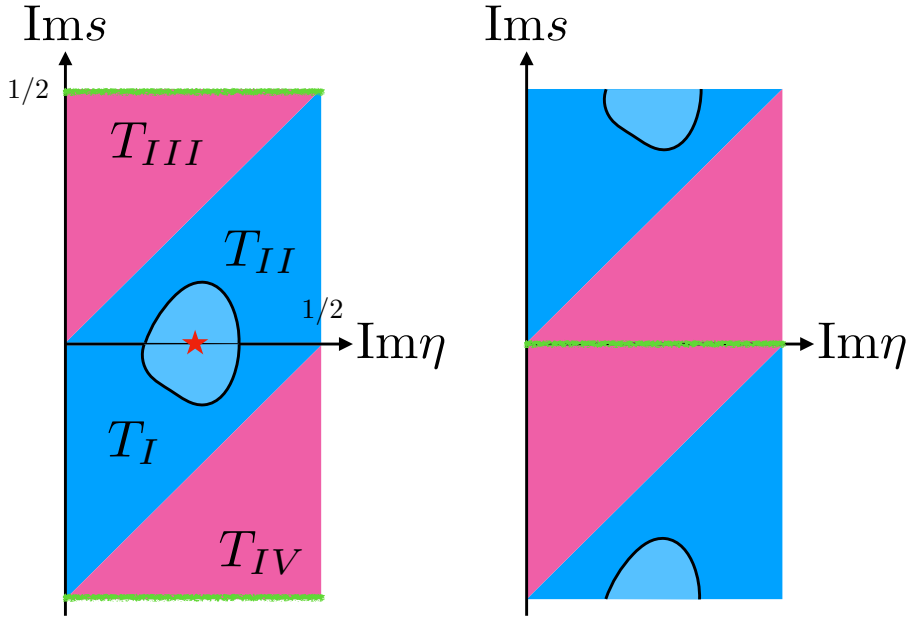}
\caption{ \label{fig:triangles} Domains of holomorphy for the various vectors $\Gamma_{I,II,III,IV}$ discussed in Lemma~\ref{lemma4}. Shown are the regions in the imaginary plane upon which the complex tube regions are based with $-\infty< {\rm Re} (\eta,s) < \infty$. We have also used different colors to show where the resulting function $g_\epsilon(s,\eta)$ defined via Lemma~\ref{lemma5} (see \eqref{gall} and \eqref{geps}) is bounded (blue) or not uniformly bounded (pink). The function is still bounded on compact subsets. The star marks a particularly well behaved point for $g_\epsilon$ since the function is real and monotonic as a function of $\epsilon$. If one identifies the top and bottom via $s \equiv s + i$ then the green lines become branch cuts. The right figure is for $q_\epsilon(\eta,s)$ defined in the proof of Lemma~\ref{lemma7}, see \eqref{qeps}. As $\epsilon \rightarrow 0$ the left and right functions are related in such away that they must become a periodic function under $s \rightarrow s + i$. }
\end{figure}

\begin{proof}
Firstly it is useful to note that the triangular regions are distinguished by the following condition:
\begin{equation}
T_{I,II}: {\rm Im}(a_s) > 0 \qquad T_{III,IV}: {\rm Im} (a_s) < 0
\end{equation}

We now work out explicitly the $\Gamma_{I}$ case and quickly sketch the other cases which follow the same procedure.  We follow the general strategy of Araki \cite{araki1973relative,araki1982positive}, see for example Appendix~A.2 of \cite{Witten:2018zxz}. Consider $(s,\eta) \in T_I$ and define:
\begin{equation}
G_I(s,\eta) = \left( \Delta^{i s^\star} c' \left| \Omega \right>, U_{-a_s} \Delta^{is} \left| \psi \right> \right)
\end{equation} 
This function is holomorphic in $T_I$ because $U_{-a_s}$ is a holomorphic and bounded operator there, $|| U_{ -a_s }|| \leq 1$, and also because the vectors $\Delta^{is^\star} c'  \Omega $/$\Delta^{i s} \psi$  vary anti-holomorphically/ holomorphically in the strip $-1/2 < {\rm Im}(s) <0$ which is a standard result of Tomita-Takesaki theory applied to the relative modular operators.\footnote{This works for a domain twice the size of $T_I$, however the necessary bound below, as far as we are aware, does not extend beyond $T_I$.} Note that $G_I$ is continuous on the closure of $T_I$ due to standard results in Tomita-Takesaki theory and the fact that $U_{-a_s}$ is continuous in the strong operator topology there.
We also have the bound
\begin{equation}
|G_I(t+ i \theta, \eta)| \leq\,\,  || \Delta^{\theta} c' \left| \Omega \right> ||  \,\,  || \Delta^{-\theta} \left| \psi \right> ||   \qquad -1/2 \leq \theta \leq 0
\end{equation}
which is uniform  as a function of $(s,\eta) \in \bar{\Gamma}_I$ (the closure of the trangle) but may not be uniform as a function of the state $ c' \left| \Omega \right>$ (after dividing by $|| c' \left| \Omega \right> ||$.) We will give an improved bound next where this is the case.

Note that $G_I$ is uniformly bounded at the following edge of the triangle:
\begin{equation}
\label{bside}
|G_I(t,\eta)| \leq   || \pi'(\psi) c' \left| \Omega \right> || \leq  || c' \left| \Omega \right> || \qquad t \in \mathbb{R}  \,, \, 
\eta \in \bar{S}(0,1/2)
\end{equation}
since  $|| U_{-a_s }|| \leq 1$. We have used the fact that $\Delta^{is-is} = \pi'(\psi)$ and $\pi'(\psi)\leq 1$. We have defined the complex strip:
\begin{equation}
z\in S(a,b): \quad a< {\rm Im} z< b\,, \quad -\infty < {\rm Re} z < \infty
\end{equation}
and where $\bar{S}$ includes the boundaries: $a \leq {\rm Im} z \leq b$.

Along $s= t- i/2$, for real $t$, we can compute:
\begin{align}
G_I(t-i/2,\eta) &= \left( \Delta^{it-1/2} c' \left| \Omega \right>, U_{a_t} \Delta^{it} J_{\Omega|\psi} \pi(\psi) \left| \Omega \right> \right) \\
& = \left( \Delta^{it-1/2} c' \left| \Omega \right>, U_{a_t} \Delta^{it}  \Theta'  \left| \Omega \right> \right) \\& = \left( \Delta^{it-1/2} c' \left| \Omega \right>, U_{a_t} \Delta^{it} \Theta' \Delta_\Omega^{-it} U_{-a_t} \left| \Omega \right> \right) \\ & = \left( \Delta^{it-1/2} c' \left| \Omega \right>, \alpha' \left| \Omega \right> \right)  \\
& = \left( \Delta^{it} c' \left| \Omega \right>, (\Delta_{\Omega|\psi}')^{1/2} \alpha' \left| \Omega \right> \right) \\
& = \left( \Delta^{it} c' \left| \Omega \right>, J' (\alpha')^\dagger \left| \psi \right> \right)  
= \left( (\alpha')^\dagger \left| \psi \right>, J \Delta^{it} c' \left| \Omega \right> \right)\\
&=\left< \psi \right| U_{a_t} \Delta^{it} \Theta' U_{-a} (\Delta_{\Omega}')^{it} (\Delta')^{-it}
J ( c' \left| \Omega \right> ) 
\label{lastmany}
\end{align}
where in the first line we used the fact that the support of $J_{\Omega|\psi}$ is $\pi(\psi)$ so we can just drop the projector. In the third line we inserted a translation and a boost which leave the vacuum invariant. In the fourth line we defined:
\begin{equation}
\alpha' = U_{a_t} \Delta^{it} \Theta' \Delta_{\Omega}^{-it} U_{-a_t} \in \mathcal{A}'_{C_{-a_t}} \subset \mathcal{A}_C'
\end{equation}
and we used the fact that $\alpha' \in \mathcal{A}_C'$ so we could pass the $\Delta^{-1/2}$ to the right as we did in line 5 above. In line 6 we used $J' = J^\dagger$ (see \eqref{compmod}) along with the definition of the hermitian conjugate of an anti-linear operator and in the last line we used \eqref{jimport}.


The result is again bounded since these operators are partial isometries or anti-linear equivalents giving:
\begin{equation}
\label{aside}
|G_I(t-i/2,\eta)| \leq   || c' \left| \Omega \right> || \qquad t \in \mathbb{R} \,, \, a >0
\end{equation}
where we have droped various occurrences of support projectors using $\pi(\psi),\pi(\psi') \leq 1$.
The bound in \eqref{aside} applies to one of the corners of the triangle, and \eqref{bside} applies along the \emph{opposite} edge of the triangle. We can thus consider the following two real parameter set of complex strips which sit between one corner of the triangle and the opposite edge:
\begin{equation}
F^{(h,\kappa)}(s) \equiv G_I(s, h+\kappa(s+i/2) )
\qquad s\in S(-1/2,0) \,,\,\, h,\kappa \in \mathbb{R}
\end{equation}
 where the resulting function of $s$ is holomorphic. We consider $0 \leq \kappa \leq 1$ and in particular $\kappa$ labels the point of intersection with the top edge of the triangle: at ${\rm Im} s  =0$ we have ${\rm Im} \eta = \kappa/2$.
The function $F^{(h,\kappa)}(s)$, for fixed $(h,\kappa)$, is bounded by $\leq   || c' \left| \Omega \right> ||$ at the edges of the strip. Then, via the Phragm\'{e}n-Lindel\"{o}f principle, this later bound applies inside the $s$-strip. 
This applies for all values of $h \in \mathbb{R},\, 0\leq \kappa \leq 1$ and so it applies everywhere in the tube region $\Gamma_I$ including at the boundaries. See Figure~\ref{fig:substrips}.

\begin{figure}[h!]
\centering 
\includegraphics[width=.4\textwidth]{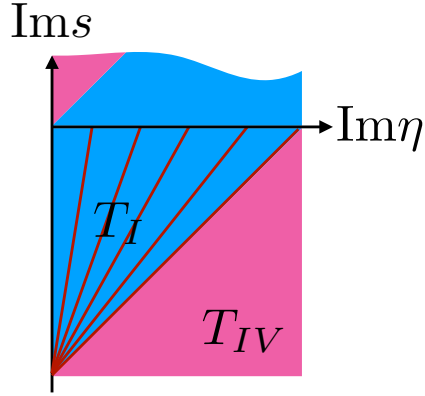}
\caption{ \label{fig:substrips} Substrips of $T_I$ parameterized by $\kappa$ and $h$ and defined via $\eta = h+\kappa(s+i/2)$ with $s\in S(-1/2,0)$. The function $F^{(h,\kappa)}(s)$ on this strip is uniformly bounded by $|| c'\left| \Omega \right> ||$ on the edges so it is uniformly bounded in the bulk.}
\end{figure}

Thus we may interpret $G_I(s,\eta)$ as a bounded anti-linear functional on a dense set of states in the Hilbert space. Such a functional can be extended to the full Hilbert space as follows. Consider a  sequence of states that converges $c_n' \left| \Omega \right> \rightarrow \left| \phi \right>$ in the Hilbert space norm. Then we have a convergent Cauchy sequence of functions: $G_{I}^{(c_n' \Omega)}(s,\eta)$ 
\begin{equation}
\label{csnm}
| G_{I}^{(c_n' \Omega)}(s,\eta) -  G_{I}^{(c_m' \Omega)}(s,\eta) | \leq || c_n' \left| \Omega\right> - c_m' \left| \Omega \right> || \rightarrow 0
\end{equation}
that then must converge to a unique $G_{I}^{(\phi)}(s,\eta)$ which is now an anti-linear functional on the entire Hilbert space. This then defines a vector  $\left|\Gamma_I(s,\eta) \right>$ in the Hilbert space
with the properties that it is norm bounded by $1$ and
\begin{equation}
G_I^{(c' \Omega)}(s,\eta) = \left( c' \left| \Omega \right>, \left|\Gamma_I(s,\eta) \right> \right)
\end{equation}
for all $c' \in \mathcal{A}_C'$. This vector is (weakly) continuous and holomorphic in the same regions as the bounded functions that define it -  since the sequence in \eqref{csnm} is uniformly convergent in $s,\eta$. 
Explicitly this means that 
$\left|\Gamma_I(s,\eta) \right>$ 
is holomorphic (and thus strongly continuous) inside the two complex dimensional domain $T_I$, holomorphic inside the appropriate set of one complex dimensional strips that lie along the edges of $T_I$ and weakly continuous everywhere in the closure of the tube region $\bar{T}_I$.

A similar analysis works for $\Gamma_{II,III,IV}$, and we simply summarize some of the intermediate steps. We define:
\begin{equation}
G_{II}(s,\eta) =  \left( \Delta^{ i s^\star}  \left| \psi \right> , U_{-a_s} \Delta^{is}  c' \left| \Omega \right>\right) 
\end{equation}
which is bounded and holomorphic in $T_{II}$. The uniform bound by $ ||c' \left| \Omega \right> ||$ can be found by examining one of the corners of the triangle  as well as the opposite side of the triangle:
\begin{align}
G_{II}(t+i/2,\eta) &= \left(J( c' \left| \Omega \right>), (\Delta')^{it}\Delta_{\Omega}^{it} U_{-a} (\Theta')^\dagger \Delta^{-it} U_{a_t} \left| \psi \right> \right)  \,, \quad a < 0 \\
G_{II}(t,\eta) &=\left< \psi \right| \Delta^{-it} U_{-a_t} \Delta^{it}  c' \left| \Omega \right>\,, 
\quad \eta \in \bar{S}(0,1/2)
\end{align}
which we use to prove uniform bounds in $\bar{T}_{II}$ which allows us to extend $G_{II}$ to the full Hilbert space.

Continuing we have:
\begin{equation}
G_{III}(s,\eta) =  \left( (\Delta')^{  i s^\star} S'_{\Omega|\psi} ( c'\left| \Omega \right>), U_{-a} (\Delta')^{is} \left| \Omega \right> \right)
\end{equation}
which is bounded and holomorphic in $T_{III}$. The uniform bound in the state again comes from examining one of the corners  of the triangle, and the opposite edge:
\begin{align}
\label{G3}
G_{III}(t+i/2,\eta) &=  \left(J (c' \left| \Omega \right>), (\Delta')^{-it} U_{-a} (\Delta')^{it} (\Theta')^\dagger \left| \psi \right> \right) \,,\quad 
\quad \eta \in \bar{S}(0,1/2) \\
G_{III}(t,\eta) &=\left< \psi \right| (\Delta')^{-it} U_{-a} (\Delta')^{it}
U_{a_t} c' \left| \Omega \right>\,, \qquad a>0
\label{g32}
\end{align}
Since this works a bit differently to $G_I$ we give the details of the last equality above:
\begin{align}
G_{III}(t,\eta) &= \left< \psi \right| c' (\Delta')^{-it} U_{-a} (\Delta')^{it} \left| \Omega \right> \\
& = \left< \psi \right| c' \left[ (\Delta')^{-it} U_{-a} (\Delta')^{it} (\Delta_{\Omega}')^{-it} U_a (\Delta_{\Omega}')^{it} \right] \left| \Omega \right>
\end{align}
where it is not hard to see that the object in square brackets is an operator in $\mathcal{A}_{C_a} \subset \mathcal{A}_C$ for $a>0$. Thus we can commute $c'$ through to act on $\left| \Omega \right>$ and reproduce \eqref{g32}.

And finally for:
\begin{equation}
G_{IV}(s,\eta) =  \left( (\Delta')^{  i s^\star} \left| \Omega \right>, U_{-a} (\Delta')^{is}S'_{\Omega|\psi}( c' \left|\Omega \right> ) \right)
 \end{equation}
 we need to examine:
\begin{align}
\label{G4}
G_{IV}(t-i/2,\eta) &=  \left(J \left| \psi \right>, (\Delta')^{-it} U_{-a} (\Delta')^{it} J ( c' \left| \Omega \right>)  \right) \,,\quad 
\quad \eta \in \bar{S}(0,1/2) \\
G_{IV}(t,\eta) &=\left< \Omega \right| (c')^\dagger U_{a_t} 
(\Delta')^{-it} 
 U_{-a} (\Delta')^{it} \left| \psi \right> \,, \qquad a<0
 \label{clearfrom}
\end{align}
Following the same procedure as for $\Gamma_I$ we find the desired holomorphy, bound and continuity properties for all four cases. 
\end{proof}

We now use these vectors to give an analytic continuation of the structure function $g(s,\eta)$. That is:
\begin{lemma}
\label{lemma5}
The following function of two complex variables:
\begin{align}
\label{gall}
g(s,\eta) = \begin{cases}  \left< \psi \right| \left. \Gamma_I(s,\eta) \right> & (s,\eta) \in \overline{T}_I \\
\left< \Gamma_{II}(s^\star,\eta^\star) \right| \left. \psi \right> & (s,\eta) \in \overline{T}_{II} \\
\left< \Gamma_{III}(s^\star,\eta^\star) \right| U_{-a_s} \left| \psi \right> & (s,\eta) \in \overline{T}_{III} \\
  \left< \psi \right| U_{-a_s} \left| \Gamma_{IV}(s,\eta) \right> & (s,\eta) \in \overline{T}_{IV} 
\end{cases}
\end{align}
is holomorphic in the interior of $\mathcal{B} = \overline{T}_I \cup \overline{T}_{II} \cup \overline{T}_{III} \cup \overline{T}_{IV}= \bar{S}(-1/2,1/2) \times \bar{S}(0,1/2)$ and continuous everywhere in $\mathcal{B}$. We can give the following explicit form along complex sub-strips:
\begin{align}
g(s,\eta) = \begin{cases}
\left< \psi \right|U_{a_t} \Delta^{it} \Theta' U_{-a} (\Theta')^\dagger \Delta^{-it} \left| \psi \right>
& s =t - i/2   \\
\left< \psi \right| \Delta^{-it}  U_{-a_t}  \Delta^{it} \left| \psi \right>
& s =t  \\
\left< \psi \right| \Delta^{it} \Theta' U_{-a} (\Theta')^\dagger \Delta^{-it} U_{a_t} \left| \psi \right>
& s =t + i/2 
\end{cases}
\end{align}
where $\eta \in \bar{S}(0,1/2)$ and $t \in \mathbb{R}$ for all of these cases.
This then represents the desired analytic continuation of the original function \eqref{def:struct}.
\end{lemma}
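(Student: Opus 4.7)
My plan is to verify the claims in four conceptual steps. Within the interior of each of the four tube regions $T_\alpha$, joint holomorphy of $g$ is immediate from Lemma~\ref{lemma4}: the vectors $|\Gamma_\alpha\rangle$ vary (anti-)holomorphically there, and sandwiching with $\langle\psi|$ (or $|\psi\rangle$) together with the bounded operator $U_{-a_s}$ yields a function holomorphic in $(s,\eta)$. Continuity up to the closed boundary of each individual triangle is then immediate from the weak continuity of the $|\Gamma_\alpha\rangle$ on $\bar T_\alpha$ stated in Lemma~\ref{lemma4}, combined with the Cauchy formula to upgrade weak to strong continuity on the interior. The task thus reduces to (a) checking that the four piecewise formulas agree on the three shared edges --- ${\rm Im}\,s=0$ (between $T_I,T_{II}$), the hypotenuse ${\rm Im}\,\eta={\rm Im}\,s+1/2$ (between $T_I,T_{IV}$), and the hypotenuse ${\rm Im}\,\eta={\rm Im}\,s$ (between $T_{II},T_{III}$); (b) promoting continuity across these edges to joint holomorphy on the interior of $\mathcal{B}$; and (c) reading off the explicit substrip formulas.

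For (a) the edge ${\rm Im}\,s=0$ is essentially trivial: at $s=t$ real the defining formulas for $\Gamma_I$ and $\Gamma_{II}$ both collapse, on the dense set $c'|\Omega\rangle$, to the same matrix element $\langle\psi|\Delta^{-it}U_{-a_t}\Delta^{it}|\psi\rangle$, and the uniformly bounded extension to a generic Hilbert-space vector promotes this to the global equality. The two hypotenuse edges are the main obstacle and call for the same manipulations used inside the proof of Lemma~\ref{lemma4}. At a corner of each hypotenuse the Tomita polar decomposition forces identities of the form $\Delta^{\pm 1/2}|\psi\rangle = \Theta'|\psi\rangle$ or $(\Theta')^\dagger|\psi\rangle$ (and similarly for the primed operators), exactly as used in \eqref{lastmany}, \eqref{G3}, and \eqref{clearfrom}. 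Combining these with the HSMI algebra \eqref{hsmi:algebra} I would reduce both $\langle\psi|\Gamma_I(s,\eta)\rangle$ and $\langle\psi|U_{-a_s}|\Gamma_{IV}(s,\eta)\rangle$ at this corner --- and analogously for the $T_{II}$--$T_{III}$ pair --- to the same explicit matrix element built out of $\Theta,\Theta'$, modular boosts, and translations. Since each side restricted to a one-complex-dimensional curve transverse to the hypotenuse continues holomorphically into its own triangle and both are weakly continuous up to the edge, equality at a single corner together with this one-variable analyticity forces equality along the whole shared edge.

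For (b) I would invoke the classical removable-singularity statement that a continuous function on an open set of $\mathbb{C}^2$ which is holomorphic off a real-analytic hypersurface is in fact holomorphic on the whole set; each shared edge lifts to such a hypersurface once ${\rm Re}\,s$ and ${\rm Re}\,\eta$ are allowed to range freely, so step (b) follows automatically from (a) together with piecewise holomorphy. For (c) the three substrip expressions along $s=t-i/2,\ t,\ t+i/2$ are direct consequences of the boundary calculations inside Lemma~\ref{lemma4}: the $s=t$ case is the one already noted, while $s=t\pm i/2$ is handled by substituting the polar identities into the defining formulas for $\Gamma_{IV}$ and $\Gamma_{III}$ respectively (cf.\ \eqref{G4} and \eqref{G3}) and simplifying with \eqref{hsmi:algebra}. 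The main obstacle throughout is the hypotenuse matching in (a); once that calculation is executed the rest is standard complex analysis together with bookkeeping of relative modular operators.
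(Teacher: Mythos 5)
Your overall skeleton matches the paper's: piecewise holomorphy and continuity from Lemma~\ref{lemma4}, agreement on the three shared edges, promotion to joint holomorphy by a one-variable edge-of-the-wedge/Morera argument in slices (the paper then invokes Hartogs), and the substrip formulas read off from \eqref{lastmany}, \eqref{G3}, \eqref{G4}. The real-$s$ edge and the corner computations are also in line with what the paper does. However, there is a genuine gap in your mechanism for propagating equality along each hypotenuse. You claim that ``equality at a single corner together with this one-variable analyticity forces equality along the whole shared edge,'' where the one-variable analyticity you invoke is along curves \emph{transverse} to the hypotenuse. That does not work: a transverse complex curve through an interior point of the hypotenuse meets the two branches as two holomorphic functions on opposite half-neighborhoods, continuous up to the dividing line, with no known agreement anywhere on that curve, so nothing propagates; and the transverse curves through the corner are a measure-zero family that says nothing about the rest of the edge.

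The leverage you actually need is the property, supplied by Lemma~\ref{lemma4}, that both branches are holomorphic \emph{along} the one-complex-dimensional diagonal strips based on the hypotenuse (parametrize $\eta = h + s + i/2$, $s\in S(-1/2,0)$, $h$ real, for the $T_I$--$T_{IV}$ edge). Each branch is then a bounded holomorphic function on that strip, continuous on its closure, hence determined by its values on the \emph{two} boundary lines of the strip, which are precisely the two corners of the hypotenuse. The paper makes this quantitative with the explicit Cauchy-kernel representation \eqref{gkern} and therefore verifies agreement at \emph{both} corners: for the $T_I$--$T_{IV}$ edge, at $s=t-i/2$, $a>0$ via the $\Theta'$ identity $J^\dagger(\Delta')^{-it}U_{-a}(\Delta')^{it}J=\Delta^{it}\Theta' U_a(\Theta')^\dagger\Delta^{-it}$ applied to \eqref{G4} and \eqref{lastmany}, and at $s=t$ real, $a<0$ via \eqref{clearfrom}. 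If you insist on using only one corner you would instead have to argue that the difference of the two branches, holomorphic and bounded on the diagonal strip and vanishing on one boundary line, vanishes identically (Schwarz reflection plus the identity theorem) --- a legitimate but distinct step that must be stated; your transversality argument cannot substitute for either route.
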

\begin{proof}
In the respective triangular open tube regions we have holomorpy as well as continuity in the closure. This later fact follows since $U_{-a_s} \left| \psi \right>$ (being entire) is a strongly continuous vector and it is easy to show that the inner product of a bounded weakly continuous vector with a strongly continuous vector results in a continuous function. 

So to show that $g(s,\eta)$ is holomorphic inside $\mathcal{B}$ we need to compare, for consistency, the function on the overlapping regions where it has multiple definitions.  
Then we can apply the edge of the wedge theorem which then tells us that we can extend the region of holomorphy across these overlaps.\footnote{The multi-dimensional edge of the wedge theorem is overkill here, we can apply the one dimensional edge of the wedge theorem along $\eta$-strips at fixed $s$ or equivalently $s$-strips at fixed $\eta$. The result is a function $g(s,\eta)$ that is holomorphic in one of the variables when the other one is held fixed, which by Hartog's theorem is holomorphic in both variables. }

The strip $\eta \in \bar{S}(0,1/2)$ with $s= t$ real is obvious. For the complex strips defined by the lines at $45^0$ in Figure~\ref{fig:triangles} we just need to compare the different definitions at the boundaries of this diagonal strip. This is because we know both definitions are holomorphic along this diagonal strip, continuous and bounded in the closure implying both definitions in the bulk of the strip are determined by the boundary values. To see this, an explicit formula can be worked out by considering:
\begin{equation}
\label{gkern}
g(s,h +s) =  - i \int_C dt \frac{ (e^{2 \pi s}+i)}{ (e^{2\pi t}+i)\left(1- e^{2\pi (s-t)}\right)}  g(t,h+t) 
\end{equation}
where $C$ circles the pole at $t=s \in S(0,1/2)$ in the clockwise direction and for the complex strip under consideration
$h$ is fixed and real. We can deform the contour $C$ to the boundaries of the strip at ${\rm Im } t =0,1/2$ because the kernel decays exponentially at large $t \rightarrow \pm \infty$ and $|g(s,\eta)| \leq 1$ along this strip, so we can drop the ``vertical'' segments of $C$ at $\pm \infty$. We have also used the fact that we can take the limit of the ``horizontal'' segment to the boundary of the strips thanks to continuity of $g$ and the dominated convergence theorem where the dominating integrable function is obtained by replacing $|g(t,\eta)| \rightarrow 1$  and taking the magnitude of the kernel in \eqref{gkern}. 
Note that this kernel in \eqref{gkern} was constructed by mapping the strip in the $t$ plane to the unit disk via $z = 1/(1+ i e^{-2\pi t})$ and applying the Cauchy integral formula there.

So to reiterate we just need compare definitions on the edges of the diagonal strips.
For example, for the lower diagonal line in Fig~\ref{fig:triangles}, starting with the definition in terms of $\Gamma_{IV}$ at $s = -i/2 + t$ and $a > 0$:
\begin{equation}
(IV): g(t-i/2,\eta) = \left( J^\dagger (\Delta')^{-it} U_{-a} (\Delta')^{it}  J (U_{-a_t} \left| \psi \right>), \left| \psi \right> \right)
\end{equation}
where we used \eqref{G4}.
Now using,
\begin{equation}
J^\dagger (\Delta')^{-it} U_{-a} (\Delta')^{it} J = \Delta^{it} \Theta' U_{a} (\Theta')^\dagger \Delta^{-it}
\end{equation} 
which follows from \eqref{jimport}, \eqref{defconj} and \eqref{hsmi:algebra}, we find:
\begin{equation}
(IV): g(t-i/2,\eta)  = \left< \psi \right| U_{a_t} \Delta^{it} \Theta' U_{-a} (\Theta')^\dagger \Delta^{-it} \left| \psi \right>
\label{refup}
\end{equation}
This reproduces the definition with $\Gamma_I$ along this line, using \eqref{lastmany}:
\begin{align}
(I): g(t-i/2,\eta) &= \left< \psi \right| U_{a_t} \Delta^{it} \Theta' U_{-a} (\Delta_\Omega')^{it} (\Delta')^{-it} (\Theta')^\dagger \left| \psi \right> \\ &= \left< \psi \right| U_{a_t} \Delta^{it} \Theta' U_{-a} (\Theta')^\dagger (\Delta_\Omega')^{it} (\Delta')^{-it}  \left| \psi \right>  \\
&= \left< \psi \right| U_{a_t} \Delta^{it} \Theta' U_{-a} (\Theta')^\dagger 
\Delta^{-it} \Delta_\psi^{it} \pi'(\psi) \left| \psi \right> = \eqref{refup}
\end{align}
where in the second line we used the fact that $ (\Delta_\Omega')^{it} (\Delta')^{-it} \in \mathcal{A}_C$
so it commutes with $(\Theta')^\dagger$ and in the last line we used the co-cycle relation \eqref{def:cocycle}.

Agreement for $s = t$ real and $a<0$ is clear from \eqref{clearfrom}. Thus we must have agreement along the strip based on the diagonal line from $({\rm Im} s , {\rm Im} \eta ) = \{ (0,1/2),(-1/2,0) \}$.  Agreement along the upper diagonal line in Figure~\ref{fig:triangles} follows similar reasoning. 

\end{proof}
Notice that the function $g(s,\eta)$ is not uniformly bounded over the full domain where it is defined. 
Instead we have:
\begin{align}
\label{bound12}
|g(s,\eta)| & \leq 1 \qquad &(s,\eta) \in \overline{T}_{I} \cup \overline{T}_{II} \\
\label{bound34}
|g(s,\eta)| & \leq || U_{-a_s} \left| \psi \right> || = \left( \left< \psi \right| \exp( - 2 P {\rm Im}(a_s) ) \left|\psi \right>\right)^{1/2} &(s,\eta) \in \overline{T}_{III} \cup \overline{T}_{IV}
\end{align}
where we have used the fact that these states are entire states in order to compute the norm in the second equation above. If the entire state $\psi$ has bounded spectral support for $P$ up to $\Lambda$ then this later estimate becomes:
\begin{equation}
\label{bound34new}
|g(s,\eta)| \leq e^{- {\rm Im} a_s \Lambda}\, \qquad (s,\eta) \in \overline{T}_{III} \cup \overline{T}_{IV}
\end{equation}

Let us define the following function:
\begin{equation}
\label{geps}
\hat{g}_\epsilon(s,\eta) \equiv \frac{ 1-g(s, \eta)}{ \epsilon} \,, \quad \epsilon > 0 \,, \, (s,\eta) \in \bar{S}(-1/2,1/2) \times \bar{S}(0,1/2)
\end{equation}
where we will still use $a =\epsilon e^{2\pi \eta}$. We are now in a position to show:

\begin{lemma}
\label{lemma6}
The following limit exists:
\begin{equation}
\gamma(s,\eta) = \lim_{\epsilon \rightarrow 0^+} \hat{g}_\epsilon(s,\eta)  < \infty
\end{equation}
uniformly (and holomorphically) for compact subsets of 
\begin{equation}
S(-1/2,1/2) \times S(0,1/2) \ni (s,\eta) 
\end{equation}
where $\gamma(s,\eta)$ is a holomorphic function on this same region.
\end{lemma}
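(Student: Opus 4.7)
The central observation is that $g$ depends on $\epsilon,\eta$ only through $a=\epsilon e^{2\pi\eta}$ (and $a_s=ae^{-2\pi s}$), so one has the scaling identity $g(s,\eta;\epsilon)=\tilde g(s,\eta+\tfrac{\log\epsilon}{2\pi})$ where $\tilde g(s,\eta):=g(s,\eta)|_{\epsilon=1}$. Setting $F(s,\eta'):=e^{-2\pi\eta'}\bigl(1-\tilde g(s,\eta')\bigr)$ this rewrites
\[
\hat g_\epsilon(s,\eta)=e^{2\pi\eta}\,F\!\Bigl(s,\,\eta+\tfrac{\log\epsilon}{2\pi}\Bigr),
\]
so the claim reduces to showing that $F(s,\eta')$ has a finite limit as $\mathrm{Re}\,\eta'\to-\infty$ (with $\mathrm{Im}\,\eta'$ fixed), uniformly on compact subsets of $s$ in the open interior and $\mathrm{Im}\,\eta'\in(0,1/2)$.

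I would first establish pointwise convergence and identify $\gamma$ on the real-$s$ slice. Since $\psi$ has $P$-spectrum in $[0,\Lambda]$, the expansion $U_{-a_t}|\psi\rangle=\sum_n(-ia_t)^nP^n|\psi\rangle/n!$ converges in norm for all complex $a_t$, with the $n$-th term of norm at most $|a_t|^n\Lambda^n/n!$. Inserted into the explicit formula $g(t,\eta)=\langle\psi|\Delta^{-it}U_{-a_t}\Delta^{it}|\psi\rangle$ and using that $\Delta^{\pm it}$ is a contraction on the appropriate subspace, this gives
\[
g(t,\eta)-1=-ia_t\langle\psi|\Delta^{-it}P\Delta^{it}|\psi\rangle+O(|a_t|^2),
\]
so $\hat g_\epsilon(t,\eta)\to ie^{2\pi(\eta-t)}\langle\psi|\Delta^{-it}P\Delta^{it}|\psi\rangle$ as $\epsilon\to 0^+$, uniformly on compact subsets of the real-$t$ axis and entire in $\eta$. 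This is the candidate $\gamma(s,\eta)$, to be analytically continued in $s$.

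To extend to complex $s$ I would invoke the Vitali--Porter theorem: a locally uniformly bounded family of holomorphic functions that converges pointwise on a set with accumulation points in the domain converges uniformly on compact subsets to a holomorphic limit. The real-$s$ slice supplies the pointwise convergence, so what remains is local uniform boundedness of $\{\hat g_\epsilon\}$ on interior compacta. For this I would use the Cauchy kernel of Lemma~\ref{lemma5} (the analogue of \eqref{gkern}) to express $\hat g_\epsilon$ at interior complex $s$ as a contour integral against its values on the three real edges $s'\in\{t-i/2,t,t+i/2\}$. On $s'=t$ the Taylor estimate above controls $|\hat g_\epsilon|$ uniformly; on $s'=t\pm i/2$ the explicit formulas in Lemma~\ref{lemma5} involve only $U_{\pm a},U_{\pm a_t}$ and the conjugation co-cycles $\Theta'$ acting on $|\psi\rangle$, so an analogous entire-vector expansion should yield the same $|1-\tilde g|=O(|a|)$ estimate and hence uniform boundedness on those edges. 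Since the kernel is integrable uniformly in $s$ on compact interior subsets, Vitali--Porter then delivers the conclusion.

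The main obstacle is the uniform-boundedness step on the $s=t\pm i/2$ edges. The bound $|g|\leq 1$ on $\bar T_I\cup\bar T_{II}$ gives only $|1-\tilde g|\leq 2$, which is useless after dividing by $\epsilon$; what is really needed is the linear decay $|1-\tilde g|=O(|a|)$ as $a\to 0$. On the real-$s$ edge this is automatic from entireness of $\psi$, but on $s=t\pm i/2$ one must Taylor-expand around vectors like $(\Theta')^\dagger|\psi\rangle=|\widehat\psi\rangle$ (the natural-cone representative), whose regularity (lying in the domain of arbitrary $P^n$) is more delicate. Lemma~\ref{lemma3} gives control on the first moment $\widehat P$ and plausibly feeds into an approximation argument; alternatively, since the entire vectors $\psi_\Lambda$ used elsewhere in this section are dense, one could establish the lemma first on that dense class and then extend by continuity.
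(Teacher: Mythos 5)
Your overall architecture (pointwise convergence on the real-$s$ slice, then Vitali--Porter to propagate to complex $s$) matches the paper's, and you correctly identify local uniform boundedness as the crux. But the way you propose to get it contains a circularity that the paper's proof is specifically engineered to avoid. On the real-$s$ edge you claim the bound $|1-\tilde g|=O(|a_t|)$ is ``automatic from entireness of $\psi$'' by Taylor-expanding $U_{-a_t}$. It is not: in $g(t,\eta)=\langle\psi|\Delta^{-it}U_{-a_t}\Delta^{it}|\psi\rangle$ the translation acts on the \emph{flowed} vector $\Delta^{it}|\psi\rangle$ (equivalently on $|\psi_t\rangle$ up to vacuum modular flow), and relative modular flow does not preserve bounded $P$-spectral support. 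The finiteness of $\langle\psi_t|P|\psi_t\rangle$ is precisely the conclusion (Corollary~\ref{corr3}) that this lemma is built to deliver, so assuming norm convergence of $\sum_n(-ia_t)^nP^n\Delta^{it}|\psi\rangle/n!$ --- let alone a remainder controlled by $\Lambda^n$ --- presupposes what must be proven. The same objection applies with more force on the $s=t\pm i/2$ edges, as you yourself note.

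The paper's escape is to never bound $\hat g_\epsilon$ from above at all. From \eqref{bound12}--\eqref{bound34new} one gets only a \emph{lower} bound ${\rm Re}\,\hat g_\epsilon\geq(1-e^{\Lambda\epsilon m_C})/\epsilon$, which is one-sided but uniform in $\epsilon<\epsilon_0$; this suffices to make $y_\epsilon=\exp(-\hat g_\epsilon)$ a locally uniformly bounded holomorphic family, and Montel/Vitali is applied to $y_\epsilon$ rather than to $\hat g_\epsilon$. Pointwise convergence on the line $\eta=i/4+h$ is obtained not by Taylor expansion but by observing that there $a_t$ is purely imaginary, so $U_{-a_t}$ is a positive contraction and $\hat g_\epsilon(t,i/4+h)$ is real and monotone in $\epsilon$ --- convergence is then automatic, \emph{possibly to $+\infty$}. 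The divergence is ruled out only afterwards, by a Hurwitz-type argument (nowhere-vanishing limits of nowhere-vanishing holomorphic sequences) anchored at $s=0$ where $P_\psi<\infty$ is an assumption. If you want to salvage your Cauchy-kernel route you would need a two-sided bound on the edges, which takes you back to the circular step; I recommend adopting the exponentiation trick instead.
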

\begin{proof}
We start by examining the complex strip defined by $\eta =  i/4 + h$ for fixed real $h$
and $s \in S(-1/2,1/2)$. We consider a bounded open subregion $C \subset S(-1/2,1/2)$  that intersects the real axis and $s=0$. Then, in light of the bound \eqref{bound12}-\eqref{bound34new}, we can show that:
\begin{align}
\label{ghatestimate}
{\rm Re} \hat{g}_\epsilon(s,\eta) & \geq \min \left\{ \frac{ 1 - || U_{-a_s} \left| \psi \right> || }{\epsilon},0 \right\}  \geq  \frac{ 1 - \sqrt{\left< \psi \right| \exp( 2 P \epsilon m_{C}) \left| \psi \right>} }{\epsilon} \\
&  \geq \frac{ 1  - \exp( \Lambda \epsilon m_{C}) }{\epsilon} \geq  \frac{ 1  - \exp( \Lambda \epsilon_{0} m_C) }{\epsilon_0}  \qquad \epsilon < \epsilon_0
\label{ghatestimate2}
\end{align}
where $m_C =\max\{0, \sup \{  - {\rm Re} \exp(2\pi (h-s)); s \in C \} \}$. It follows from the bound on the real part that
\begin{equation}
y_\epsilon(s) = \exp(- \hat{g}_\epsilon(s,i/4+h) )
\end{equation}
 are all uniformly bounded holomorphic function of $s$ in $C$ and $0< \epsilon < \epsilon_0$. 

Along the real $s$ axis with $s = t$ for $t \in \mathbb{R}$ we can evaluate the function explicility:
\begin{equation}
\hat{g}_\epsilon(t, i/4+ h) = \left< \psi \right| \Delta^{-it} \left( \frac{1 - \exp( -\epsilon e^{2\pi (h-t)} P) }{\epsilon} \right)  \Delta^{it}\left| \psi \right> 
\end{equation}
from which we have the following monotonicity result:
\begin{equation}
\label{monot}
\hat{g}_\epsilon (t, i/4 +h) \geq \hat{g}_{\epsilon'} (t, i/4 +h) \,, \qquad \epsilon < \epsilon'
\end{equation}
implying that $y_{\epsilon}(t) \leq y_\epsilon'(t)$ for $\epsilon < \epsilon'$. Thus for fixed real $s$ the set of functions $y_\epsilon$ labelled by $\epsilon$ are real and positive, monotonically decreasing as a function of $\epsilon \rightarrow 0$, and bounded. This guarantees pointwise convergence along the real $s$-axis (possibly to zero - which would mean that $\hat{g}_\epsilon$ diverges, a possibility we will shortly rule out.)

We now consider a sequence of the functions $y_{\epsilon_n}$ labelled by an arbitrary sequence of real numbers $\epsilon_n > 0$ that converges to zero. If we can show that any such sequence of functions converges uniformly in $s$ to the same function then this function is the uniform limit of $y_\epsilon(s)$ for $\lim_{\epsilon \rightarrow 0}$.

Bounded sequences of holomorphic functions behave well under limits due to Montel's theorem \cite{schiff2013normal} which guarantees the existence of a sub-sequence $\epsilon_{n_m}$ that converges uniformly on compact subsets of $C$ to some holomorphic function.  This in turn can be used to prove the Vitali-Porter theorem \cite{schiff2013normal}, which we will apply repeatedly: \emph{Sequences of holomorphic functions, uniformly bounded in some region $C$, that converge pointwise on a non-discrete subset of $C$ necessarily converge uniformly on compact subset of $C$ to a fixed holomorphic function.} This later holomorphic function is the one appearing in Montel's theorem. 

We know from the discussion around \eqref{monot} that any such sequence $y_{\epsilon_{n}}$ converges along $s=t$ real 
and this is a non-discrete subset of $C$, which then guarantees uniform convergence for compact subsets of $C$ via Vitali's theorem. We can then extend this in the obvious way by again applying Vitali's theorem, to compact subsets of the strip $s \in S(-1/2,1/2)$ all at fixed $\eta = i/4 +h$. 

We must show that the limit function is the same for any two sequences $\epsilon_n$ and $\epsilon'_n$ positive and converging to $0$. We can combine these two sequences to $\epsilon''_k = \epsilon_{k/2}$ for $k$ even and  $\epsilon''_k = \epsilon'_{(k+1)/2}$ for $k$ odd. The combined sequence of functions $y_{\epsilon_k''}$ must converge to some holomorphic function on compacts and any subsequence must converge to the same holomorphic function.
We thus have:
\begin{equation}
\lim_{\epsilon \rightarrow 0^+} y_\epsilon(s) \equiv y_0(s) \big(=\exp( - \gamma(s, i/4+ h)) \big)
\, \qquad s \in S(-1/2,1/2)
\end{equation}
uniformly on compacts. 

We also need that the limiting function $y_0(s)$ is nowhere vanishing in $S(-1/2,1/2)$ (otherwise $\gamma$ would be infinite at that point). Indeed this follows since for a sequence of holomorphic functions $y_{\epsilon_n}(s)$ that are nowhere vanishing and that converge uniformly to a holomorphic function, the limit $y_{0}(s)$ either vanishes everywhere or nowhere.\footnote{The proof of this follows by  considering the topological invariant that counts the zeros enclosed in some subset $\Gamma \subset C$:  $2\pi i N_\epsilon = \int_{\partial \Gamma} ds \left( y_\epsilon'(s)/y_\epsilon(s)\right)$ of a holomorphic function. If the limiting function vanishes at some discrete point inside $\Gamma$ then for small enough $\Gamma$ we have, by uniform convergence applied to the integral, $\lim_{\epsilon \rightarrow 0} N_\epsilon =1$. This contradicts the original assumption which gives $N_\epsilon = 0$ for all $\epsilon$.}
The former possibility is discounted since we know for a state $\psi$ with finite null energy then the limit at $s=0$ exists:
\begin{equation}
\lim_{\epsilon \rightarrow 0^+} \hat{g}_\epsilon(0,i/4+h) =  \lim_{\epsilon \rightarrow 0^+}  \left< \psi \right| \left( \frac{1 - \exp( -\epsilon e^{2\pi h} P) }{\epsilon} \right)  \left| \psi \right>  
= P_\psi e^{2\pi h} <  \infty
\end{equation}

It is now easy to extend the existence of this limit to compact subsets of the two complex dimensional tube region. Consider a bounded open subregion $D \subset S(-1/2,1/2) \times S(0,1/2)$ then we have the same estimate as in \eqref{ghatestimate} and \eqref{ghatestimate2}  with $m_C \rightarrow m_D$ and 
\begin{equation}
m_D =\max\{0, \sup \{  -{\rm Im} \exp(2\pi (\eta-s)); (s,\eta) \in D \} \}
\end{equation}
So we can continue to apply Vitali's theorem. 

Consider the $\eta$-strips with $s$ fixed and $\eta \in S(0,1/2)$. 
Above we have shown pointwise convergence of $\hat{g}_\epsilon(s,\eta)$ along the line $\eta = i/4+h$ in this new complex strip. Again by Vitali's theorem, since $ e^{- \hat{g}_\epsilon}$ is uniformly bounded on $D$  (for $\epsilon < \epsilon_0$) we must have uniform convergence on compact subsets of $D$ intersected with the $\eta$-strip. \footnote{We should again take an arbitrary sequence $\epsilon_n$ converging to zero and show that for any such sequence we have convergence to the same limit function. This follows the same logic as above so we do not repeat it. }
This extends to the full $\eta$ strip at fixed $s$. Criss crossing the two dimensional complex domain $S(-1/2,1/2) \times S(0,1/2)$ in this way and by using either complex strips in the $s$ plane or the $\eta$ plane we arrive at uniform convergence on compacts to a limit function $e^{- \gamma(s,\eta)}$ that is nowhere vanishing and holomorphic in each variable $s$, $\eta$ separately. Hartog's Theorem then guarantees this is a holomorphic function of both variables.
This proves the assertion.

\end{proof}

\begin{corollary} \label{corr3} The ANE of the flowed state, $P_s$,  is finite and can be extracted from the limit function which has the following properties: 
\begin{equation}
\gamma(s,\eta) = e^{2\pi \eta} \rho(s)\,,
\end{equation}
where $\rho(s)$ is analytic in the strip $s\in S(-1/2,1/2)$ and where:
\begin{equation}
P_{s} =i \rho(s) \,, \quad s\in \mathbb{R}
\end{equation}
It also satisfies the bound:
\begin{equation}
\label{limbound}
{\rm Re} \gamma(s,\eta) \geq \begin{cases} 0\,, & (\eta,s)\in \overline{T}_{I} \cup \overline{T}_{II} \\
{\rm Im}(e^{2\pi (\eta-s)}) P_\psi\,, &  (\eta,s) \in \overline{T}_{III} \cup \overline{T}_{IV}
\end{cases}
\end{equation}

\end{corollary}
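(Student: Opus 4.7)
The plan is to exploit a hidden scaling symmetry of $g(s,\eta)$ in $(\epsilon,\eta)$ to obtain the factorization $\gamma(s,\eta)=e^{2\pi\eta}\rho(s)$, to compute $\rho$ along the distinguished ray $\eta=i/4$ where the problem reduces to a simple spectral integral, and to transfer the bounds \eqref{bound12}--\eqref{bound34} on $|g|$ into bounds on ${\rm Re}\,\gamma$ via ${\rm Re}(1-g)\geq 1-|g|$.

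First I would observe that $\eta$ and $\epsilon$ enter $g(s,\eta)$ only through the combination $a=\epsilon e^{2\pi\eta}$ (since $a_s=e^{-2\pi s}a$), so $g(s,\eta)=G(s,\epsilon e^{2\pi\eta})$ for some two-variable function $G$. Comparing $\hat g_{\epsilon_1}(s,\eta_1)$ and $\hat g_{\epsilon_2}(s,\eta_2)$ at points with $\epsilon_1 e^{2\pi\eta_1}=\epsilon_2 e^{2\pi\eta_2}$ gives $\hat g_{\epsilon_1}(s,\eta_1)=e^{2\pi(\eta_1-\eta_2)}\hat g_{\epsilon_2}(s,\eta_2)$, and passing to the $\epsilon\to 0^+$ limit via Lemma~\ref{lemma6} forces $e^{-2\pi\eta}\gamma(s,\eta)$ to be independent of $\eta$. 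I would then define $\rho(s)\equiv e^{-2\pi\eta}\gamma(s,\eta)$; holomorphy of $\gamma$ on $S(-1/2,1/2)\times S(0,1/2)$ immediately transfers to holomorphy of $\rho$ on the $s$-strip.

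Next I would extract $P_s$ by specializing to $\eta=i/4$ (the starred point in Figure~\ref{fig:triangles}), where $a=i\epsilon$ is purely imaginary and $U_{-a}=e^{-\epsilon P}$. For real $s$ the spectral theorem gives
\[
\hat g_\epsilon(s,i/4)=\int_0^\infty \frac{1-e^{-\epsilon\lambda}}{\epsilon}\,d\langle\psi_s|E_\lambda(P)|\psi_s\rangle,
\]
with integrand monotonically increasing to $\lambda$ as $\epsilon\to 0^+$, so monotone convergence identifies the limit as $P_s\in[0,\infty]$. Lemma~\ref{lemma6} already guarantees the limit is finite, forcing $P_s<\infty$ and $\gamma(s,i/4)=P_s$. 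Since $e^{2\pi(i/4)}=i$, the factorization then yields $P_s=i\rho(s)$ on the real axis, and analyticity of $\rho$ extends the identification throughout the strip.

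Finally, I would read off the lower bounds on ${\rm Re}\,\gamma$ from those on $|g|$. On $\overline T_I\cup\overline T_{II}$, \eqref{bound12} gives $|g|\leq 1$, hence ${\rm Re}\,\hat g_\epsilon\geq 0$, which passes to ${\rm Re}\,\gamma\geq 0$ in the limit. On $\overline T_{III}\cup\overline T_{IV}$, \eqref{bound34} combined with the first-order expansion $\|U_{-a_s}|\psi\rangle\|^2=1-2\epsilon\,{\rm Im}(e^{2\pi(\eta-s)})P_\psi+O(\epsilon^2)$, which is uniform on compacts because $\psi$ has bounded spectral support, yields the stated bound. The main obstacle I anticipate is verifying that these inequalities, initially derived in the interior of the tube, extend to the closures appearing in the corollary statement: the $\eta$-extension is immediate since $\gamma=e^{2\pi\eta}\rho(s)$ is entire in $\eta\in\mathbb{C}$, while the $s$-extension requires combining the uniform-on-compacts convergence from Lemma~\ref{lemma6} with the weak continuity of the $\Gamma$-vectors on $\overline T$ from Lemma~\ref{lemma4} to boundary-extend $\rho$ together with its bounds.
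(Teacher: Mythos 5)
Your proposal is correct in substance and overlaps heavily with the paper's proof: the finiteness of $P_s$ via monotone convergence of the spectral integral on the slice $\eta = i/4$ (the paper uses $\eta = i/4+h$), and the bound \eqref{limbound} as the $\epsilon\to 0$ limit of \eqref{bound12}--\eqref{bound34}, are exactly the paper's steps. Where you genuinely diverge is the factorization $\gamma(s,\eta)=e^{2\pi\eta}\rho(s)$. The paper gets it by computing $\gamma(t,\eta)=-ie^{2\pi\eta}P_t$ explicitly for \emph{all real} $t$ and all $\eta$ in the closed strip, using the estimate $\left|(1-e^{i\zeta\lambda})/(i\zeta)\right|\le\lambda$ together with dominated convergence, and then applying the identity theorem to $\gamma(s,\eta)-e^{2\pi\eta}\gamma(s,0)$ along $s$-strips at fixed $\eta$ (it vanishes on $\mathrm{Im}\,s=0$). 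You instead exploit the fact that $g$ depends on $(\epsilon,\eta)$ only through $a=\epsilon e^{2\pi\eta}$, which is a cleaner and more structural observation, and it lets you compute $\rho$ at the single distinguished point $\eta=i/4$ rather than on a whole $\eta$-interval. One caveat: since $\epsilon$ must remain real and positive, the relation $\epsilon_1 e^{2\pi\eta_1}=\epsilon_2 e^{2\pi\eta_2}$ only links points with $\mathrm{Im}\,\eta_1=\mathrm{Im}\,\eta_2$, so your scaling identity by itself shows that $e^{-2\pi\eta}\gamma(s,\eta)$ is invariant under \emph{real} translations of $\eta$ only. You need one more line --- a holomorphic function of $\eta$ whose value depends only on $\mathrm{Im}\,\eta$ is constant by Cauchy--Riemann --- to conclude full $\eta$-independence; as written, "forces $e^{-2\pi\eta}\gamma$ to be independent of $\eta$" skips this. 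With that line added, your route is complete and arguably tidier than the paper's. Your closing remark about extending the statements from the open tube to the closures $\overline{T}_{I,\dots,IV}$ identifies a point the paper itself glosses over (it is only fully resolved by the periodicity argument of Lemma~7), so flagging it is appropriate rather than a defect of your argument.
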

\begin{proof}
Since we know from Lemma~\ref{lemma6} that the limit $\epsilon \rightarrow 0$ is finite at $\eta= i/4+h$ for real $t$ this already tells us that the ANE of the flowed state is finite:
\begin{align}
\lim_{\epsilon \rightarrow 0} \hat{g}_\epsilon(t,i/4+h) &= \lim_{\epsilon \rightarrow 0^+} \int_0^\infty \frac{1 - e^{- \epsilon e^{2\pi h} \lambda}}{\epsilon} d \left< \psi_{t} \right| E_\lambda(P)\left|  \psi_{t} \right>  \\ &=   e^{2\pi h} P_{t}  < \infty
\end{align}
by the monotone convergence theorem for the Lebesgue integral. Knowing this is finite we can then show that the limit exists for all complex $0 \leq {\rm Im} \eta \leq 1/2$ (inclusive!) and real $t$ by applying the estimate  $\left| \frac{1-e^{i \zeta  \lambda}}{i \zeta}\right| \leq  \lambda$ for $0 \leq \arg \zeta\leq  \pi$\footnote{To show this set $\lambda \zeta =t $ and consider $\left|  (1-e^{i \zeta  \lambda})/(i\zeta \lambda)\right| =  \left| (1-e^{i   t})/t \right| $ which we have to show is bounded by $1$. Now for real $t$ we have: $\left| (1-e^{it})/t\right| = 2 |\sin(t/2)/t| \leq 1$. This later inequality extends, via the  Phragmén–Lindelöf principle, into the $t$-upper half plane since $(1-e^{it})/it$ is holomorphic and bounded in the upper half plane (UHP).
 There is probably a much easier way to show this.} in conjunction with the dominated convergence theorem: 
\begin{align}
\lim_{\epsilon \rightarrow 0} \hat{g}_\epsilon(t,\eta) &= \lim_{\epsilon \rightarrow 0^+} \int_0^\infty \frac{1 - e^{i \epsilon e^{2\pi \eta} \lambda}}{\epsilon} d \left< \psi_{t} \right| E_\lambda(P)\left|  \psi_{t} \right>  \\ &= - i e^{2\pi \eta} P_{t}  \qquad \eta \in \bar{S}(-1/2,0)
\end{align}
Now we move to the full two dimensional complex plane. Define $\rho(s) = \gamma(s,0)$ then by Lemma~\ref{lemma6} we know that $\rho(s)$ is holomorphic in the strip $s\in S(-1/2,1/2)$.  For real $s=t$ we have $\rho(t) = - i P_{t}$ from the above considerations. So the following function
\begin{equation}
\gamma(s,\eta) - e^{2\pi \eta} \rho(s)
\end{equation}
is holomorphic in the two dimensional complex tube region $S(-1/2,1/2) \times S(-1/2,0)$ vanishing along the three dimensional plane ${\rm Im s} =0$. One can then show that this function must vanish in $S(-1/2,1/2) \times S(-1/2,0)$ - by considering complex sub-strips at fixed $\eta$ in the $s$-plane where we have a holomorphic function in $s$ vanishing along the real axis, which by the identity theorem must then vanish in the entire strip. Thus:
\begin{equation}
\gamma(s,\eta) = e^{2\pi \eta} \rho(s) \qquad (s,\eta) \in S(-1/2,1/2) \times S(-1/2,0)
\end{equation}

For the bound in \eqref{limbound} we can simply take the limit $\epsilon \rightarrow 0$ on the bound in \eqref{bound12} and \eqref{bound34}.

\end{proof}

Our next task is to show that $\gamma$ and hence $\rho$ are entire functions of $ z = e^{-2\pi s}$. We do this by showing that this function is periodic in $s \rightarrow s + 2\pi i$ and also by checking its behavior as $s \rightarrow \infty$. 

\begin{lemma}
\label{lemma7}
The limit function $\gamma(s,\eta)$ is a periodic function in the $s$-strip $s\in S(-1/2,1/2)$ for fixed $\eta$. Together with the bound \eqref{limbound} it follows that $\rho$ is an entire function of $ e^{-2\pi s}$ and that:
\begin{equation}
\label{known}
\rho(s) = -i( R+ Q  e^{- 2\pi s }  )
\end{equation}
where $R,Q$ are real constants satisfying $R+Q = P_\psi$ and $0 \leq R,Q \leq P_\psi$.
\end{lemma}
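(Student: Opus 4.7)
My plan is to prove Lemma~\ref{lemma7} in two main stages. First, establish the periodicity $\rho(s+i) = \rho(s)$ so that via $z = e^{-2\pi s}$ we have $\rho(s) = f(z)$ for some $f$ \emph{a priori} holomorphic on $\mathbb{C}^* = \mathbb{C}\setminus\{0\}$. Second, use the real-part bounds of Corollary~\ref{corr3} to control $f$ at $z=0$ and $z=\infty$ and force $f$ to be a degree-at-most-one polynomial. For the periodicity step, I would compare the two boundary expressions for $g$ furnished by Lemma~\ref{lemma5} at $s = t \pm i/2$: they have identical operator content except that $U_{a_t}$ sits on the far left at $s=t-i/2$ and on the far right at $s=t+i/2$. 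Their difference is a commutator expectation
\begin{equation*}
g(t-i/2,\eta) - g(t+i/2,\eta) = \langle\psi|\,[U_{a_t}, M_t]\,|\psi\rangle, \qquad M_t = \Delta^{it}\Theta' U_{-a}(\Theta')^\dagger \Delta^{-it}.
\end{equation*}
Expanding $U_{a_t} = 1 + O(\epsilon)$ and noting that the zeroth-order content of $M_t$ is the support projector $\pi'(\psi) \in \mathcal{A}_C'$ (since $\Theta'(\Theta')^\dagger$ collapses to this projector by the defining relations in Appendix~\ref{app:mod}), the leading commutator takes the form $-ia_t[P,\pi'(\psi)]$ whose $\psi$-expectation vanishes because $\pi'(\psi)|\psi\rangle = |\psi\rangle$. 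Hence the difference is $O(\epsilon^2)$; dividing by $\epsilon$ and letting $\epsilon\to 0$ yields $\gamma(t-i/2,\eta) = \gamma(t+i/2,\eta)$, which the identity theorem promotes to $\rho(s+i) = \rho(s)$ throughout the strip. Gluing periodically across $\text{Im}\,s \in 1/2 + \mathbb{Z}$ gives an entire $\rho$ on $\mathbb{C}$, and $z = e^{-2\pi s}$ realizes $\rho(s) = f(z)$ with $f$ holomorphic on $\mathbb{C}^*$.

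For the growth step, I would Laurent-expand $f(z) = \sum_{n\in\mathbb{Z}} a_n z^n$ and bound the coefficients via $a_n = (2\pi i)^{-1}\oint_{|z|=r} f(z)\,z^{-n-1}\,dz$. The bound $\text{Re}\,\gamma \geq 0$ on $T_I\cup T_{II}$ implies $|e^{-\gamma(s,\eta)}|\leq 1$ there; fixing $\eta$ to an interior value (say $\eta = i/4$), viewing this as an estimate on a sub-strip in $s$, and combining with the explicit boundary values $\rho(t) = -iP_t$ on the real axis, a Phragm\'en--Lindel\"of argument exploiting the factorization $\gamma = e^{2\pi\eta}\rho(s)$ yields a uniform bound $|\rho(s)|\leq A$ as $\text{Re}\,s\to +\infty$. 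Sending $r\to 0$ in the coefficient integral forces $a_n = 0$ for $n < 0$, so $f$ is entire. Analogously, the bound $\text{Re}\,\gamma \geq \text{Im}(e^{2\pi(\eta-s)})P_\psi$ on $T_{III}\cup T_{IV}$ implies that $e^{-\gamma - i e^{2\pi(\eta-s)}P_\psi}$ is bounded by $1$ there; applying the same strategy to the shifted quantity $\rho(s) + i e^{-2\pi s}P_\psi$ gives $|f(z) + izP_\psi| \leq B$ as $|z|\to\infty$, and sending $r\to\infty$ forces $a_n = 0$ for $n\geq 2$. Hence $f(z) = a_0 + a_1 z$.

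For the identification of constants: reality of $\rho(t) = -iP_t$ on the real axis forces $a_0 = -iR$ and $a_1 = -iQ$ for real $R,Q$; non-negativity $P_t = R + Qe^{-2\pi t} \geq 0$ for all real $t$ yields $R\geq 0$ (as $t\to +\infty$) and $Q\geq 0$ (as $t\to -\infty$); and setting $s=0$ gives $R+Q = P_0 = \langle\psi|P|\psi\rangle = P_\psi$, whence $R,Q \in [0, P_\psi]$. The main obstacle I anticipate is the growth step of the second paragraph: converting one-sided lower bounds on $\text{Re}\,\gamma$ into genuine two-sided magnitude bounds on $\rho$ (and hence on $f$) is delicate and relies crucially on the two-variable analyticity $\gamma(s,\eta) = e^{2\pi\eta}\rho(s)$ combined with Phragm\'en--Lindel\"of across carefully chosen sub-strips, with Hartogs' theorem in the background to legitimize the slice-by-slice bounds.
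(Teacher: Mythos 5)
Your overall skeleton --- periodicity of $\gamma$ in $s\to s+i$, a growth bound forcing $\rho$ to be affine in $z=e^{-2\pi s}$, then identification of the constants --- is the same as the paper's, and your final paragraph pinning down $R,Q$ is correct. But both main steps contain genuine gaps. On periodicity: comparing the two boundary expressions of Lemma~\ref{lemma5} at $s=t\mp i/2$ and writing the difference as $\langle\psi|[U_{a_t},M_t]|\psi\rangle$ is the right starting point, but your Taylor expansion of $U_{a_t}$ and of $U_{-a}$ inside $M_t$ is a formal manipulation of unbounded operators; the honest version (split off the piece whose $\psi$-expectation cancels exactly and Cauchy--Schwarz the rest, as in \eqref{lastb}) yields only $o(\epsilon)$ rather than $O(\epsilon^2)$, since $(\Theta')^\dagger\Delta^{-it}|\psi\rangle$ need not be an entire vector for $P$ --- that is still enough, but it is an estimate, not an expansion. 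More seriously, even granting that the boundary values of $\hat{g}_\epsilon$ at ${\rm Im}\,s=\pm 1/2$ agree up to $o(1)$, Lemma~\ref{lemma6} only gives convergence of $\hat{g}_\epsilon$ to $\gamma$ on compact subsets of the \emph{open} strip: you do not know that $\gamma$ extends continuously to ${\rm Im}\,s=\pm1/2$, that $\lim_\epsilon\hat{g}_\epsilon$ exists on those lines, or that the two agree, so ``the identity theorem promotes this to $\rho(s+i)=\rho(s)$'' does not follow. The paper closes exactly this hole by introducing $\hat{q}_\epsilon$ in \eqref{qeps}, which at finite $\epsilon$ is already holomorphic \emph{across} the identified seam, relating its limit to that of $\hat{g}_\epsilon$ via the Cauchy--Schwarz estimate, and running Vitali's theorem on $e^{-\hat{q}_\epsilon}$ over compacts straddling the seam; some such device is indispensable.

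On the growth step: the inference ``${\rm Re}\,\gamma\geq 0$ on $T_I\cup T_{II}$, hence by Phragm\'en--Lindel\"of $|\rho(s)|\leq A$'' is not valid. A one-sided lower bound on ${\rm Re}\,\gamma$ controls neither ${\rm Im}\,\gamma$ nor ${\rm Re}\,\gamma$ from above, and Phragm\'en--Lindel\"of propagates modulus bounds but cannot manufacture one from a real-part bound; you flag this as the delicate point but do not resolve it, so your Laurent-coefficient integrals have nothing to feed on. Two standard repairs exist: the paper's route, which applies the Hadamard factorization theorem to $e^{-\gamma_\eta(z)}$ --- by \eqref{limbound}/\eqref{entbound} a zero-free entire function of order at most one, forcing $\gamma_\eta$ to be a polynomial of degree at most one --- or a Borel--Carath\'eodory argument converting ${\rm Re}(-\gamma_\eta(z))\leq |z|P_\psi$ into $\max_{|z|\leq r}|\gamma_\eta|=O(r)$, after which Cauchy estimates finish. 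Your proposal contains neither, and without one of them the conclusion $f(z)=a_0+a_1 z$ is not established.
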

\begin{proof}
To show periodicity we define a new function:
\begin{equation}
\label{qeps}
\hat{q}_\epsilon(s,\eta) =\frac{1}{\epsilon}  \begin{cases} 
\left( 1 - \left< \Gamma_{III}(s^\star,\eta^\star) \right| \left. \psi \right> \right) & (s,\eta) \in \overline{T}_{III} \\ 
\left( 1 - \left< \psi \right| \left. \Gamma_{IV}(s,\eta) \right>\right) & (s,\eta) \in \overline{T}_{IV} 
\end{cases}
\end{equation}
This function is continuous under the identification of the top and bottom lines in Figure~\ref{fig:triangles} - see the right figure. That is if we identify $s = i/2 + t$ with $s= - i/2+t$ for fixed $-\infty< t < \infty$ and $0 \leq {\rm Im} \eta \leq 1/2$. This can be easily seen from \eqref{G3} and \eqref{G4} after replacing $c' \left| \Omega \right> \rightarrow \left| \psi \right>$.
By the edge of the wedge theorem $\hat{q}_\epsilon$ is holomorphic across this identified line. We can also extend the definition of $\hat{q}_\epsilon$ into $T_{I}$ and $T_{II}$ but we do not need this here. 

Now consider, for $(s,\eta) \in T_{III}$:
\begin{align}
\left| \hat{q}_\epsilon - \hat{g}_\epsilon  + \frac{(1- \left< \psi \right| U_{-a_s} \left| \psi \right> )}{\epsilon} \right| 
&= \frac{1}{\epsilon} \left|\left( \left| \psi \right> -  \left| \Gamma_{III} \right>, \left| \psi \right> - U_{-a_s} \left| \psi \right> \right)\right| \\
& \leq || \left| \psi \right> - \left| \Gamma_{III} \right>|| \left( \frac{|| \left| \psi \right> - U_{-a_s} \left| \psi \right>  ||}{\epsilon} \right) \\
& \leq \sqrt{2 \epsilon {\rm Re} \hat{q}_\epsilon } \left( \frac{|| \left| \psi \right> - U_{-a_s} \left| \psi \right>  ||}{\epsilon} \right)
\label{lastb}
\end{align}
where in the last inequality we used the fact that $|| \left| \Gamma_{III} \right> || \leq 1$ for $(s,\eta) \in T_{III}$. 
We can compute:
\begin{align}
 \frac{|| \left| \psi \right> - U_{-a_s} \left| \psi \right>  ||^2}{\epsilon^2}
 &= \frac{ 1 + \left< \psi \right| e^{-2 P {\rm Im} a_s } \left| \psi \right> - 2 {\rm Re} \left< \psi \right| e^{i a_s P} \left| \psi \right> }{\epsilon^2} \\ &\rightarrow  \left< \psi \right| P^2 \left| \psi \right> | e^{2\pi (\eta-s)}|^2
\end{align}
where in the last equation we have taken the limit $\epsilon \rightarrow 0^+$ which is indeed finite because $\psi$ is entire and so has finite fluctuations for $P$. We conclude from the bound in \eqref{lastb} that the only possible behavior for $\lim_{\epsilon \rightarrow 0^+} {\rm Re} \hat{q}_\epsilon < \infty$ (if it diverged then the left hand side would diverge quicker and violate the bound.) We have used the fact that all other quantities except $\sqrt{\epsilon}$ in \eqref{lastb} are finite in the limit $\epsilon \rightarrow 0^+$.

So the right hand side of \eqref{lastb} vanishes and we can thus compute the limit of this new function:
\begin{equation}
\label{rho2}
\lim_{\epsilon \rightarrow 0^+} \hat{q}_\epsilon(s,\eta) = e^{2\pi \eta} \left(\rho(s) + i \left< P \right>_\psi e^{-2\pi s}\right)
\end{equation}
uniformly for compact sub-regions of $\overline{T}_{III}$ and 
where a similar analysis in $\overline{T}_{IV}$ yields the same limit. Since we know that $\hat{q}_\epsilon$ is holomorphic across the identification $t + i/2 \equiv t - i/2$, it must be true that the limit is also holomorphic across this line (we have not shown uniform convergence of the limit everywhere across this line, but one can consider $e^{-\hat{q}_\epsilon}$ which is bounded on compacts and apply Vitali's theorem to guarantee the existence of this limit for compact sets crossing this line.) Our conclusion from \eqref{rho2} is that $\rho(s)$ can be extended to a function that shares this same holomorphy across the identification $t + i/2 \equiv t - i/2$. Since $\rho(s)$ is also holomorphic in the original $s$-strip $-1/2 < {\rm Im} s < 1/2$ we  get the desired periodicity. 

In light of the bound on $\gamma$ in \eqref{limbound} we write:
\begin{equation}
\gamma_\eta(z) \equiv \gamma(s,\eta) = e^{2\pi \eta} \rho(s)\,, \qquad z = e^{2\pi (\eta-s)}
\end{equation}
The bound translates to:
\begin{align}
\left|\exp(-\gamma_\eta(z)) \right| & \leq  \begin{cases} 1\,, & {\rm Im} z \geq 0 \\  \exp(- {\rm Im}(z) \left< P \right>_\psi ) \leq \exp(|z|\left< P \right>_\psi )  \, & {\rm Im} z \leq 0 
\end{cases}  \\
& \leq  \exp(|z|\left< P \right>_\psi )  
\label{entbound}
\end{align}
Periodicity under $s \rightarrow s + i$ tells us that this is an analytic function everywhere in the complex $z$ plane except maybe at $z=0$. However the above bound in a neighborhood of $z=0$ means that this is at worst a ``removable singularity'', so we can extend $\gamma_\eta(z)$ to an analytic function there. We don't actually need $\gamma_\eta(0)$ in the sequel and here it is sufficient to know that $\gamma_\eta$ can be extended to an analytic function.

Furthermore the bound \eqref{entbound} tells us that $e^{-\gamma_\eta}$ is an entire function of finite order\footnote{The order of an entire function  $f(z)$ is $\limsup_{r \rightarrow \infty} (\log \log {\max_{|z|=r} |f(z)|)/\log r}$. } at most $1$ and which has no zeros. By the Weierstrass/Hadamard Factorization theorem\footnote{This is the well known fact that an entire function can be determined by its zeros  up to an overall exponential of a polynomial who's degree is the same as the order of the entire function. } applied to $e^{-\gamma_\eta}$ we see that $\gamma_\eta(z)$ is a degree $1$ polynomial in $z$ at fixed $\eta$.
The $\eta$ dependence is also fixed by the form:
\begin{equation}
\gamma_\eta(z) = e^{2\pi \eta} \rho(s) =- i( Q  z + R e^{2\pi \eta}) = -  ie^{2\pi \eta}(R +  Q e^{-2\pi s})
\end{equation}
for unknown $R,Q$. At $s=0$ we have $R+Q = P_\psi$. The bound in \eqref{limbound} applied in the upper and lower $z$ planes then gives $0\leq Q \leq P_\psi$. 
\end{proof}

In terms of the ANE of the flowed state we have:
\begin{equation}
P_s = R + (P_\psi -R) e^{-2\pi s}
\end{equation}
which proves Lemma~\ref{lemma2} for entire states. 

\subsection{General states}

We now aim to finish the proof of Lemma~\ref{lemma2} which pertains to general states (with finite ANE) in the Hilbert space. 

\begin{proof}[Proof of Lemma~\ref{lemma2}]
For a general state $\psi$ we can approximate it by a sequence of entire states:
\begin{equation}
\label{defpsin}
\left| \psi_n \right> =Z_n\Pi_{\Lambda_n} \left| \psi \right>\,,\quad  Z_n =\left(\left< \psi \right| \Pi_{\Lambda_n} \left| \psi \right> \right)^{-1/2} 
\end{equation}
Consider the flowed null energy:
\begin{align}
P_s^{(n)} &\equiv   \big< \psi_s^{(n)} \big| P \big| \psi_s^{(n)} \big> \,, \quad  \big| \psi_s^{(n)} \big> = u_s'^{(n)} \left| \psi_n \right> \,, \,\, s\in \mathbb{R} \\
& u_s'^{(n)} = (D \Omega:D\psi_n )'_s = (\Delta_{\Omega}')^{is} (\Delta_{\Omega|\psi_n}')^{-is}  \in \mathcal{A}'_C 
\end{align}
we know, that for these states the ANE is finite and takes the form:
\begin{equation}
P_s^{(n)} = R_n + Q_n e^{-2\pi s} \qquad 0 \leq R_n, Q_n \leq P_\psi^{(n)} 
\end{equation}
where  $P_\psi^{(n)}$ is the ANE of the projected state. It satisfies:
\begin{equation}
P_\psi^{(n)} \leq Z_n P_\psi
\end{equation}
and as discussed in \eqref{pcont} limits to $P_\psi$ as $n \rightarrow \infty$. This means that $R_n, Q_n$ are bounded sequences of positive numbers. (Note that $Z_n$ converges to $1$.)

We need continuity of the modular operators and in particular the co-cycle. It was shown by Araki (actually combining two of his results; Lemma~4.1 in \cite{araki1977relative} and Theorem~10 in \cite{araki1974some}) that:
\begin{equation}
\lim_{n\rightarrow \infty}  (\Delta'_{\Omega|\psi_n})^{is} =  (\Delta_{\Omega|\psi}')^{is} 
\end{equation}
in the strong operator topology and uniformly on compact subsets of $\mathbb{R} \ni s$.\footnote{ Actually Araki showed this for states $\widehat{\Omega}, \widehat{\psi},\widehat{\psi}_n$ that are representatives of $\Omega,\psi,\psi_n$ in a natural positive cone associated to $\mathcal{A}_C'$ and some cyclic separating vector.  Taking this vector to be $\Omega$ and using the fact that $\Delta'_{\Omega|\widehat{\psi}_n} = \Delta'_{\Omega|\psi_n}$, where $\widehat{\psi}_n$ is related to $\psi_n$ by some unitary in $\mathcal{A}_C$, we have the desired continuity.}
We thus have the following continuity properties of the cocycle:
\begin{equation}
\lim_n u_s'^{(n)} = \lim_n (\Delta_{\Omega}')^{is} (\Delta_{\Omega|\psi_n}')^{-is} = u_s'
\end{equation}
strongly and uniformly on compacts.

We now show that $P_s^{(n)}$ is lower semi-continuous. 
Consider the spectral representation:
\begin{equation}
P_s^{(n)} = \int_0^\infty \lambda d \big< \psi_s^{(n)} \big| E_\lambda(P) \big| \psi_s^{(n)} \big> 
\geq  \int_0^\Lambda \lambda d \big< \psi_s^{(n)} \big| E_\lambda(P) \big| \psi_s^{(n)} \big> 
\end{equation}
for all $\Lambda > 0$. Since the projected operator $P \Pi_\Lambda$ is bounded we can now take the following limit on $n$ knowing the right hand side converges:
\begin{align}
\liminf_n P_s^{(n)} & \geq \int_0^\Lambda  \lambda d \big< \psi_s \big| E_\lambda(P) \big| \psi_s \big> 
\qquad \forall \Lambda \\ \implies \liminf_n P_s^{(n)}  &\geq \lim_{\Lambda \rightarrow \infty}  \int_0^\Lambda \lambda d \big< \psi_s \big| E_\lambda(P) \big| \psi_s \big>  = P_s
\end{align}
Since we know the left hand side divided by $Z_n$ is bounded for all $n$, and $\lim_n Z_n = 1$, then $\liminf$ must be finite. This implies that $P_s$ is finite. 

We will show that it saturates this bound. We return to our trusty structure function $g(s,\eta)$ from the previous sub-section. This function can still be studied for non entire states, although we have less control over its analytic properties. However here we only need study it for real $s,\eta$ or real
$s, \eta-i/2$.  

By using the algebra of modular inclusions one can derive the following identity for $a>0$:
\begin{align}
g(s,\eta) &= \left< \psi \right| (\Delta')^{-is} U_{-a} (\Delta')^{is} \left| \psi \right>
= \left< \psi \right| (D\psi:D\Omega)'_{s} U_{-a} (D\psi:D\Omega)_{-s} \Delta_{\Omega}^{-is}  \left| \psi \right>  \\
\label{commute}
& = \left< \psi \right|  U_{-a} (D\psi:D\Omega)_{-s} U_{a}   (D\psi:D\Omega)_{s}' U_{-a} \Delta_{\Omega}^{-is}  \left| \psi \right> \\
& = \left< \psi \right| U_{-a} X_a U_{-a_{s}} \left| \psi \right>\,, \qquad
X_a = (D\psi:D\Omega)_{-s} U_{a} \Delta^{-is}
\end{align}
where in \eqref{commute} we used the fact that $  U_{-a} (D\psi:D\Omega)_{-s} U_{a}  \in \mathcal{A}_{C_{a}}$ and so it commutes with $ (D\psi:D\Omega)'_{s}$.  We can thus approximate this in the limit $\epsilon \rightarrow 0$ (equivilently $a \rightarrow 0$) using:
\begin{align}
& | (1-g(s,\eta)) - (1- \left< \psi \right| U_{-a} \left| \psi \right> ) -  (1- \left< \psi \right| U_{-a_s} \left| \psi \right> ) -  (1- \left< \psi \right| X_a \left| \psi \right> ) |  \nonumber  \\ \nonumber
& \qquad \leq | \left< \psi \right| (1-U_{-a}) X_a (1-U_{-a_s} ) \left| \psi \right> |
 \\ & \qquad + | \left< \psi \right| (1-U_{-a} ) (1-X_a ) \left| \psi \right> | + | \left< \psi \right| (1-X_a ) (1-U_{-a_s} ) \left|  \psi \right> | 
 \end{align}
and:
\begin{align}
 || (1- U_{-a} ) \left| \psi \right> ||^2/a & = 2 (1 - {\rm Re} \left< \psi \right| U_{-a} \left| \psi \right> )/a \rightarrow 0  \\
 || (1- U_{-a_s} ) \left| \psi \right> ||^2/a &= 2 (1 - {\rm Re} \left< \psi \right| U_{-a_s} \left| \psi \right> )/a  
 \rightarrow 0 \\
 || (1- X_{a} ) \left| \psi \right> ||^2/a & \leq 2 (1 - {\rm Re} \left< \psi \right| X_a \left| \psi \right>)/a 
 = {\rm Re} (1-g(-s,\eta-s+i/2))/a \rightarrow 0
\end{align}
where in the later equation we have used $X_a^\dagger X_a \leq 1$ by dropping various support projectors. Using $|| X_a || \leq 1$ we arrive at the limit:
\begin{equation}
\gamma(s,\eta) =-  i e^{2\pi \eta}(1 + e^{-2\pi s}) P_\psi + \gamma(-s,\eta-s+i/2)
\end{equation}
where finiteness of $P_s$ for $\psi$ implies that we still have the following limit\footnote{See Corollary~\ref{corr3}, where we don't actually need entire states to prove the existence of the following limit at real $s,\eta$ given $P_s$ is finite}:
\begin{equation}
\lim_{\epsilon \rightarrow 0} \frac{1- g(s,\eta)}{\epsilon} = \gamma(s,\eta) =-  i e^{2\pi \eta} P_{s}
\end{equation}
and similarly for $\gamma(s,\eta+i/2)$. This gives the important constraint:
\begin{equation}
\label{sumc}
P_{s} + e^{-2\pi s} P_{-s} = (1 + e^{-2\pi s}) P_\psi
\end{equation}
Note that this equation is satisfied by the known form of $P_s$ for entire states  \eqref{known}. Since the left hand side of \eqref{sumc} is a sum of two positive quantities that are lower semi-continuous (with respect to the limit on $n$) and the right hand side is just continuous this implies that $P_s^{(n)}$ actually varies continuously with $n \rightarrow \infty$. 

Let us spell this out explicitly. Defining:
\begin{equation}
A_n =  (P_s^{(n)} - P_s )  \qquad B_n =(e^{-2\pi s} P_{-s}^{(n)} - e^{-2\pi s} P_{-s} ) 
\end{equation} 
By lower semi-continuity we have:
\begin{equation}
\liminf_n A_n \geq 0 \qquad \liminf_n B_n \geq 0
\end{equation}
But from the constraint \eqref{sumc} applied to $\psi$ and $\psi_n$ we have:
\begin{equation}
0 = \lim_{n} ( A_n + B_n  ) \geq \liminf_{n} A_n + \liminf_n B_n  \geq 0
\end{equation}
using continuity of $P_\psi^{(n)}$ and super-additivity of $\liminf$. This implies that separately:
\begin{equation}
\liminf_n A_n = 0 \implies \liminf_n P_s^{(n)} = P_s
\end{equation}
Using the known form of $P_s^{(n)}$ we find:
\begin{equation}
P_s =  (1-e^{-2\pi s}) \liminf_n (R_n) + P_\psi \qquad s \geq 0
\end{equation}
Now we know this equation is also true for any sub-sequence and since $R_n$ are a bounded sequence of numbers  we can pick a subsequence for which $\lim_k R_{n_k} = \limsup_n R_{n}$ which then implies that $\liminf_n R_n  =  \limsup_n R_{n} = \lim_n R_n$ which we now know must exist. So we have:
\begin{equation}\label{limR}
P_s = R + (P_\psi -R) e^{-2\pi s} \qquad R = \lim_n R_n
\end{equation}
which completes the proof.  \end{proof}

\section{Null Energy in the Natural Self-Dual Cone}
\label{sec:nullcone}

We aim to prove Lemma~\ref{lemma3} in this section.
We will firstly establish the statements about the ANE using entire states. Lower semicontinuity of the expectation values of $P$ is then sufficient to prove the claim.

\begin{proof}
Consider states $\psi_n$ defined in \eqref{defpsin} that limit to $\lim \psi_n \rightarrow \psi$. Consider the structure function $q^{(n)}_\epsilon(s,\eta)$ defined in \eqref{qeps}. Now with these entire states we know that the following limit exists on compact subsets in $\bar{T}_{III} \cup \bar{T}_{IV}$
\begin{equation}
\lim_{\epsilon \rightarrow 0} q_{\epsilon}^{(n)}(s,\eta) = - i e^{2\pi \eta} R_n (1-e^{-2\pi s})
\end{equation}
We can write out this structure function explicitly at $s=  i/2$ for $0 \leq {\rm Im} \eta \leq 1/2$:
\begin{equation}
q_\epsilon^{(n)}(i/2,\eta) = \left< \psi_n \right| \Theta'_n \frac{ 1 - U_{-a}}{\epsilon} (\Theta_n')^\dagger \left| \psi_n \right> = \big< \widehat{\psi}_n \big|  \frac{ 1 - U_{-a}}{\epsilon} \big| \widehat{\psi}_n \big>
\end{equation}
where we have used the relation between conjugation cocycles $\Theta$ and the states in the natural self-dual cone \eqref{def:natcone}. 

That is we can extract the null energy of the state in the natural self-dual cone using the $q_\epsilon$ structure function.
Again using monotone convergence for $\eta = i/4$ we can establish that indeed the ANE is finite here:
\begin{equation}
\widehat{P}_n =  \big< \widehat{\psi}_n \big| P \big| \widehat{\psi}_n \big> = 2 R_n \leq 2 P_n
\end{equation}
Now since $P$ is an unbounded positive operator we know that it varies lower semi-continuously if $\widehat{\psi}_n$ varies continuously. This can be established more explicitly as we did in the subsection above.
We have the result due to Araki \cite{araki1974some}:
\begin{equation}
\label{araki}
|| \big| \widehat{\psi}_n \big> - \big| \widehat{\psi} \big>  ||^2 \leq || \rho_{\psi_n} - \rho_\psi ||
\leq 2 || \left| \psi_n \right> - \left| \psi \right> ||
\end{equation}
where $\rho_{\psi}(\gamma) = \left< \psi \right| \gamma \left| \psi \right>$ is the linear functional on  operators in $\mathcal{A}_C$. The norm of the linear functional is:
\begin{equation}
|| \rho_{\psi_n} - \rho_\psi || = \sup_{\gamma \in \mathcal{A}_C; || \gamma||\leq 1}  | \rho_{\psi_n}(\gamma) - \rho_{\psi}(\gamma) |
\end{equation}
and we have applied the following inequality to get the last inequality in \eqref{araki}
\begin{equation}
| \left< \psi_n \right| \gamma \left| \psi_n \right> -\left< \psi \right| \gamma \left| \psi \right> |
\leq | ( \left| \psi_n\right> -\left| \psi \right>, \gamma \left| \psi \right> )| + | ( \left| \psi_n\right>, \gamma ( \left| \psi_n \right>- \left| \psi \right>) )|  \leq 2 || \gamma || \,\, 
|| \left| \psi_n \right> - \left| \psi \right>||
\end{equation}
for normalized vectors. So we have:
\begin{equation}
| \big< \widehat{\psi}_n \big| P \Pi_\Lambda \big| \widehat{\psi}_n \big> -\big< \widehat{\psi} \big| P \Pi_\Lambda \big| \widehat{\psi} \big> | \leq 2 C_\Lambda || \big| \widehat{\psi}_n \big>- \big| \widehat{\psi} \big>|| \leq  2 \sqrt{2} C_\Lambda \sqrt{|| \left| \psi_n \right>- \left| \psi \right>||} \rightarrow 0
\end{equation}
where $C_\Lambda = || P \Pi_\Lambda || < \infty$. So at any fixed $\Lambda$ we know the projected null energy $P \Pi_\Lambda$ of the states in the natural self-dual cone vary continuously. 

This implies that the ANE is lower semi-continuous. And since the numbers $\widehat{P}_n$ are bounded the limit state must have finite ANE:
\begin{equation}
2 P_\psi \geq \lim_n 2 R_n \geq \widehat{P}
\end{equation}
where we have used the fact that the limit on $R_n$ exists which is something we showed around \eqref{limR}. 
We are done.
 \end{proof}

\section{From Null Energy to Relative Entropy}

\label{sec:rel}

In this section we aim to prove Lemma~\ref{lemma1}. We tried many different approaches 
to this. The main difficulty relates to questions of the finiteness of relative entropy.
If we were to assume all the relative entropies are finite in \eqref{dsdsp} then the simplest way to proceed is to use the so called differentiate formula for relative entropy that involves the Connes cocycle, see for example Eq.~2 of \cite{Longo:2017mbg}, in combination with the hsmi-algebra. This differentiate formula is only correct if one knows a priori that the relative entropy is finite \cite{ohya2004quantum}. We however only want to make the assumptions in \eqref{assumpt} so we need a slightly different approach - in particular we will rely on \eqref{reltheta} to define the relative entropy. This later formula gives $\infty$ iff the relative entropy is infinite and so does not need any further assumptions. 

\begin{proof}
Firstly we note that by monotonicity $S(b_2)$ is finite for all $b_2 \geq c$ and $\bar{S}(b_1)$ is finite for all $b_1\leq c$. Consider the following function:
\begin{equation}
\label{firstfs}
f(s) = \left< \psi \right| \Delta_{c}^{-is} (\Delta_{b_2}')^{-is} \left| \psi \right>
\end{equation}
where $b_2 > c$ and
where we use the shorthand $\Delta_b = \Delta_{\psi|\Omega;A_b}$ and $\Delta_b' = \Delta_{\psi|\Omega;A_b'}$ etc. We aim to use properties of this function to show that $\bar{S}(b_2)$ is finite which we have \emph{not} assumed. 
Writing:
\begin{equation}
f(s) =\left(\Delta_{c}^{is^\star} \left| \psi \right>,  (\Delta_{b_2}')^{-is} \left| \psi \right>\right)
\end{equation}
By Tomita-Takesaki theory these vectors vary holomorphicaly in the strip $s \in S(0,1/2)$ and are strongly continuous on the closure of the strip such that $f(s)$ is holomorphic in the strip and continuous on the closure. It also satisfies the bound:
\begin{equation}
|f(t+ i \theta)| \leq || \Delta_c^\theta \left| \psi \right> || \,\,  || (\Delta'_{b_2})^\theta \left| \psi \right> ||
  \leq 1 
\end{equation}
for $0 \leq \theta \leq 1/2$. 
For real $s$ we can derive the following relationship:
\begin{equation}
f(s) = \left< \psi \right| (\Delta_c')^{-is} U_{a(1-e^{2\pi s} )} \Delta_{b_2}^{-is} \left| \psi \right> \,, \qquad s\in \mathbb{R}
\end{equation} 
using \eqref{def:cocycle}, \eqref{compmod} and the algebra of half sided modular inclusions \eqref{hsmi:algebra}.  We have set $a = b_2-c > 0$ and in this section we will \emph{not} analytically continue $a$. Note that this later expression for $f(s)$ also has a continuation to the strip $S(0,1/2)$ which can be seen by writing:
\begin{equation}
\label{secondfs}
f_{II}(s) = \left( (\Delta_c')^{is^\star}   \left| \psi \right>, U_{a(1-e^{2\pi s} )} \Delta_{b_2}^{-is} \left| \psi \right> \right)
\end{equation} 
and the fact that ${\rm Im}(1-e^{2\pi s} ) < 0$ for $s\in S(0,1/2)$ so that the translation operator is a bounded operator there. In fact these two analytic continuations must be the same. One way to see this is to show that they have the same values on the top edge of the strip also. Then via an equivalent integral equation to that discussed in \eqref{gkern} these must be the same analytic functions. 

The first expression in \eqref{firstfs} evaluates at $s=t+i/2$ to:
\begin{equation}
f(t+i/2) = \left< \Omega \right| (\Theta_{c}')^\dagger \Delta_{c}^{-it} (\Delta_{b_2}')^{it} \Theta_{b_2} \left| \Omega \right>
\end{equation}
and the second expression \eqref{secondfs} gives:
\begin{equation}
f_{II}(t+i/2) = \left< \Omega \right| \Theta_{c}^\dagger (\Delta_{c}')^{-it} U_{a(1+e^{2\pi t})} \Delta_{b_2}^{it}\Theta_{b_2}' \left| \Omega \right>
\end{equation}
Using $U_{a(1+e^{2\pi t})} = \Delta_{\Omega;c}^{-it} J_{\Omega;c} J_{\Omega;b_2} \Delta_{\Omega;b_2}^{it}$ which can be derived from \eqref{hsmi:algebra} we can write 
\begin{equation}
f_{II}(t+i/2) = \left< \Omega \right| J_{\Omega;c'} \mathcal{O}_{c}'(-t)^\dagger \mathcal{O}_{b_2}(t)   J_{\Omega;b_2} \left| \Omega \right> \,, \qquad \mathcal{O}_{b_2}(t) \equiv  J_{\Omega;b_2} \Delta_{\Omega;b_2}^{it} \Delta_{b_2}^{-it} J_{\Omega|\psi;b_2} 
\end{equation}
and the equivalent expression for $\mathcal{O}_{c}'(t)$. The two operators commute: $[\mathcal{O}_{b_2}(t), \mathcal{O}_c'(-t)]$ since $b_2>c$ which gives:
\begin{equation}
f_{II}(t+i/2)  = \left< \Omega \right| J_{\Omega|\psi;b_2} (\Delta'_{b_2})^{it} \Delta_c^{it} J_{\psi|\Omega;c'} \left| \Omega \right>
\end{equation}
which is equivalent to $f(t+i/2)$.  We conclude that $f(s) = f_{II}(s)$ for $s\in \bar{S}(0,1/2)$. 
There is probably a much simpler way to show this.

Now consider the estimate applied to the second continuation of $f$ (\eqref{secondfs})
\begin{align}
&\left| (1- f(i \theta)) - \left< \psi \right| (1-(\Delta_c')^\theta) \left| \psi \right> -  \left<\psi \right| (1- V_\theta)\left| \psi \right> -\left< \psi \right| (1-\Delta_{b_2}^\theta)  \left| \psi \right>  \right| \\
&\qquad = \nonumber \left| \left< \psi \right| ( 1 - (\Delta_c')^\theta)  V_\theta(1-\Delta_{b_2}^\theta) \left| \psi \right> + \left< \psi \right| ( 1 - (\Delta_c')^\theta)  (1- V_\theta) \left| \psi \right> \right. \\ & \qquad \qquad  \left. + \left< \psi \right| (1- V_\theta)(1-\Delta_{b_2}^\theta)   \left| \psi \right> \right| \\
& \qquad \leq \left| \left< \psi \right| ( 1 - (\Delta_c')^\theta)  V_\theta(1-\Delta_{b_2}^\theta)  \left| \psi \right> \right| + \left| \left< \psi \right|  ( 1 - (\Delta_c')^\theta)  (1- V_\theta) \  \left| \psi \right> \right|\nonumber \\ & \qquad \qquad + \left| \left< \psi \right|  (1- V_\theta)(1-\Delta_{b_2}^\theta)   \left| \psi \right> \right|
\label{eachterm}
\end{align}
where
\begin{equation}
V_{\theta} = U_{a ( 1- e^{2\pi i \theta}) }  = \exp( - a \sin\theta P) \exp( i a(1- \cos\theta)P)
\end{equation}
Now we can use the following limits:
\begin{align}
\label{limissb}
\lim_{\theta \rightarrow 0^+} || (1 - \Delta_{b_2}^\theta) \left| \psi \right> ||^2 /\theta
=\left(1 - 2 \left< \psi \right| \Delta_{b_2}^\theta \left| \psi \right>+\left< \psi \right| \Delta_{b_2}^{2\theta} \left| \psi \right> \right)/\theta \rightarrow 0 \\
\label{limissc}
\lim_{\theta \rightarrow 0^+} || (1 - (\Delta_{b}')^\theta) \left| \psi \right> ||^2 /\theta
=\left(1 - 2 \left< \psi \right| (\Delta_{c}')^\theta \left| \psi \right>+\left< \psi \right| (\Delta_{c}')^{2\theta} \left| \psi \right> \right)/\theta \rightarrow 0 \\
\lim_{\theta \rightarrow 0^+} || (1 - V_\theta) \left| \psi \right> ||^2 /\theta
=\left(1 - 2 {\rm Re} \left< \psi \right| V_\theta \left| \psi \right>+\left< \psi \right| V_\theta^\dagger V_\theta \left| \psi \right> \right)/\theta \rightarrow 0
\label{acancel}
\end{align}
where we have used the assumed finiteness of $S(b_2)$ and $\bar{S}(c)$ and the resulting existence of the limit in \eqref{reltheta} to give the cancelation in \eqref{limissb}-\eqref{limissc}.  We have also used the assumed finiteness of the null energy to compute the limit:
\begin{align}
\lim_{\theta \rightarrow 0^+} \frac{\left< \psi \right|1- V_\theta \left| \psi \right>}{\theta} & = \lim_{\theta \rightarrow 0^+} \int_0^\infty  \frac{(1-e^{ - ia \lambda(1- e^{2\pi i \theta })})}{\theta} d\left< \psi \right| E_{\lambda}(P) \left| \psi \right>  \\ 
& = \lim_{\theta \rightarrow 0^+} \frac{\zeta}{\theta} \int_0^\infty  \frac{(1-e^{  ia \lambda\zeta})}{\zeta} d\left< \psi \right| E_{\lambda}(P) \left| \psi \right>  =  2\pi a P_\psi
\end{align}
where we used the estimate $\left| \frac{(1-e^{i \zeta  \lambda})}{i \zeta}\right| \leq  \lambda$ for $0\leq \arg \zeta  \leq \pi$, which is true for $\zeta = e^{2\pi i \theta}-1$,  and this then allows us to use the dominated convergence theorem in order to pass the limit inside the integral. 
A similar analysis yields:
\begin{equation}
\lim_{\theta \rightarrow 0^+} \left(1 - \left< \psi \right| V_\theta^\dagger V_\theta \left| \psi \right>  \right)/\theta = 4\pi a P_\psi
\end{equation}
which then gives the cancelation in \eqref{acancel}. 

We also need the bound on the norm of this operator:
\begin{equation}
|| V_\theta || \leq 1
\end{equation}
which is true for $0\leq \theta \leq 1/2$.
We can now show that each term in the right hand side \eqref{eachterm} divided by $\theta$ vanishes in the limit. For example:
\begin{align} \nonumber
& \frac{\left| \left< \psi \right| ( 1 - (\Delta_c')^\theta)  V_\theta (1-\Delta_{b_2}^\theta)  \left| \psi \right> \right| }{ \theta} \\  &\qquad \qquad \qquad \qquad 
\leq ||  V_\theta ||  \frac{|| (1 - \Delta_{b_2}^\theta) \left| \psi \right> ||}{\sqrt{\theta}}  \frac{|| (1 - (\Delta_{c}')^\theta) \left| \psi \right> ||}{\sqrt{\theta}} \rightarrow 0
\end{align}
and similarly for the other terms. 
We conclude that:
\begin{equation}
\lim_{\theta \rightarrow 0} \frac{( 1- f(i\theta))}{\theta} = \bar{S}(c) + S(b_2) + 2\pi a P_\psi
\end{equation}

Now we analyze $f(s)$ from the original definition in \eqref{firstfs}. We have the bound:
\begin{align}\nonumber
&\left| 1 - f(i\theta) - (1- \left< \psi \right| \Delta_c^\theta \left| \psi \right> )  -  (1- \left< \psi \right| (\Delta_{b_2}')^\theta) \left| \psi \right>   \right| 
= \left| \left( \left| \psi \right> - \Delta_c^{\theta} \left| \psi \right> , \left| \psi \right> - (\Delta_{b_2}')^{\theta} \left| \psi \right> \right) \right| \\
\label{betterbound}
&\, \leq \left(1- 2\left< \psi \right|  \Delta_c^{\theta}  \left| \psi \right> + \left< \psi \right|  \Delta_c^{2\theta}  \left| \psi \right> \right)^{1/2} \left(1- 2\left< \psi \right| (\Delta'_{b_2})^\theta \left| \psi \right>+\left< \psi \right| (\Delta'_{b_2})^{2\theta} \left| \psi \right>\right)^{1/2}  \\
& \, \leq \left(2- 2\left< \psi \right|  \Delta_c^{\theta}  \left| \psi \right>  \right)^{1/2} \left(2- 2\left< \psi \right| (\Delta'_{b_2})^\theta \left| \psi \right>\right)^{1/2}  
\label{boundcb}
\end{align}
where we used the bound $ \left< \psi \right|  \Delta_c^{2\theta}  \left| \psi \right> \leq 1$ valid for $0 \leq \theta \leq 1/2$. Let us define:
\begin{align}
A = \frac{1- \left< \psi \right| \Delta_c^\theta \left| \psi \right> }{\theta}
\qquad B = \frac{1- \left< \psi \right| (\Delta_{b_2}')^\theta \left| \psi \right>}{\theta} \qquad
C = \frac{1 - {\rm Re} f(i\theta)}{\theta}
\end{align}
we need to show show that the limit $\theta \rightarrow 0$ is finite for $B$ knowing that the same limit for $A,C$ are finite. Note that all $A,B,C$ are non negative. The bound in \eqref{boundcb} translates to:
\begin{equation}
|C - A - B| \leq 2 \sqrt{A B}
\end{equation}
where it is already clear that the limit on $B$ has to be finite. Slightly more explicitly we can translate this bound into:
\begin{equation}
| \sqrt{C} - \sqrt{A} | \leq \sqrt{B} \leq \sqrt{C} + \sqrt{A}
\end{equation}
We know that $B$ behaves well under limits (the limit always exists but could be $\infty$), due to the monotonicity property as a function of $\theta$ of the spectral integral that defines $B$. 
This was discussed around \eqref{reltheta}. So the bound above and finiteness of the limits for $C,A$ imply that the limit on $B$ exists and is equal to:
\begin{equation}
\lim_{\theta \rightarrow 0^+} B = \bar{S}(b_2)  < \infty
\end{equation}
 Knowing this we can improve the bound using \eqref{betterbound} instead, where we now find that the right hand side of \eqref{betterbound} vanishes in the limit and thus $\lim_{\theta \rightarrow 0^+} (C-A-B) = 0$ such that:
\begin{equation}
\bar{S}(b_2) + S(c) = \lim_{\theta \rightarrow 0^+} C = \bar{S}(c) + S(b_2) + 2\pi a P_\psi
\end{equation}
This is the relation we wanted to establish, at least for $b_1 = c$. We have also established that $\bar{S}(b_2) < \infty$ for all $b_2$.  We can repeat the above analysis for $(c,b_2) \rightarrow (b_1,c)$ with $b_1 < c$ to conclude that $S(b_1) < \infty$ for all $b_1$ (switching the roles of $A \leftrightarrow B$ above.)  
Finally we can extend the relation \eqref{dsdsp} to any $b_1, b_2$ by using our newfound knowledge that all the relative entropies are finite which allows us to apply the above discussion more generally. 

The continuity property of $S(b)$ follows since $\bar{S}(b)$ is monotonic so:
\begin{equation}
|S(b_1) - S(b_2) |  \leq 2\pi P_\psi |b_1- b_2|
\end{equation}
which is the definition of Lipschitz continuous. 
\end{proof}

\section{Discussion}
\label{sec:disc}

We end with a discussion of some loose ends and also some possible directions for future work. 

While our discussion of relative entropy in the flowed and natural cone states is very general there was still one main assumption which was the requirement of finite $P_\psi$ for the input state (or for some possible purification.) Technically this assumption was necessary since it allowed us to extrapolating our results (Lemma~\ref{lemma2}-\ref{lemma3}) for entire states to more general states.
It is natural to ask if we can relax this condition. Since we are discussing bounds on relative entropy (for the outer/unprimed region) in the first place the relative entropy must always be finite - does this imply that the state in the natural cone has finite null energy? This is not obvious to us either way. The finiteness of relative entropy for the ``outer region'' might remove possible IR issues and it seems we have dealt with any possible UV issues at the entangling cut in this paper. It is possible that a more thorough study of the structure function $q_\epsilon$ could answer this question either way. It is also possible that the results in \cite{Lashkari:2018nsl} can be used for similar purposes - it would be interesting to explore this moving forwards.

\subsection{Possible Relations to Recovery Maps}

A starting point for this work was an attempt to apply recent results in quantum information theory \cite{wilde2015recoverability,junge2018universal} which, from a very simple minded perspective, give strengthening's of the monotonicity property of relative entropy or as it is known in this work - the data processing inequality.
This body of work aims to find a best guess for the inverse of the action of a noisy quantum channel $\mathcal{N}$ (completely positive trace preserving map) on some density matrices $\rho,\omega$. This best guess, or recovered state, is an important ingredient in the strengthened version of monotoncity.
In terms of density matrices  $\rho$ and $\omega$ the data processing inequality is:
\begin{equation}
S(\rho ||\omega) - S(\mathcal{N}[\rho] || \mathcal{N}[\omega]) \geq 0
\end{equation}
We consider the special case in which the completely positive map is associated to an inclusion of algebras. 
We will model this for now with Hilbert spaces that are finite dimensional with the factorization $\mathcal{H}_A = \mathcal{H}_B \otimes  \mathcal{H}_C $ and where the quantum channel is simply a partial trace over $C$. 
This is not actually appropriate in QFT, since the algebra's have type-III factors, however the results in \cite{junge2018universal} were not worked out in this more general setting. Instead we will simply try to use these formulas to guess how it might translate into the QFT case. We consider the density matrices $\rho_A$ and $\omega_A$ that come from global pure states $\left| \psi \right>, \left| \Omega\right>$ respectively. 

Consider the improved monotonicity inequality proven in \cite{junge2018universal} which will be the main ingredient of our discussion:
\begin{equation}
\label{impbd}
S(\rho_A || \omega_A) - S(\rho_B || \omega_B) \geq -2 \pi \int_{-\infty}^{\infty} ds (\cosh(2\pi s) +1)^{-1} \log \left\{ \mathcal{F}(\rho_A ,\mathcal{R}^{s}_{\omega, \mathcal{N}} \circ \mathcal{N}[\rho_A]) \right\} 
\end{equation}
The noisy channel denoted by $\mathcal{N}$ (partial trace over $C$), the approximate recovery map $\mathcal{R}$, and the fidelity between the original and the recovered state $\mathcal{F}$  have the following form:
\begin{equation}
\begin{aligned}
\mathcal{N}(\rho_A) &= \Tr_C \rho_A = \rho_B\\
\mathcal{R}^{s}_{\omega, \mathcal{N}} \circ \mathcal{N}(\rho_A) &= \rho_R = \omega_A^{-is + 1/2} \left( \omega_B^{is - 1/2} \rho_B \omega_B^{-is - 1/2} \otimes 1_{C} \right) \omega_A^{is + 1/2}\\
\mathcal{F}(\rho_A , \rho_R) &= \left[\Tr \sqrt{\rho_A^{1/2} \rho_R \rho_A^{1/2}} \right]^2 
\end{aligned}
\end{equation}
The $s=0$ case above is known as the Petz map \cite{petz1986sufficient,petz1988sufficiency,ohya2004quantum}.
Consider the expectation values of operators in this recovered state:
\begin{equation}
 \Tr_A \mathcal{R}^s ( \Tr_C( \rho_A) O_A )= \Tr_B \rho_B (\mathcal{R}^{s})^\dagger(O_A) = \left< \psi \right| (\mathcal{R}^{s})^\dagger(O_A) \left| \psi \right>
\end{equation}
where the adjoint of the recovery map is defined with respect to the matrix inner product: $\Tr M_1^\dagger M_2$. 
To translate these results into QFT we firstly write them in terms of modular operators. We can represent these here as operators on matrices. 
One can write the adjoint recovery map at $s=0$ in the form:
\begin{equation}
L_{(\mathcal{R}^{0})^\dagger(O_A)} = J_B \mathcal{V}^\dagger J_A L_{O_A} J_A \mathcal{V} J_B
\end{equation}
where we have defined the following operators on matrices:
\begin{align}
L_X(M) = X M \qquad R_X(M) &= M X \qquad  J_A(M_A) = M_A^\dagger  \qquad J_B(M_B) = M_B^\dagger \\
& \mathcal{V}^\dagger = R_{\omega_B^{-1/2}} \mathcal{N} R_{\omega_A^{1/2}}  
\end{align}
For non-zero $s$ we simply include modular flow:
\begin{equation}
(\mathcal{R}^{s})^\dagger(O_A) = \sigma_{-s}^B((\mathcal{R}^{0})^\dagger(\sigma_s^{A} (O_A) ))
\end{equation}
where $\sigma_s^A(O_A) = \omega_A^{i s} O_A \omega_A^{-is}$ etc. Note that:
\begin{equation}
\mathcal{V}(L_{O_B} \omega_B^{1/2}) = L_{O_B \otimes 1_C} \omega_A^{1/2}
\end{equation}
and in this form represents an isometric embedding of the Hilbert spaces of matrices $\mathcal{V}: {\rm End}(\mathcal{H}_{B}) \rightarrow {\rm End}(\mathcal{H}_{A})$ commuting with the action of the algebra $L_{O_B}$. Note that $\omega_B^{1/2}, \omega_A^{1/2}$ should be interpreted as vectors in the Hilbert space of matrices represent certain purifications of the ``vacuum'' state. 
When we pass to QFT 
we should replace $\omega_B^{1/2} \rightarrow \left| \Omega \right>$ and $\omega_A^{1/2} \rightarrow \left| \Omega \right>$ both on the same Hilbert space and the isometric embedding $\mathcal{V}$ becomes trivial. 
The resulting recovery map at $s=0$ is well known \cite{accardi1982conditional,petz1986sufficient,petz1988sufficiency} and in fact von Neumann algebras was the original setting where Petz studied this. 

Putting this together we get the following adjoint recovery map appropriate for QFT:
\begin{equation}
(\mathcal{R}^s)^\dagger(O_A) = \sigma_{-s}^B(J_B J_A \sigma_s^A(O_A) J_A J_B) 
\end{equation}
We can now apply the theory of half-sided modular inclusions \eqref{hsmi:algebra}:
\begin{equation}
(\mathcal{R}^s)^\dagger(O_A) =U_{a(1+ e^{2\pi s})} O_A U_{-a(1+ e^{2\pi s})} 
\end{equation}
where $a$ is the null translation between null cuts $A$ and $B$.
Passing back to the Schrodinger picture the appropriate recovered state on the full QFT Hilbert space is simply:
\begin{equation}
U_{-a(1+ e^{2\pi s})} \ket{\psi}    
\end{equation}
Now the improved bound in \eqref{impbd} tells us to find the fidelity of this state with $\left| \psi \right>$ when restricted to the algebra $A$. The fidelity in the general von Neumann setting was defined by Uhlmann \cite{uhlmann1976transition} as:
\begin{equation}
\label{uhlfid}
\mathcal{F}_A( \psi , U_{-a(1+ e^{2\pi s})} \psi ) 
\equiv \sup_{u_A' } \left| \left< \psi| u_A'  U_{-a(1+ e^{2\pi s})} | \psi \right> \right|
\end{equation}
We would like to be able to compute this in the limit $a \rightarrow 0$. This is a hard task that we do not solve generally. 

However in the limit of large subsystem size for both $A$ and $B$ we can make progress.
In this case the difference of relative entropies approaches the ANE. 
This can be seen from the sum rule \eqref{dsdsp} since we should take the complement relative entropies to vanish when the complement regions are pushed far away.

We want to get a simplified expression for the recovery bound. In this approximation, the expression for fidelity in \eqref{uhlfid} becomes the square of the transition amplitude between the two pure states:
\begin{equation}
\mathcal{F}_A(\ket{\psi} ,{\mathcal{R}^{s}}({\ket{\psi}})) \rightarrow \abs{\bra{\psi} e^{-ia(1+e^{2\pi s}) P} \ket{\psi}}^2
\end{equation}

\noindent
The bound on monotonicity can be further simplified with the coordinate transformation $ y = a (1 + e^{2 \pi s}) $ and the identity $-2 \pi (\cosh(2\pi s) +1)^{-1} = 2\partial_s (\frac{1}{e^{2\pi s} + 1})$. This gives rise to the following inequality:

\begin{equation}
\label{tosat}
\bra{\psi} P \ket{\psi} \geq - \frac{2}{\pi}\int_{a}^{\infty} dy \frac{1}{y^2} \log{\abs{\bra{\psi} e^{-i y P} \ket{\psi}}}
\end{equation}

\noindent
We can do this later integral:
\begin{equation}
-\pi \mathcal{I} =\int_{a}^{\infty} dy \frac{1}{y^2} \log{\bra{\psi} e^{i y P} \ket{\psi}}
+ \int_{a}^{\infty} dy \frac{1}{{y}^2} \log{\bra{\psi} e^{-i y P} \ket{\psi}}
\end{equation}
by sending $y \rightarrow -y$ in the second integral after which we can combine the integrals and deform the contour to give:
\begin{equation}
\mathcal{I} = \int_{-a}^{a} dy \frac{1}{\pi{(y+i\epsilon)}^2} \log{\bra{\psi} e^{i y P} \ket{\psi}}
\end{equation}
and we have dropped a contribution from large $y$ in the UHP which is justified since the wavefunction overlap is bounded (by $1$) there.

The argument of the logarithm can be expanded. Up to first order in $a$, this integral is
\begin{equation}
\mathcal{I} =  \bra{\psi} P \ket{\psi} - \frac{a}{\pi} (\bra{\psi} P^2 \ket{\psi} - \bra{\psi} P \ket{\psi}^2) + O(a^3)
\end{equation} 
We see that ANE is saturated by the bound from recovery map if we take $A$ to be the whole region. The improved bound on monotonicity becomes the (trivial) statement that the variance of the ANE is positive:
\begin{equation}\label{pfluct}
\frac{a}{\pi} \left( \bra{\psi} {P^2} \ket{\psi} - \bra{\psi} P \ket{\psi}^2 \right) \geq 0
\end{equation}
We can compare this to other recovery maps, for example \cite{wilde2015recoverability} where we do not find saturation:
\begin{equation}
\begin{aligned}
S(\rho_A || \omega_A) - S(\rho_B || \omega_B) &\geq  -\log \left\{ \sup_{s \in \mathbb{R}} \mathcal{F}(\rho_A ,\mathcal{R}^{s}_{\omega, \Tr_C} \circ \Tr_C[\rho_A]) \right\} \\
&\geq -\log \left\{ \sup_{s\in \mathbb{R}} \abs{ \bra{\psi} U_{-a(1+ e^{2\pi s})} \ket{\psi}}^2 \right\} = -\log \left\{ \abs{ \bra{\psi} U_{-a} \ket{\psi}}^2 
\right\} \\
\end{aligned}
\end{equation}
and in the later step we have assumed the largest overlap comes from the smallest translation. 
Expanding this fidelity, we see that the lowest order term is quadratic:
\begin{equation}
S(\rho_A || \omega_A) - S(\rho_B || \omega_B) \geq a^2( \bra{\psi} {P^2} \ket{\psi} - \bra{\psi} P \ket{\psi}^2 )
\end{equation}
which is not as tight as the bound in \eqref{impbd}.

This result has led us to conjecture that this saturation of the bound \eqref{impbd} in the limit $a \rightarrow 0$ continues to hold if we do not make the large subsystem size approximation.
For example it might be that the purifications that we worked with for the most of this paper, which involve the more complicated state dependent relative modular flows, might play a role in estimating the fidelity in \eqref{uhlfid}. Perhaps the methods of non-commutative $L_p$ spaces \cite{Lashkari:2018nsl,kosaki1984applications,haagerup1979lp} will be important for this. We suspect this could be the case since the original proofs of the strengthened inequalities relied on the finite quantum system version of these $L_p$ spaces. 

If this conjecture is true then it would be fascinating to compute the leading quadratic correction as $a \rightarrow 0$ and see what replaces \eqref{pfluct}.

\subsection{Other future directions}

We suspect that these methods might lead to new and improved bounds compare to the QNEC. For example if the recovery map story above works out then moving to second order in the limit $a \rightarrow 0$ could result in new bounds. It would also be nice to try to work out a story away from the robust confines of Rindler cuts. In more general curved spacetimes there may be no natural vacuum state to work with in order compute relative entropies. However an approximate vacuum near the cut might do the job and this is especially interesting if we only go after the QNEC which is a somewhat local constraint. Perhaps it is local enough to not care about the details of the state one should compare to. 

Potential other targets for these results include a possible algebraic approach to the statement of QNEC saturation \cite{Leichenauer:2018obf}. This is a statement about the second functional variation of relative entropy and its vanishing for the diagonal/contact piece that appears in this variation. This was shown originally in holographic theories \cite{Leichenauer:2018obf} and then for theories with a twist gap in \cite{qnecsat}. The distinctive behavior of free theories, where saturation is absent, might make one suspect the algebraic approach is not suitable for this question. However we are still optimistic that there might be a story here.

It would also be interesting to put various applications of the ANEC \cite{Hofman:2008ar} and the QNEC \cite{Callebaut:2018nlq} through these modular flow ``filters'' and see what happens. For example it would be fascinating to see what becomes of the BMS algebra uncovered in \cite{Cordova:2018ygx} under the action of relative modular flow.

Finally it is important to uncover the AdS/CFT dual of these statements. Likely the methods studied in \cite{Engelhardt:2018kcs,Engelhardt:2017aux} and \cite{Neuenfeld:2018dim,Casini:2018kzx} would be useful here.

\appendix

\section{Relative modular operator}
\label{app:mod}

In this appendix we collect various formula related to the relative modular operator. We are particularly interested in defining these objects when the vector states are not necessarily cyclic and separating.  
These considerations are standard and can be found in the Appendix of \cite{araki1982positive}. {\emph We warn the reader that we have a different convention for labeling our $S$ and $\Delta$ relative modular operators - the state labels are switched. This convention was used in \cite{Witten:2018zxz} and we stick with this.}

Take $\psi$ to not be cyclic and separating. This means that there could be some $\alpha \in \mathcal{A}$ such that $\alpha \left| \psi \right> =0$ (not separating) and also that $\mathcal{A} \left| \psi\right> $ may generate a proper subspace of $\mathcal{H}$ instead of the full Hilbert space (not cyclic). To describe this situation we define support projections as the minimal projectors that satisfy:
\begin{align}
s^\mathcal{A}(\psi) \left| \psi \right> = \left| \psi \right>  \qquad s^\mathcal{A}(\psi) \in \mathcal{A} \\
s^\mathcal{A'}(\psi) \left| \psi \right> = \left| \psi \right>  \qquad s^\mathcal{A'}(\psi) \in \mathcal{A}' 
\end{align}
An equivalent definition follows from finding the projector onto the following subspaces:
\begin{align}
\left[ \mathcal{A}' \left| \psi \right> \right] &= \pi(\psi)\mathcal{H}  \subset \mathcal{H} \\
\left[ \mathcal{A} \left| \psi \right> \right]&= \pi'(\psi)\mathcal{H}  \subset \mathcal{H} 
\end{align}
These are seen to be equivalent as follows. Firstly the $\pi(\psi)$ commutes with $\mathcal{A}'$ since for arbitrary state $\left|\phi_i\right> \in \mathcal{H}$ and for all $\alpha' \in \mathcal{A}'$:
\begin{align}
\left< \phi_1 \right| \left[ \pi_\mathcal{A}(\psi), \alpha' \right] \left| \phi_2 \right>
& = \left< \chi_1 \right| \alpha' \left| \phi_2 \right> - \left< \phi_1 \right| \alpha' \left| \chi_2 \right>
\qquad  & \left( \left| \chi_i \right> = \pi(\psi) \left| \phi_i \right> \right) \\
& =  \left< \chi_1 \right| \alpha' \left| \chi_2 \right> - \left< \chi_1 \right| \alpha' \left| \chi_2 \right>= 0
\qquad &\left( \alpha' \left| \chi_i \right> =  \pi(\psi) \alpha' \left| \chi_i \right> \right)  
\end{align}
so $\pi$ is in $\mathcal{A}$.
Secondly it is the minimal such projector leaving $\psi$ invariant since if it were not there would be another projector $\pi_2 \in \mathcal{A}$ with $\pi_2 \mathcal{H} \subset \pi(\psi) \mathcal{H}$ which also leaves invariant the subspace:
\begin{equation}
\left[ A' \left| \psi \right> \right] = \left[ A' \pi_2 \left| \psi \right> \right] = \pi_2 \left[ A' \left| \psi \right> \right] 
\end{equation}
such that $\pi_2 \mathcal{H} \subset \pi(\psi) \mathcal{H} = \pi_2  \pi(\psi) \mathcal{H} \subset \pi_2 \mathcal{H}$ implying that $\pi_2 = \pi(\psi)$. Thus
$\pi = s^\mathcal{A}$ and similarly for the commutant. 

Note that if $\pi(\psi)$ is not the unit operator, then $\psi$ is not cyclic for $\mathcal{A}'$ and  $(1-\pi(\psi))$ annihilates $\psi$ which means that $\psi$ is not separating for $\mathcal{A}$. That is the lack of either of these two properties exchange under $\mathcal{A} \leftrightarrow \mathcal{A}'$.

We now move to the modular operators. We will consider two state $\psi,\phi$ neither of which needs to be cyclic and separating. 
We start with the definition of the Tomita operators:
\begin{align}
\label{defpp}
S_{\psi|\phi}\left( \alpha \left| \psi \right> + \left| \chi' \right> \right) &= \pi(\psi) \alpha^\dagger \left| \phi \right>  \qquad \forall \,\, \chi' \in (1-\pi'(\psi)) \mathcal{H} \\
S_{\phi|\psi} \left( \alpha \left| \phi \right> + \left|\xi'\right> \right)&=  \pi(\phi) \alpha^\dagger \left| \psi \right>  \qquad \forall \, \, \xi' \in (1-\pi'(\phi)) \mathcal{H}
\label{defpp2}
\end{align}
for $\alpha \in \mathcal{A}$. Note that, for the first equation above, if both $\alpha \left| \psi \right> =0$ and $\left| \chi' \right> =0$ vanish then $\alpha [ \mathcal{A}' \left| \psi \right> ] = 0 \implies \pi(\psi) \alpha^\dagger = 0$ so $0$ is mapped to $0$ as is necessary for a linear operator. The Tomita operators are closable as defined and we will use the same symbol for the closure as the original operator. The support of these operators is:
\begin{equation}
\label{suppS}
{\rm supp} ( S_{\psi|\phi}, \, S_{\phi|\psi}^\dagger ) = \pi'(\psi) \pi(\phi) \mathcal{H} \qquad  {\rm supp} ( S_{\phi|\psi}, \,  S_{\psi|\phi}^\dagger ) = \pi(\psi) \pi'(\phi) \mathcal{H}
 \end{equation}
Applying the definitions twice we have:
\begin{align}
\label{Ssquare}
S_{\psi|\phi} S_{\phi|\psi} &= \pi(\psi) \pi'(\phi) \\
S_{\phi|\psi}  S_{\psi|\phi} &= \pi'(\psi) \pi(\phi)
\end{align}
For the commutant algebra we have:
\begin{align}
S_{\psi|\phi}'\left( \alpha' \left| \psi \right> + \left| \chi \right> \right) &= \pi'(\psi)  (\alpha')^\dagger \left| \phi \right>  \qquad \forall \, \, \chi \in (1-\pi(\psi)) \mathcal{H} \\
S_{\phi|\psi}' ( \alpha' \left| \phi \right>  + \left| \xi \right>) &= \pi'(\phi) (\alpha')^\dagger \left| \psi \right>
\qquad \forall \, \, \chi \in (1-\pi(\phi)) \mathcal{H} 
\end{align}
for $\alpha' \in \mathcal{A}'$ with support that is complementary to \eqref{suppS}. And similar equations hold for the commutant as in \eqref{Ssquare}. Now consider:
\begin{align}
& \left( \beta' \left| \psi \right> + \left| \chi \right> , S_{\psi|\phi}\left( \alpha \left| \psi \right> + \left| \chi' \right> \right) \right)  = \left( \beta' \left| \psi \right> + \left| \chi \right> ,  \pi(\psi) \alpha^\dagger \left| \phi \right>  \right)  \\
& \quad  =  \left( \beta' \left| \psi \right>  ,  \alpha^\dagger \left| \phi \right>  \right)
= \left( \alpha \left| \psi \right>, (\beta')^\dagger \left| \phi \right> \right) = \left( \alpha \left| \psi \right> + \left| \chi'\right>, \pi'(\psi) (\beta')^\dagger \left| \phi \right> \right) \\
&\quad = \left( \alpha \left| \psi \right>  + \left| \chi'\right> , S_{\psi|\phi}'\left( \beta' \left| \psi \right> + \left| \chi \right> \right) \right)
\end{align} 
which means that (because the above states are dense on the appropriate support)
\begin{equation}
\label{stocomp}
S_{\psi|\phi}' = S_{\psi|\phi}^\dagger \,, \qquad  S_{\phi|\psi}' = S_{\phi|\psi}^\dagger
\end{equation}
where the later equation follows a similar analysis.

We move now to the relative modular operators. Consider the positive self adjoint operators:
\begin{equation}
\label{delsq}
\Delta_{\psi|\phi} = S_{\psi|\phi}^\dagger S_{\psi|\phi}
\qquad \Delta_{\phi|\psi} = S_{\phi|\psi}^\dagger S_{\phi|\psi}
\end{equation}
with support 
\begin{equation}
{\rm supp} ( \Delta_{\psi|\phi} ) = \pi'(\psi) \pi(\phi) \mathcal{H} \qquad  {\rm supp} ( \Delta_{\phi|\psi}  ) = \pi(\psi) \pi'(\phi) \mathcal{H}
 \end{equation}
For the commutant algebra we learn that:
\begin{align}
\Delta_{\psi|\phi}' =  (S_{\psi|\phi}')^\dagger S_{\psi|\phi}' =  S_{\psi|\phi} S_{\psi|\phi}^\dagger \quad \implies \quad \Delta_{\psi|\phi}'  \Delta_{\phi|\psi} =  \pi(\psi) \pi'(\phi) \\
\Delta_{\phi|\psi}' =  S_{\phi|\psi} S_{\phi|\psi}^\dagger \quad \implies \quad \Delta_{\phi|\psi}'  \Delta_{\psi|\phi} =  \pi'(\psi) \pi(\phi)
\end{align}
We can define powers of the modular operators $\Delta_{\psi|\phi}^z$ etc. to be zero when acting on $(1-\pi'(\psi) \pi(\phi)) \mathcal{H}$ and to be the usual power when acting on the support of $\Delta_{\psi|\phi}$
which for example means that $\Delta_{\psi|\phi}^0 =\pi'(\psi) \pi(\phi) $. So for example we have:
\begin{equation}
(\Delta_{\psi|\phi}')^{z} \Delta_{\phi|\psi}^{-z} = \pi(\psi) \pi'(\phi)
\end{equation}
Furthermore we can apply polar decompositions to the Tomita operators, where the anti-linear part is not unitary, but rather a partial anti-linear isometry with the support and range of the Tomita operators:
\begin{equation}
S_{\psi|\phi} =  J_{\psi|\phi} \Delta_{\psi|\phi}^{1/2} \qquad \qquad {\rm etc}
\end{equation}
where:
\begin{align}
J_{\psi|\phi}^\dagger J_{\psi|\phi} &= \pi'(\psi) \pi(\phi) \qquad J_{\psi|\phi} J_{\psi|\phi}^\dagger = \pi(\psi) \pi'(\phi) \\
 J_{\phi|\psi}^\dagger J_{\phi|\psi} &= \pi(\psi)\pi'(\phi) \qquad J_{\phi|\psi} J_{\phi|\psi}^\dagger = \pi'(\psi) \pi(\phi) \qquad \qquad {\rm etc}
\end{align}
(with appropriate support and range.)
Plugging back into the Tomita operators and \eqref{delsq} we have:
\begin{equation}
J_{\psi|\phi} \Delta^{1/2}_{\psi|\phi} J_{\phi|\psi} \Delta^{1/2}_{\phi|\psi} = \pi(\psi) \pi'(\phi)
\quad \implies \quad J_{\psi|\phi} J_{\phi|\psi} ( J_{\phi|\psi}^\dagger \Delta_{\psi|\phi}^{1/2} J_{\phi|\psi} ) = \Delta_{\phi|\psi}^{-1/2} 
\end{equation}
which by the uniqueness of the polar decomposition implies that:
\begin{align}
J_{\psi|\phi} J_{\phi|\psi} =  \pi(\psi) \pi'(\phi) \qquad J_{\psi|\phi} = J_{\phi|\psi}^\dagger \qquad J_{\psi|\phi} \Delta_{\psi|\phi}^{1/2} J_{\phi|\psi} = \Delta_{\phi|\psi}^{-1/2} 
\end{align} 
where the last equation implies that:
\begin{equation}
\label{jimport}
J_{\psi|\phi} \Delta_{\psi|\phi}^{is} = \Delta_{\phi|\psi}^{is} J_{\psi|\phi} 
\end{equation}
by the anti-linearity of $J$. 

For the complement we use \eqref{stocomp} to derive:
\begin{equation}
\label{compmod}
J_{\psi|\phi}' = J_{\phi|\psi} \qquad (\Delta_{\psi|\phi}')^z = \Delta_{\phi|\psi}^{-z}
\end{equation}

In order to understand relative modular flow and co-cycles we have to apply the Connes $2\times 2$ or $3\times 3$ trick which we present here in a vector language. Consider the Hilbert space:
\begin{equation}
\mathcal{H}_{\rm tot} = \mathcal{H}_L \otimes  \mathcal{H}_R \otimes \mathcal{H}_{QFT}
\end{equation}
where $\mathcal{H}_{L,R}$ are both simple $n$-dimensional qunit Hilbert spaces with basis $\left|i \right>; i = 0,1, \ldots, n-1$. In this Hilbert space we consider the state:
\begin{equation}
\left| \Psi \right> = \sum_{i=1}^{n} \frac{1}{\sqrt{n}} \left| i_L i_R \right> \otimes \left| \phi_i \right>
\end{equation}  
where $\left| \phi_i \right>$ are vector states in the QFT Hilbert space that need not be cyclic and separating. We will actually consider the state as living in the subspace projected by the support projectors of the states $\phi_i$:
\begin{equation}
\widetilde{\mathcal{H}} = \left[ \left| i_L j_R \right> \otimes \pi'(\phi_i) \pi(\phi_j) \mathcal{H}_{QFT}: i,j = 0,1, \ldots n-1 \right] \subset \mathcal{H}_{\rm tot}
\end{equation}
where the square brackets means the linear span. We now consider the algebra of operators acting on this new Hilbert space:
\begin{equation}
\label{newalg}
\gamma \in \mathbb{A}\,: \qquad \gamma = \sum_{ij} 1_L \otimes \left( \left| i \right> \left< j \right| \right)_R \otimes c_{ij} \qquad c_{ij} \in \pi(\phi_i) \mathcal{A}\, \pi(\phi_j)  
\end{equation}
The commutant is:
\begin{equation}
\gamma' \in \mathbb{A}'\,: \qquad \gamma' = \sum_{ij} \left( \left| i \right> \left< j \right| \right)_L \otimes 1_R \otimes c'_{ij} \qquad c_{ij}' \in \pi'(\phi_i) \mathcal{A}'\, \pi'(\phi_j)  
\end{equation}
and $\Phi$ is cyclic and separating for these algebras after we project to the subspace $\tilde{\mathcal{H}}$. The generalized Tomita operator is:
\begin{equation}
\mathbb{S} \gamma \left| \Psi \right> = \gamma^\dagger \left| \Psi \right>
\end{equation}
from which one finds:
\begin{equation}
\mathbb{S} = \sum_{ij} \left| j_L i_R \right> \left< i_L j_R \right| \otimes \tilde{S}_{i|j}
\,, \qquad  \tilde{S}_{i|j} c_{ji} \left| \phi_i \right> = c_{ji}^\dagger \left| \phi_j \right>
\end{equation}
Note that this later Tomita operator acts between Hilbert spaces:
\begin{equation}
\tilde{S}_{i|j}\, : \pi'(\phi_i) \pi(\phi_j) \mathcal{H}_{QFT} \rightarrow \pi'(\phi_j) \pi(\phi_i) \mathcal{H}_{QFT} 
\end{equation}
We can relate this to our original definition of the Tomita operators by passing back to the original unprojected Hilbert spaces setting $c_{ji} = \pi(\phi_j) \alpha \pi(\phi_i)$: 
\begin{equation}
\tilde{S}_{i|j} \pi(\phi_j) \alpha \left| \phi_i \right> = \pi(\phi_i) \alpha^\dagger \left| \phi_j \right>
\end{equation}
We can lose the $\pi(\phi_j)$ on the left hand side since the orthogonal part is killed on the right hand side anyway. Or in other words we can extend the definition $\tilde{S}_{i|j}$ to the larger Hilbert space consistent with the right hand side by dropping this projector and also demanding:
\begin{align}
&S_{i|j}  \alpha \left| \phi_i \right> = \pi(\phi_i) \alpha^\dagger \left| \phi_j \right> \\
&S_{i|j} (1- \pi'(\phi_i) ) \mathcal{H}_{QFT} = 0 
\end{align}
which is the same definition we gave for $i=\psi$ and $j=\phi$ in \eqref{defpp}.

The modular operator for the $n \times n$ Hilbert space is:
\begin{equation}
\mathbf{\Delta} = \mathbb{S}^\dagger \mathbb{S}
= \sum_{ij} \left| i_L j_R \right> \left< i_L j_R \right| \otimes \tilde{\Delta}_{i|j}\,, \qquad \tilde{\Delta}_{i|j} = \tilde{S}_{i|j}^\dagger \tilde{S}_{i|j}
\end{equation}
and the modular conjugation operator is:
\begin{equation}
\mathbb{J}  = \sum_{ij} \left| j_L i_R \right> \left< i_L j_R \right| \otimes \tilde{J}_{i|j}
\qquad \tilde{S}_{i|j} = \tilde{J}_{i|j} \tilde{\Delta}_{i|j}^{1/2}
\end{equation}
where $\mathbb{J}^2 = 1,\, \mathbb{ J} \mathbf{\Delta}  \mathbb{J} = \mathbf{\Delta}^{-1}$.
The projected modular operators act between the following Hilbert spaces:
\begin{align}
\tilde{\Delta}_{i|j}\, : \pi'(\phi_i) \pi(\phi_j) \mathcal{H}_{QFT}& \rightarrow \pi'(\phi_i) \pi(\phi_j) \mathcal{H}_{QFT} \\
\tilde{J}_{i|j}\, : \pi'(\phi_i) \pi(\phi_j) \mathcal{H}_{QFT}& \rightarrow \pi'(\phi_j) \pi(\phi_i) \mathcal{H}_{QFT}  
\end{align}
and can be extended to $\mathcal{H}_{QFT}$ as we did with the Tomita operators.

We now apply the results of Tomita-Takesaki theory to these new modular operators.
That is we know that $\mathbb{A} = \mathbf{\Delta}^{is} \mathbb{A} \mathbf{\Delta}^{-is}$
and $\mathbb{A}' = \mathbb{J} \mathbb{A} \mathbb{J}$. Computing:
\begin{align}
\mathbf{\Delta}^{is} \gamma \mathbf{\Delta}^{-is}
&= \sum_k \sum_{ij}  ( \left| k \right> \left< k \right| )_L  \otimes ( \left| i \right> \left< j \right| )_R
\otimes \tilde{\Delta}_{k|i}^{is} c_{ij} \tilde{\Delta}_{k|j}^{-is} \\
\mathbb{J} \gamma \mathbb{J} &= \sum_k \sum_{ij}  ( \left| i \right> \left<j \right|)_L \otimes (\left| k \right> \left< k \right|)_R \otimes \tilde{J}_{k | i } c_{ij} \tilde{J}_{j | k} 
\end{align}

This is only consistent with the form of the algebra given in \eqref{newalg}  if we have:
\begin{equation}
\tilde{\Delta}_{k|i}^{is} c_{ij} \tilde{\Delta}_{k|j}^{-is} \, \in \pi(\phi_i) \mathcal{A} \pi(\phi_j)\,, \qquad 
c_{ij} \in \pi(\phi_i) \mathcal{A} \pi(\phi_j) 
\end{equation}
and the flowed operator is the same operator for all $k$. 
Extending these statements to the full Hilbert space by setting the flowed and conjugated operators to zero away from the support we find that:
\begin{equation}
\Delta_{k|i}^{is} \alpha \Delta_{k|j}^{-is} \, \in \mathcal{A} \pi'(\phi_k)\,, \qquad 
\alpha \in  \mathcal{A}
\end{equation}
where we are forced to add $\pi'(\phi_k)$ so that it vanishes away from this support. For example this is consistent with $s \rightarrow 0$ where we find $\pi(\phi_i) \alpha \pi(\phi_j) \pi'(\phi_k)$. Note that we can pick $\alpha$ in $\mathcal{A}$ rather than the projected algebra since the modular operators above anyway apply this projection. Note the flowed operator is now marginally not independent of $k$ due to $\pi'(\phi_k)$. 

If we set $i=j$ this defines the standard modular automorphism group but now for a non-cyclic and separating vector:
\begin{equation}
\Delta_{i}^{is} \alpha \Delta_i^{-is} = \sigma_s^{\phi_i}(\alpha) \pi'(\phi_i) \,,
\qquad \sigma_s^{\phi_i}(\alpha) = \Delta_{\Omega|i}^{is} \alpha \Delta_{\Omega|i}^{-is} 
\end{equation}
where in the later equation we have used a cyclic and separating vector $\Omega$ to define this flow. 

For $\alpha = 1$ we can define the operator in $\mathcal{A}$ that is independent of $k$ as the co-cycle:
\begin{equation}
\label{def:cocycle}
\Delta_{k|i}^{is} \Delta_{k|j}^{-is} \equiv (D \phi_i : D\phi_j )_s \pi'(\phi_k) \,, \qquad (D \phi_i : D\phi_j )_s \in \mathcal{A} 
\end{equation}
which can be extracted by picking $\phi_k$ to be a cyclic and separating vector.
The co-cycle satisfies:
\begin{align}
(D \phi_i : D\phi_j)^\dagger_s &= (D \phi_j : D\phi_i)_s  \\
(D \phi_i : D\phi_j)^\dagger_s  (D \phi_i : D\phi_j)_s &= \sigma_s^{\phi_i}( \pi(\phi_j))  \\
 (D \phi_i : D\phi_j)_s (D \phi_i : D\phi_j)^\dagger_s  &= \sigma_s^{\phi_j}(\pi(\phi_i))
\end{align}
where the right hand side of the later two equations are projection operators if $[ \pi(\phi_i), \pi(\phi_j)] =0$ which means that for states with commuting support projectors the co-cycle is a partial isometry.


There is the following relation on triples of the co-cycle:
\begin{equation}
(D \phi_1 : D\phi_2)_s (D\phi_2:D\phi_3)_s = (D\phi_1:D\phi_3)_s 
\qquad  {\rm if} \,\,\,\, \begin{matrix} \pi(\phi_1)\pi(\phi_2) = \pi(\phi_1) \,\, {\rm or} \\ \pi(\phi_2)\pi(\phi_3) = \pi(\phi_3) \end{matrix}
\end{equation}
where the later conditions on the projectors can be guaranteed by demanding:
\begin{equation}
\pi(\phi_1)\mathcal{H} \subset \pi(\phi_2)\mathcal{H} \,\, {\rm or} \,\, \pi(\phi_3)\mathcal{H} \subset \pi(\phi_2)\mathcal{H}
\end{equation}

Similary for the modular conjugation operators we have:
\begin{equation}
\tilde{J}_{k|i} c_{ij} \tilde{J}_{j|k} \, \in \pi'(\phi_i) \mathcal{A}' \pi'(\phi_j)\,, \qquad 
c_{ij} \in \pi(\phi_i) \mathcal{A} \pi(\phi_j) 
\end{equation}
which extends to the $\mathcal{H}_{QFT}$ in the usual way with:
\begin{equation}
J_{k|i} \alpha J_{j|k} \in \mathcal{A}' \pi(\phi_k)\,, \qquad \alpha \in \mathcal{A}
\end{equation}
and where apart from the projector $\pi(\phi_k)$ this is the same operator independent of $k$.
We define the non relative modular conjugation action as:
\begin{equation}
J_i \alpha J_i = j^{\phi_i}(\alpha) \pi(\phi_i) \,,\qquad   j^{\phi_i}(\alpha) = J_{\Omega|i} \alpha J_{i|\Omega}
\end{equation}
The equivalent of the co-cycles are the following linear operators:
\begin{equation}
\label{defconj}
J_{k|i}  J_{j|k}  = \Theta_{i|j}' \pi(\phi_k)\,, \qquad \Theta_{i|j}' \in \mathcal{A}'
\end{equation}
which satisfies:
\begin{align}
(\Theta'_{i|j})^\dagger = (\Theta'_{j|i}) \,, \quad
(\Theta'_{i|j})^\dagger (\Theta'_{i|j}) = j^{\phi_j}(\pi(\phi_i))  \,, \quad
 (\Theta'_{i|j})  (\Theta'_{i|j})^\dagger =  j^{\phi_i}(\pi(\phi_j))
\end{align}
where this is then a partial isometry if $[\pi(\phi_i),\pi(\phi_j)] =0$. 

For example if we specify that $\phi_i = \Omega$, cyclic and separating, and $\phi_j = \psi$ we find that:
\begin{equation}
J_\Omega J_{\psi|\Omega} = \Theta'_{\Omega|\psi}\,, \qquad(\Theta'_{\Omega|\psi})^\dagger (\Theta'_{\Omega|\psi}) = \pi'(\psi)\,,  \qquad (\Theta'_{\Omega|\psi})  (\Theta'_{\Omega|\psi})^\dagger = J_{\Omega} \pi(\psi) J_\Omega
\end{equation}
where the support of this operator is: ${\rm supp}(\Theta_{\Omega|\psi}') = \pi'(\psi) \mathcal{H}$ and 
${\rm supp}(\Theta_{\Omega|\psi}')^\dagger =J_\Omega \pi(\psi) J_{\Omega} \mathcal{H}$. We also have the useful relations:
\begin{equation}
J_{\psi|\Omega} = J_\Omega \Theta'_{\Omega|\psi} \qquad J_{\Omega|\psi} = ( \Theta'_{\Omega|\psi})^\dagger J_\Omega \qquad J_\psi = (\Theta'_{\Omega|\psi})^\dagger J_\Omega \Theta'_{\Omega|\psi}
\end{equation}
There is also a triple relation:
\begin{equation}
\Theta'_{1|2} \Theta'_{2|3} = \Theta'_{1|3}
\qquad  {\rm if} \,\,\,\, \begin{matrix} \pi(\phi_1)\pi(\phi_2) = \pi(\phi_1) \,\, {\rm or} \\ \pi(\phi_2)\pi(\phi_3) = \pi(\phi_3) \end{matrix}
\end{equation}
where again this later condition can be achieved only under the conditions specified for the support projectors.

\acknowledgments

We especially thank Raphael Bousso, Ven Chandrasekaran, Netta Engelhardt, Ben Freivogel, Marius Junge,  Nima Lashkari, Juan Maldacena, Arvin Shahbazi-Moghaddam for discussions related to this work.  This work was supported by the DOE: award number DE-SC0019517.

\bibliography{entanglement}



\end{document}